\documentclass[11pt]{article}
\usepackage[hmargin=1in,vmargin=1in]{geometry}
\usepackage{hchang}

\usepackage{comment}
\usepackage{xspace}
\usepackage{multirow}
\usepackage[hyperref]{ntheorem}
\usepackage{cleveref}
\usepackage{thm-restate}

\nolinenumbers

\theoremseparator{.}
\theorembodyfont{\slshape}
\newtheorem{theorem}{Theorem}[section]
\newtheorem{lemma}[theorem]{Lemma}

\newtheorem{observation}[theorem]{Observation}
\newtheorem{problem}[theorem]{Problem}

\theorembodyfont{\upshape}
\newtheorem{remark}[theorem]{Remark}
\newtheorem{definition}[theorem]{Definition}

\def\eps{\varepsilon}

\def\bdry{\partial\!}
\newcommand{\bd}{\partial\!}

\def\reals{\mathbb{R}}

\newcommand{\dist}{\mathrm{dist}}

\newcommand{\Rep}{\mbox{\sf Rep}}

\newcommand{\vcdim}{\mbox{\sf VC-Dim}}

\newcommand{\OO}{\tilde{O}}

\newcommand{\cube}[1]{\llbracket #1\rrbracket}

\newcommand{\Diameter}{\textsc{Diameter}\xspace}

\newcolumntype{L}{>{$}l<{$}}

\renewcommand{\arraystretch}{1.3}

\newcounter{ctr}
\loop
 \stepcounter{ctr}
 \expandafter\edef\csname c\Alph{ctr}\endcsname{\noexpand\mathcal{\Alph{ctr}}}
\ifnum\thectr<26
\repeat

\renewcommand{\note}[1]{}

\title{Charting the Diameter Computation Landscape on Intersection Graphs in the Plane}
\date{}
\author{Timothy M. Chan%
    \thanks{Siebel School of Computing and Data Science,
    University of Illinois at Urbana-Champaign. Email: tmc@illinois.edu.  Supported by NSF grant CCF-2224271.}
    \and
    Hsien-Chih Chang%
    \thanks{Department of Computer Science, Dartmouth College. Email: hsien-chih.chang@dartmouth.edu.}
    \and
    Jie Gao%
    \thanks{Department of Computer Science, Rutgers University. Email: jg1555@rutgers.edu. Gao would like to acknowledge NSF support through CNS-2515159, IIS-2229876, DMS-2220271,  DMS-2311064, CCF-2208663, CCF-2118953.}
    \and
    S\'andor Kisfaludi-Bak%
    \thanks{Department of Computer Science, Aalto University, Finland. Email: 
    sandor.kisfaludi-bak@aalto.fi. Supported by the Research Council of
    Finland, Grant 363444.}
    \and
    Hung Le%
    \thanks{Manning CICS, UMass Amherst. Email: hungle@cs.umass.edu. Supported by NSF grants CCF-2517033 and CCF-2121952, NSF CAREER Award CCF-2237288, and a Google Faculty Research Award.}  
    \and 
    Da Wei Zheng%
    \thanks{Institute of Science and Technology Austria, Klosterneuburg, Austria. Work in this paper was started while at the Siebel School of Computing and Data Science, University of Illinois at Urbana-Champaign.
    This project has received funding from the Austrian Science Fund (FWF) grant  \href{https://www.doi.org/10.55776/I5982}{DOI 10.55776/I5982}. For open access purposes, the author has applied a CC BY public copyright license to any author-accepted manuscript version arising from this submission.}
}

\begin{document}
\maketitle

\begin{abstract}
Computing the diameter of the intersection graphs of objects is a basic problem in computational geometry. Previous works showed that the complexity of computing the diameter mainly depends on the \emph{object types}: for unit disks and squares in 2D, the problem is solvable in truly subquadratic time~\cite{ChanCGKLZ25}, while for other objects, including unit segments and equilateral triangles in 2D or unit balls and axis-parallel unit cubes in 3D, there is no truly subquadratic time algorithm under the Orthogonal Vector (OV) hypothesis~\cite{Bringmann2022-me}.   

We undertake a comprehensive study of computing the diameter of geometric intersection graphs for various types of objects. We discover many new irregularities, showing that the landscape is extremely nuanced: the source of hardness is a combination of \emph{the object type, the true diameter value, and how the objects intersect with each other}. 
Our highlighted results for the 2D case include:
\begin{enumerate}
    \item The diameter of non-degenerate, axis-aligned line segments can be computed in truly subquadratic time. Previous hardness result~\cite{Bringmann2022-me} for line segments applies only to degenerate instances. 
    On the other hand, for the degenerate case, we show that a truly subquadratic time algorithm exists when the true diameter is constant. 
    \item An almost-linear-time algorithm for unit-square graphs of \emph{constant diameter}. Previous algorithms~\cite{duraj2023better,ChanCGKLZ25} rely on succinct representation assuming bounded VC-dimension; for such a strategy $\Omega(n^{7/4})$ time is an inherent barrier. 
    \item An $\OO(n^{4/3})$-time algorithm to decide if the diameter of a unit-disk graph is at most 2. This improves upon the recent algorithm with running time  $\OO(n^{2-1/9})$~\cite{ChanCGKLZ25}.
    \item Deciding if the diameter of intersection graphs of fat triangles or line segments is at most 2 is truly subquadratic-hard under fine-grained complexity assumptions.  Previous lower bounds~\cite{Bringmann2022-me} only hold when deciding if diameter is at most 3.
\end{enumerate}
Our findings are presented in a pair of papers. 
This paper focuses solely on the 2D case, while the companion paper is devoted to higher-dimensional cases.  
\end{abstract}
\thispagestyle{empty}

\newpage
\tableofcontents
\thispagestyle{empty}

\newpage
\setcounter{page}{1}
\section{Introduction}

Computing the diameter of sparse graphs is \EMPH{truly subquadratic hard}: under the Strong Exponential Time Hypothesis (SETH), there is no $O(n^{2-\varepsilon})$ algorithm for 
distinguishing diameter 2 vs.\ 3 of graphs with $n$ vertices and $\OO(n)$\footnote{$\OO(\cdot)$ hides polylogarithmic  factors.} 
edges~\cite{Roditty2013-zr}. This conditional lower bound implies that the trivial algorithm for computing the diameter by doing BFS from every vertex is essentially optimal. Since then, the major research focus has been on substantially beating the BFS-based algorithm for structural families of graphs, especially planar/minor-free graphs and geometric intersection graphs.  For planar/minor-free graphs, truly subquadratic algorithms were known~\cite{Cabello2018-gz,Gawrychowski2018-zy, Ducoffe2019-tp,le2023vc,ChanCGKLZ25}.  

For geometric intersection graphs, the complexity of diameter computation remains poorly understood due to the sheer diversity of geometric objects underlying the intersection graphs, for example, line segments, strings, (unit) disks, triangles, rectangles, balls, hypercubes, (axis-aligned) boxes, to name just a few. Another important subtlety is that intersection graphs can have $\Omega(n^2)$ edges; therefore, a truly subquadratic time algorithm must use an implicit graph representation. (In a geometric intersection graph, each vertex corresponds to a geometric object, and there is an edge between every two intersecting objects.)  

The pioneering work by Bringmann \etal~\cite{Bringmann2022-me} studied computing the diameter of intersection graphs of several types of objects: axis-parallel unit segments, equilateral triangles, axis-parallel hypercubes, and balls. They showed that it is truly subquadratic hard to solve: (i)  \Diameter-$2$ for axis-parallel hypercubes in $\reals^{12}$ under the Hyperclique Hypothesis; (ii) \Diameter-$3$ for (not necessarily axis-aligned) unit segments and equilateral triangles in $\reals^2$ under Orthogonal Vector (OV) Hypothesis;  and (iii)  \Diameter-$\Omega(\log n)$ for unit balls and axis-parallel unit cubes in $\reals^3$, and axis-parallel line segments in $\reals^2$, all under OV Hypothesis. Here  \Diameter-$\Delta$ asks to decide if a given input graph has diameter at most $\Delta$  or at least $\Delta+1$. 
Notably, basic objects such as axis-aligned squares, disks, and their unit versions were absent from this list of lower bounds. However, the same paper showed that \Diameter-2 for axis-aligned unit squares can be solved in $O(n\log n)$ time, which hinted at the possibility of truly subquadratic time for other diameter values and objects such as axis-aligned squares and (unit) disks.  Later,  Duraj, Konieczny, and Pot{\k e}pa~\cite{duraj2023better} gave an algorithm with running time $\OO(\Delta n^{7/4})$ for  \Diameter-$\Delta$  of axis-aligned unit squares: it is truly subquadratic when the true diameter is $O(n^{1/4-\varepsilon})$. Chang, Gao, and Le~\cite{CGL24} (arXiv version) designed a truly subquadratic $+1$-approximation algorithm for the diameter of axis-aligned unit squares, unit disks, and more generally, similar-sized fat pseudo-disks.  Recently, Chan \etal~\cite{ChanCGKLZ25} gave algorithms for computing the exact diameter of unit disks and arbitrary axis-aligned squares; the running time is truly subquadratic regardless of the diameter value.  

Altogether, the previous work reveals highly contrasting pieces of the complexity landscape of geometric intersection graphs which depend on \EMPH{object types}: either computing a very small diameter (e.g., $\Delta=2$ or $3$) is truly subquadratic hard for some objects, or the diameter can be computed in truly subquadratic time regardless of the value of the diameter for others. 

We undertake a comprehensive study of computing the diameter of geometric intersection graphs for various types of objects. Our results are presented in two papers. The present paper 
focuses solely on the 2D case. 
All the previous truly subquadratic algorithms for diameter are for objects on the plane, and hence this basic case is of particular interest. We present numerous results, both positive and negative, on computing the diameter of the geometric intersection graphs of a wide variety of 2D objects. A high-level takeaway from our collection of results is that the complexity of computing the diameter depends not only on object type, but also on  \EMPH{diameter range and how the objects intersect}.

In the concurrent paper~\cite{Anon3D25}, we study objects in dimensions at least $3$ and present various upper and lower bounds.  Both papers together show that the landscape of computing the diameter on geometric intersection graphs is much more nuanced than previously known.

\begin{table}[!t]
\footnotesize\sffamily
\renewcommand{\arraystretch}{1.3}
\centering
\begin{NiceTabular}{cccc}
\multicolumn{2}{c}{\textbf{Graph class}}  & \textbf{Lower bound} & \textbf{Upper bound}  \\
\hline
\hline
\multirow{1}{*}{Unit disks}
  &  &  & \shortstack{$O^*(n^{2-1/18})$ for general $\Delta$ \cite{ChanCGKLZ25} \\  
  $O^*(n^{2-1/9})$ for $\Delta=O(1)$ \cite{ChanCGKLZ25} }  \\
  &  &  & \cellcolor{Highlight} $\OO(n^{4/3})$ for $\Delta=2$ (Thm.~\ref{thm:udgdiam2}) \\
\hline
\hline
\multirow{1}{*}{\shortstack{Axis-aligned\\squares}} 
  & & & \shortstack{ $O(n\log n)$ for $\Delta=2$ \cite{Bringmann2022-me}\\
       $O^*(n^{7/4})$ for $\Delta=O^*(1)$ \cite{duraj2023better}\\
       $\OO(n^{2-1/8})$ for general $\Delta$ \cite{ChanCGKLZ25} } \\
  & unit squares & & \cellcolor{Highlight} $O^*(n)$ for constant $\Delta$ (Thm.~\ref{thm:unitsq}) \\
    \cline{2-4}
  & general & & $O^*(n^{2-1/12})$ for general $\Delta$ \cite{ChanCGKLZ25} \\
\hline
\hline
\multirow{2}{*}{\shortstack{Triangles\\ in 2D}} 
  & \shortstack{cong.\ \& equi.} &  $\Omega^*(n^2)$ for $\Delta=3$ {\tiny (OV)} \cite{Bringmann2022-me} & 
  \\
  \cline{2-4}
  & fat &  \cellcolor{Highlight} $\Omega^*(n^2)$ for $\Delta=2$ {\tiny (3H6)} (Thm.~\ref{thm:lb_fattriangle}) & 
  \\
\hline
\hline
\multirow{7}{*}{\shortstack{Line\\ segments}}
  & unit-length &  $\Omega^*(n^2)$ for $\Delta=3$ {\tiny (OV)} \cite{Bringmann2022-me} & \\
  \cline{2-4}
  & general & \cellcolor{Highlight} $\Omega^*(n^2)$ for $\Delta=2$ {\tiny (combK4)} (Thm.~\ref{thm:segment_lower}) & 
  \\
  \cline{2-4}
  & \shortstack{axis-aligned \\ w/ degeneracy} & $\Omega^*(n^2)$ for general $\Delta$ {\tiny (OV)} \cite{Bringmann2022-me} & 
    \cellcolor{Highlight} \shortstack{$n^{2-1/O(2^{\Delta})}$ for any constant $\Delta$ \\(Thm.~\ref{thm:h-slope-line-seg}) }
  \\
  \cline{2-4}
  & \shortstack{axis-aligned \\ non-degenerate} & & \cellcolor{Highlight} \shortstack{$\OO(n^{2-1/32})$ for general $\Delta$ \\ (Thm.~\ref{thm:gen-pos-line-seg}) }
  \\
  \cline{2-4}
  & \shortstack{$h$-slopes 
  \\ w/wo degen.} & 
  \cellcolor{Highlight} \shortstack{$\Omega^*(n^2)$ for $h=3$ \&\\ general $\Delta$ {\tiny (OV)} (Thm.~\ref{thm:segment_3lope_lower}) } & 
  \cellcolor{Highlight} \shortstack{$n^{2-1/O(h^{\Delta+1})}$ for constant $h$ \\ \& constant $\Delta$ (Thm.~\ref{thm:h-slope-line-seg}) }
  \\
\hline
\hline
\multirow{2}{*}{Strings} 
  & $O(1)$ bends  & \cellcolor{Highlight} $\Omega^*(n^2)$ for $\Delta=2$ {\tiny (3H6)} (Thm.~\ref{thm:lb_thintriangle}) &   
  \\
  \cline{2-4}
  & $O(\log n)$ bends & \cellcolor{Highlight} $\Omega^*(n^2)$ for $\Delta=1$  {\tiny (OV)} (Thm.~\ref{thm:string_lower}) & 
  \\
\hline
\hline
\end{NiceTabular}
\caption{
 Previous and new time bounds for deciding whether an intersection graph of geometric objects has diameter at most $\Delta$.  New results are highlighted in yellow.  Conditional lower bounds marked ``(OV)'', ``(3H6)'', and ``(combK4)'' assume the Orthogonal Vectors hypothesis, the 3-uniform 6-hyperclique hypothesis, and the combinatorial 4-clique hypothesis respectively, where the latter is for combinatorial algorithms (all upper bounds are obtained from combinatorial algorithms). See \Cref{appendix:hypothesis} for definitions. 
}
\label{table:results-2D}
\end{table}

\subsection{Our Contributions}\label{subec:contribution}

We focus on the intersection graphs of objects in 2D; see \Cref{table:results-2D} for a complete catalog. Our first set of results concerns line segments. Previous results about line segments~\cite{Bringmann2022-me} are overwhelmingly negative: for unit line segments, \Diameter-3 is (truly subquadratic) hard, and for axis-aligned unit segments, \Diameter-$\Delta$ is also hard when $\Delta = \Omega(\log n)$. For axis-aligned unit segments, the lower bound construction requires that there is an overlap between horizontal segments; in other words, the instance is \EMPH{degenerate}. Both the requirement $\Delta = \Omega(\log n)$ and the degeneracy seem like artifacts of the proof technique: better lower bounds seem achievable with a more refined technique. Surprisingly perhaps, we show that these ``artifacts'' are necessary by giving truly subquadratic-time algorithms otherwise. First, we provide a truly subquadratic algorithm for any diameter value on non-degenerate instances. Thus, degeneracy is a source of hardness for diameter computation.  Second, we provide a truly subquadratic algorithm for \Diameter-$\Delta$ for any \emph{constant} $\Delta$, even when the instances are degenerate. It implies that the diameter range is another source of hardness. Our second result holds even for a more general case of \EMPH{$h$-slope line segments}, where the slope of each segment is one of  $h$ distinct values. (Axis-parallel line segments are $2$-slope.) Finally, on the lower bound side,  we strengthen the (truly subquadratic) lower bound for line segments by Bringmann \etal~\cite{Bringmann2022-me} from \Diameter-3   under the OV hypothesis to \Diameter-2   under the \EMPH{combinatorial 4-clique (combK4)} hypothesis (\Cref{def:combK4}).
We also give a separation between 2-slope and $3$-slope non-degenerate segments: for 2-slope, we can get truly subquadratic time for any diameter value, whereas for 3-slope, the problem becomes hard for diameter $\Omega(\log n)$.

\begin{theorem}\label{thm:segments-main} Let $G$ be the intersection graph of $n$ line segments.
\begin{enumerate}
    \item \ul{Truly subquadratic algorithms:} 
    \begin{itemize}
        \item   \Diameter can be solved in $\OO(n^{2-1/32})$ time for non-degen.\ axis-aligned line~segments.
        \item   \Diameter-$\Delta$ can be solved in  $n^{2-1/O(h^{\Delta+1})}$ time for $h$-slope line segments. Thus, the running time is truly subquadratic for any constant $h$ and $\Delta$. 
    \end{itemize}
    \item  \ul{Lower bounds:}    there is no $O(n^{2-\eps})$ time algorithm for any constant $\eps \in (0,1)$ for:
    \begin{itemize}
        \item \Diameter-2 for general line segments  under the CombK4 hypothesis.
        \item  \Diameter-$\Omega(\log n)$ for 3-slope non-degenerate line segments under the OV hypothesis.
    \end{itemize}
   \end{enumerate}
\end{theorem}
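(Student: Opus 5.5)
This theorem bundles four separate results, and I will prove each one on its own. The two upper bounds both rest on a common principle: the shortest-path structure in intersection graphs of axis-aligned or $h$-slope segments is rigid. Parallel segments on the same supporting line form interval-overlap structures. Segments of different slopes meet in at most one point.

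\textbf{Non-degenerate axis-aligned segments.} Non-degeneracy means no two parallel segments overlap, so every edge joins a horizontal segment to a vertical one. The graph is therefore bipartite and is the intersection graph of a grid-like arrangement.

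1. I will build a planar-like separator structure. I will apply a cutting of the arrangement into $r^2$ cells, each crossed by $O(n/r)$ segments.
2. For each cell I will define the boundary vertices, namely the segments crossing the cell boundary.
3. Following the approach of \cite{ChanCGKLZ25}, I will compute distances from boundary segments with an Abboud--Bringmann style ``distance profile'' argument. Because of non-degeneracy, the distances from a fixed segment to the segments crossing a boundary side are $1$-Lipschitz along the order of the crossings. The profile vectors therefore have bounded VC-dimension and can be encoded succinctly, using the pattern-counting technique of \cite{le2023vc,ChanCGKLZ25}.
4. Balancing BFS inside the cells against the profile comparisons between cells gives an exponent of the form $2-1/32$.

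This step is the main obstacle. I must show that the monotone/Lipschitz structure of distances along a boundary survives even though the segments are long and cross many cells. I would try to handle long segments by treating them as ``highways'' that are charged separately.

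\textbf{$h$-slope segments, constant $\Delta$.} I will proceed by induction on $\Delta$ and decide, for each segment $s$, whether its ball of radius $\Delta$ covers everything.

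1. A path of length $\Delta$ visits a sequence of slopes, and there are at most $h^{\Delta+1}$ such sequences.
2. For a fixed slope sequence, the set reachable in $k$ steps is a union of segments, and the next layer is obtained by intersecting with segments of the next slope. This is a range-searching problem.
3. I will use a high/low-degree split. Heavy vertices, those with more than $n^{1-\delta}$ neighbors, are few, so BFS from each of them is affordable. Light vertices have small neighborhoods that can be expanded explicitly.
4. Recursing through the $\Delta$ levels and the $h$ slope choices loses a factor of $h$ in the exponent $\delta$ per level. This yields running time $n^{2-1/O(h^{\Delta+1})}$.

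\textbf{Lower bounds.} For \Diameter-2 with general segments I will reduce from combinatorial $4$-clique on a graph with parts $A,B,C,D$.

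1. Each pair in $A\times B$ becomes a segment in a ``left fan'', and each pair in $C\times D$ becomes a segment in a ``right fan''.
2. Middle segments encode the edges among the remaining pairs. They are placed so that a left segment and a right segment have a common neighbor exactly when the corresponding $4$-tuple fails to be a clique.
3. Segments with generic slopes realize the arbitrary bipartite adjacency patterns this requires.
4. Auxiliary hub segments make all other pairs of vertices lie at distance at most $2$.

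For 3-slope non-degenerate segments with $\Delta=\Omega(\log n)$, I will adapt the OV reduction of \cite{Bringmann2022-me}. The overlaps between horizontal segments that their construction relies on are replaced by short diagonal connector segments of the third slope, which preserves distances up to a constant factor. The vector gadgets of length $\Theta(\log n)$ are routed along a grid, and the diameter increases exactly when two vectors are orthogonal.
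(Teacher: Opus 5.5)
Your proposal has genuine gaps in all four parts, and the two algorithmic parts miss the idea the paper relies on.

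\textbf{Upper bounds.} For non-degenerate axis-aligned segments, your claim that distances from a fixed source are $1$-Lipschitz along the segments crossing a side of a cutting cell is false. Consecutive crossing segments are parallel, hence non-adjacent, and their distances from a source can differ arbitrarily. More fundamentally, a grid of $n/2$ long horizontal and $n/2$ long vertical segments realizes $K_{n/2,n/2}$, so there is no separator-type structure to exploit. The exponent $1/32$ is asserted rather than derived. The same example breaks your $h$-slope argument: every vertex has degree $n/2$, so heavy vertices are not few, and BFS from all of them is quadratic. A small $1$-neighborhood also says nothing about the size of a $\Delta$-neighborhood.

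The paper instead bounds the \emph{distance VC-dimension}. It plugs this bound into the framework of \cite{ChanCGKLZ25} (\Cref{thm:diameter}), with orthogonal-range-searching implementations of the BFS and RIS structures; that framework gives the exact diameter in $\OO(n^{2-1/(4d)})$ time, and $d=8$ yields exactly $n^{2-1/32}$. The bound $d\le 8$ (\Cref{thm:string-bipartite}) is a $K_5$-avoidance argument:
\begin{itemize}
\item Suppose five vertices on one side were shattered. Hanani--Tutte then gives crossing canonical paths $\sigma(a_i,v_{ij})$ and $\sigma(a_c,v_{cd})$.
\item By bipartiteness, both color sequences alternate starting from $A$. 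Hence two crossing (adjacent) strings have different indices, say $k_i<k_c$.
\item Splicing the prefix of one path onto the suffix of the other puts $a_i$ into $N^{r_{cd}}[v_{cd}]$, a contradiction.
\end{itemize}
For $h$ slopes, your enumeration of slope sequences is the right first step; it is the paper's decomposition into types. The missing content is threefold. Each type family $\{B_S(v)\}$ has VC-dimension at most $4$, by the same splicing argument using $S$-indices (\Cref{lm:VCbound-type}). Collinear overlaps of equal-slope segments must be handled by showing that a Hanani--Tutte crossing forces a predecessor or successor to meet the other segment (\Cref{thm:segments}). Finally, the algorithm uses stabbing-path interval representations, the ball-growing equation, and RIS structures that handle one slope at a time.

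\textbf{Lower bounds.} For the \Diameter-2 lower bound, your skeleton matches the paper's, which follows Bringmann et al.'s hypercube reduction, except for two slips. Only \emph{edges} of $G$ in $A\times B$ and $C\times D$ should become left/right segments; otherwise a non-edge $ab$ can produce a far pair without any $4$-clique. Middle segments must encode \emph{non}-edges of the crossing pairs.

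The real difficulty is the geometric realization, and your justification that generic slopes realize arbitrary bipartite adjacency patterns is false. By a Warren-type count, $n$ segments realize only $2^{O(n\log n)}$ red/blue intersection patterns, versus $2^{\Theta(n^2)}$ bipartite graphs. The paper realizes the specific structured pattern needed: $h_{XY}(x,y)$ meets exactly the left segments with $X$-index $x$ and the right segments with $Y$-index $y$. It does this in three steps:
\begin{itemize}
\item It groups the points $p_{AB}(a,b)$ by $a$ along a convex-chain side, with gaps $\simeq|s_{AB}|/k$ and within-group spread $O(|s_{AB}|/(\tau k))$.
\item It routes $h_{AD}(a,d)$ through the quadrilateral $Q_a$ while keeping its endpoint inside a small circular arc (\Cref{lem:cross-segment}).
\item It adds the diamond-plus-diagonal dummy segments, which give every other pair a common neighbor without joining left to right.
\end{itemize}

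For the $3$-slope bound, connector segments that preserve distances only ``up to a constant factor'' do not preserve the exact threshold $\Delta$ versus $\Delta+1$, because only the overlap edges get subdivided. The paper instead changes the slope of alternate groups in Bringmann et al.'s construction. This turns collinear overlaps into proper crossings, with infinitesimal shifts removing triple points, and leaves the intersection graph identical.
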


To obtain the algorithms in \Cref{thm:segments-main}, we apply the framework of Chan \etal~\cite{ChanCGKLZ25}: (1) bounding the VC-dimension of the set system $(V, \{N^{r}[v]\}_{v\in V, r\geq 0})$, where $N^{r}[v]$ is the $r$-neighborhood of $v$, and (2) designing a geometric data structure for the objects. For line segments, the geometric data structure follows from standard techniques. The main difficulty is to bound the VC-dimension. We do so by decomposing $(V, \{N^{r}[v]\}_{v\in V, r\geq 0})$ into so-called \EMPH{types} where each type induces a subset of set systems whose corresponding shortest paths share the same ``color pattern''. We use the color pattern on top of the standard crossing argument to bound the VC-dimension of each subset. Our technique naturally generalizes to string graphs, giving the following theorem.
See \Cref{sec:segs} for the details.

\begin{theorem}\label{thm:string}  Let $G$ be the intersection graph of $n$ strings on the plane. 
\begin{enumerate}
        \item If $G$ is bipartite, then the VC-dimension of $(V, \{N^{r}[v]\}_{v\in V, r\geq 0})$ is at most 8.
        \item  If $G$ has diameter $\Delta$ and chromatic number $\chi$, then the VC-dimension of $(V, \{N^{r}[v]\}_{v\in V, r\geq 0})$ is $O(\chi^{\Delta+1})$.
  \end{enumerate}    
\end{theorem}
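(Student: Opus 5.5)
Both parts will go through the same reduction: split the set system into a bounded number of sub-systems, which the paper calls \emph{types}, and bound the VC-dimension of each type by a planar crossing argument. If a set system is the union of $k$ sub-families each of VC-dimension at most $d$, then its VC-dimension is $O(d\log k)$. This follows from Sauer--Shelah: a shattered set of size $s$ requires $2^s \le k\cdot O(s^d)$. A sharper direct argument may give the linear bound claimed in part~2.

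\textbf{Defining the types.} Fix a proper coloring $c:V\to[\chi]$ of $G$. For a ball $N^r[v]$ and a vertex $u\in N^r[v]$, fix a canonical shortest $v$--$u$ path $P$. Its color pattern $(c(v)=c(p_0),c(p_1),\dots,c(p_{\dim}))$ is a word over $[\chi]$ of length at most $\Delta+1$. There are at most $\sum_{\ell\le\Delta+1}\chi^\ell=O(\chi^{\Delta+1})$ patterns.

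\textbf{The crossing argument within one pattern.} Suppose a set $S$ is shattered, and all witnessing paths realize one fixed pattern $\sigma$. I want to show $|S|=O(1)$ by the standard argument for planar or string graphs, as in the Chan \etal\ framework for planar graphs. Take a large shattered set and a $K_{3,3}$- or $K_5$-like configuration of balls $N^{r_i}[v_i]$ and points $s_j$. The witnessing paths $v_i\to s_j$ are curves formed by unions of strings. By planarity, two of these path-curves must cross, meaning a string of one path intersects a string of the other. At that crossing we perform an exchange: reroute to get a $v_i$--$s_{j'}$ walk. We then need to show the rerouted path is not longer, which contradicts the fact that $s_{j'}$ lies inside or outside the relevant ball.

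The color pattern is what makes the exchange clean. The crossing strings are adjacent, so they have different colors. Both paths follow the same pattern, so positions are synchronized by color. A crossing between position $a$ on one path and position $b$ on the other therefore lets us compare $a$ and $b$ up to $\pm1$. This gives the usual distance inequalities $d(v_i,s_{j'})\le a+1+(\text{rest})$, and the symmetric counterpart yields a contradiction with the shattering pattern.

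\textbf{Part 1, bipartite graphs.} Here $\chi=2$, and the pattern of any shortest path is forced to alternate, so effectively only the parity and starting color matter. This gives $O(1)$ types with no dependence on $\Delta$. A careful crossing lemma then gives the explicit constant~8. The parity is determined by $c(v)$ and $r$, so types correspond to the pairs $(c(v), r\bmod 2)$.

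\textbf{Main obstacle.} The hard step is the exchange argument when the crossing happens between strings rather than at vertices, since a string crossing is not a shared vertex. The rerouted walk then has length $a+b+1$ instead of $a+b$, and this off-by-one is exactly what colors must absorb. I would first try to show that equal patterns force $a$ and $b$ to differ in parity in the right way, so the $+1$ is compensated. Failing that, I would refine the types by the position index where crossings occur.
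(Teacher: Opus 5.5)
There is a genuine gap, and it sits exactly where part~2 needs its key idea. Your types are attached to (center, target) pairs, not to sets of the family. A single ball $N^r[v]$ contains targets reached along canonical paths with many different color patterns, so $\{N^r[v]\}$ cannot be split into $O(\chi^{\Delta+1})$ subfamilies inside which ``all witnessing paths realize one fixed pattern.''

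Your exchange does work when the two crossing paths $v_{ij}\to a_i$ and $v_{cd}\to a_c$ happen to share a pattern. Then the adjacent crossing strings sit at different positions, and one splice is no longer than the common length. But in the $K_5$ configuration these paths come from different balls and generally have different patterns and lengths. The exchange needs the crossing strings $x,y$ to lie at \emph{different} distances from their targets $a_i,a_c$, and with unrelated patterns nothing rules out a tie. Forcing equal patterns by Ramsey would cost far more than $O(\chi^{\Delta+1})$.

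Splitting each ball by exact pattern, $N_\sigma[v]=\{u:\text{the canonical $v$--$u$ path has pattern }\sigma\}$, does not rescue this. The spliced walk $v_{ij}\to x\to y\to a_c$ has a different, shorter pattern, and $a_c$ may simply be closer to $v_{ij}$, so $a_c\notin N_\sigma[v_{ij}]$ is not contradicted. Also, your $O(d\log k)$ bound is for a union of families. What actually arises is each ball being a union of one set from each type, and that VC-dimension in general grows with the number of types (think of unions of $k$ intervals). Had your partition existed it would give $O(\Delta\log\chi)$, better than the theorem, so your remark that a ``sharper'' argument is needed for a linear bound has the comparison backwards.

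The missing idea is the paper's downward-closed typed ball. For a template $S\in\mathcal{S}$, let $B_S(v)$ be the set of vertices reachable from $v$ by \emph{some} path whose color sequence is a (not necessarily contiguous) subsequence of $S$. Then $N^r[v]=\bigcup_{S\in\mathcal{S},|S|\le r+1}B_S(v)$, and the paper adds up per-template bounds via \Cref{lm:vc-union}. Inside one family $\{B_S(v)\}_{v}$ (\Cref{lm:VCbound-type}), the two crossing paths need not share a pattern; each only has to embed into $S$. Adjacent crossing strings have different colors, hence different $S$-indices. Take the prefix from the path whose crossing string has the smaller $S$-index and the suffix from the other. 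The resulting color sequence is still a subsequence of $S$, contradicting $a_c\notin B_S(v_{ij})$ (or $a_i\notin B_S(v_{cd})$). The contradiction is with membership in $B_S$, not with a radius.

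For part~1, the constant $8$ is asserted rather than derived, and your rationale is off: the parity of a $v$--$u$ path is fixed by $c(v)$ and $c(u)$, not by $r$. The paper uses no types there. If $9$ vertices are shattered, $5$ lie on one side $A$. Measuring positions from these shattered vertices, all colored $A$, makes color determine position parity, so adjacent crossing strings sit at different positions. With $x_i$ the string closer to its target, the splice $a_i\to x_i\to x_c\to v_{cd}$ has length at most $d(a_c,v_{cd})\le r_{cd}$.

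Your $(c(v),r\bmod 2)$ typing can in fact be completed. In a tie, the targets $a_i,a_c$ have different colors, so $d(v_{ij},a_i)\not\equiv d(v_{cd},a_c)\pmod 2$ while $r_{ij}\equiv r_{cd}$; hence one radius has slack. But combining four families of VC-dimension $4$ via Sauer--Shelah yields $11$, not $8$.
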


Our lower bound for \Diameter-2 for general line segments in \Cref{thm:segments-main} is inspired by the lower bound for \Diameter-2 for axis-parallel hypercubes in $\reals^{12}$ from the hyperclique hypothesis by Bringmann \etal~\cite{Bringmann2022-me}.  
It is somewhat surprising that their proof approach can be carried out
in dimension as low as 2 for line segments---a key challenge in such a reduction
is avoiding unintended crossings of objects, which is tougher to do in 2D;
see \Cref{sec:lb:seg:diam2}.

The VC-dimension underpins most recent truly subquadratic diameter algorithms. A major open question is to design an algorithm with nearly linear running time for the truly subquadratic easy cases. The VC-dimension technique does not seem suited for this fast running time. Only \Diameter-2 for axis-aligned unit squares~\cite{Bringmann2022-me} is known to admit $\OO(n)$ time. 
Here we give an almost linear  ($O^*(n)$) time algorithm\footnote{$O^*(\cdot)$ notation hides a subpolynomial factor $n^{o(1)}$.} for unit-square graphs of \emph{any constant diameter}.
Our algorithm is far more involved than the previous algorithm for diameter 2 
(which did not work even for diameter~3), and uses
a novel $n^{o(1)}$-way divide-and-conquer approach based on the coordinate values modulo~1. 
We identify a key special case 
that can be reduced to orthogonal range searching,
not in dimension~2 but in a dimension that grows as a function of the diameter!  Afterwards,
we recursively reduce to this key special case. 
See \Cref{sec:unitsq:diam2} for the details. 
We are not aware of too many low-dimensional geometric problems that are solved using orthogonal range searching in a much higher dimension 
as a function of some parameter (one example is Cabello and Knauer's algorithms for distance problems 
on bounded-treewidth graphs~\cite{CK09}, but our algorithm is quite different).

Another simple, yet highly non-trivial, problem is \Diameter-2 for unit disks; orthogonal range searching does not seem to help here. The best running time, based on VC-dimension, is $\OO(n^{2-1/18})$~\cite{ChanCGKLZ25}. Here, we improve the running time for \Diameter-2 for unit disks to $\OO(n^{4/3})$. Notably, we do not use VC-dimension.
Instead, we exploit the fact that 1-neighborhoods behave like \emph{pseudodisks}, and adapt known range searching techniques for pseudodisks.
Note that our pseudodisks have non-constant complexity and can only be represented implicitly (similar ideas were used before in Agarwal, Sharir, and Welzl's algorithm for the discrete 2-center problem~\cite{ASW98}). 
See \Cref{sec:diam2} for the details.
The bound $n^{4/3}$ is natural for many problems about unit disks (related to Hopcroft's problem); for example, the current best algorithm for counting the number of edges in a unit disk graph requires $O(n^{4/3})$ time~\cite{ChanZ24}.

\begin{restatable}{theorem}{diameterunitdis}\label{thm:diameter2-unitdis} Let $G$ be an intersection graph with $n$ vertices.
 \begin{itemize}
        \item  
        \Diameter-$\Delta$ can be solved in  $O^*(n)$ time when vertices of $G$ are axis-aligned unit squares. 
        \item   $\Diameter$-2 can be solved in $\OO(n^{4/3})$ time when vertices of $G$ are unit disks.
    \end{itemize}
\end{restatable}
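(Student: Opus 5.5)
The theorem has two independent parts, and I will prove each separately.

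\textbf{Unit squares, constant $\Delta$.} The plan is to reduce $\Diameter$-$\Delta$ to a batch of orthogonal range emptiness queries in a dimension that depends only on $\Delta$.

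First, I snap every square center to the integer grid cell containing it, and record its fractional offsets $(x \bmod 1, y \bmod 1)$. Two unit squares intersect iff $|x_u-x_v|\le 1$ and $|y_u-y_v|\le 1$. Given the integer parts, each of these conditions becomes a comparison between fractional offsets.

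A path of length $\Delta$ between $u$ and $v$ then moves through a sequence of integer cell displacements. At each step the horizontal displacement lies in $\{-1,0,1\}$, and likewise the vertical one. For each step, whether a hop of that displacement is possible is determined by the order of the fractional offsets of the path's vertices.

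Next comes the key special case: all squares lie in a configuration where we can fix a ``pattern'', meaning the sequence of per-step integer displacements together with a relative order of the relevant offsets. For a fixed pattern, I would characterize, for a source $u$, the set of targets $v$ at distance $\le\Delta$ realizable with that pattern. The goal is to show this set is an orthogonal range in a space of dimension $O(\Delta)$, using coordinates given by the offsets together with precomputed extremal-reach values (furthest reachable offset after $i$ steps).

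The reach values are computed greedily. After $i$ steps, the best intermediate square maximizes or minimizes an offset subject to the earlier constraints, and these extremes can be found in one sweep each.

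With the reduction in place, I check $\Delta$-diameter by asking, for each $u$, whether some $v$ lies outside the union of the $3^{O(\Delta)}$ pattern ranges. Since $\Delta$ is a constant, a range tree handles this in $n\,\mathrm{polylog}^{O(\Delta)} n$ time.

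Handling the general case, where offsets interact across many cells, needs a divide-and-conquer on the offset values modulo 1 with $n^{o(1)}$ branching. This recursion separates groups until each subproblem is of the special form, giving $O^*(n)$ overall.

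I expect the main obstacle to be proving that the distance-$\le\Delta$ condition is exactly a union of $O_\Delta(1)$ orthogonal ranges in the lifted coordinates. Greedy extremal intermediate choices may fail to be simultaneous in $x$ and $y$. My first attempt would be to decouple the coordinates by guessing the order in which $x$ and $y$ constraints become tight, which adds only a constant factor of further patterns.

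\textbf{Unit disks, $\Delta=2$.} The diameter is at most 2 iff for every $u$, $N[u]$ meets the closed neighborhood of every $v$. Equivalently, with $U_u=\bigcup_{w\in N[u]} D(w,1)$, every center $v$ must lie in $U_u$.

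The regions $U_u$ are unions of radius-1 disks around points of $N[u]$ (taking radius-1 disks around centers, with intersection distance $2$ rescaled). I would show that the family $\{U_u\}$ behaves as pseudodisks with respect to the relevant containment queries, via the standard argument that the boundaries of two such unions cross at most twice locally. Using this I would adapt the partition-tree range searching for pseudodisks.

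Concretely, I build a cutting-based structure on the $n$ point centers. The algorithm then counts, for each $u$, how many centers lie in $U_u$, and compares that count with $n$. The offline batched complexity follows the Hopcroft-type balance: $n$ implicit ranges against $n$ points in $\OO(n^{4/3})$.

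Implicit representation needs care here. Membership and boundary-crossing tests with $U_u$ must be supported by a dynamic nearest-neighbor or additively-weighted Voronoi structure on $N[u]$. Building these over the canonical subsets of a hierarchy, rather than per $u$, avoids quadratic size.

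The main obstacle in this part is establishing the pseudodisk property and the conflict-counting primitives for these non-constant-complexity regions. For this I would follow the approach of Agarwal, Sharir and Welzl's discrete 2-center algorithm.
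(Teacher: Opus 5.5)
Both parts have a genuine gap at the step that carries the result. For unit squares, the lemma you flag as the main obstacle cannot hold in the form you state. Already for $\Delta=2$ with a diagonal displacement, the set of admissible targets $v$ for a fixed $u$ is a staircase determined by the maximal points of the admissible intermediate vertices. It is not a box in $v$'s offsets cut out by a few extremal-reach values of $u$. Fixing displacement/order patterns or ``tightness orders'' does not remove this two-dimensional existential over the intermediate vertex.

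The paper's special case is different. Fix $j$, a threshold $\mu$ and a cell offset, and require that the edge $(p_j,p_{j+1})$ crosses $\mu$ in $x\bmod 1$: $p_j$ lies in a fixed cell with $x(p_j)\bmod 1>\mu$, and $p_{j+1}$ lies in the same or next column with $x(p_{j+1})\bmod 1<\mu$. Then that edge's $x$-constraint holds automatically, and by the cell placement so does one of its $y$-constraints. The path therefore splits into two halves coupled only by $y(p_j)-y(p_{j+1})<1$. Each half is computed \emph{exactly} by layer-by-layer orthogonal range-min/max propagation, giving scalars with ``path exists iff $\phi(p_0)<\psi(p_\Delta)$''. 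Giving each of the $O(\Delta^3)$ choices its own coordinate turns ``some path crosses $\mu$'' into the single condition ``$\phi(p_0)$ does not dominate $\psi(p_\Delta)$''.

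Your recursion is also misdescribed: subproblems never become ``special''. The recursion handles paths that stay inside one slab $(\mu_k,\mu_{k+1})$. Because the problem is not decomposable, pairs already certified by a crossing path must be excluded there. The paper does this by passing accumulated dominance filters $(f,g)$ into each recursive call. These filters add $O(\Delta^3)$ coordinates per level, which is the actual reason the fan-out must be $2^{\Theta(\sqrt{\log n\log\log n})}$ rather than $2$.

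For unit disks, the flower reformulation and the canonical-subset representation match the paper, but the counting algorithm you build on them does not give $\OO(n^{4/3})$. A Hopcroft-type batched scheme performs on the order of $n^{4/3}$ operations that involve a range (point-in-range tests, boundary/cell crossings). Each such operation on an implicitly represented flower costs $\tilde\Theta(n^{1/3})$, so the straightforward adaptation is about $n^{5/3}$. Moreover, $|U_u\cap P|$ does not decompose over canonical subsets, since their covered sets overlap.

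The missing idea is that no counting is needed. \Diameter-2 is a universal condition, so for each relevant cell pair it suffices to test $P\cap B\subseteq\bigcap_{p\in A\cap P}F_p$. The weak-pseudodisk property gives this intersection (inside the cone $C_B$) only $O(|A|)$ boundary pieces. It can therefore be assembled by divide-and-conquer with $O(n\log n)$ calls to an $\OO(n^{1/3})$ pairwise-crossing primitive, followed by one ray shot per point of $B$.

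That route also needs two ingredients your sketch omits. First, you must localize to cell pairs so that every relevant disk contains a common point $o$, which makes flowers star-shaped with one boundary hit per ray. Second, you must handle ghost overlaps, i.e.\ shared boundary arcs coming from common disks, which is what \Cref{lm:orientation} is for. ``Boundaries cross at most twice'' alone does not make the binary search go through.
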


\medskip
Finally, we show new lower bounds for triangles and strings. Since line segments are strings, our lower bounds for line segments also hold for strings. For triangles, we show that \Diameter-2 is subquadratically hard even when the triangles are fat; the previous lower bound is only for \Diameter-3 and non-fat triangles~\cite{Bringmann2022-me}. Here we follow the same idea in our lower bound for \Diameter-2 of line segments in \Cref{thm:segments-main}. The key challenge is handling the unintended crossings of triangles in 2D. The fatness requirement makes the overall argument more delicate. We then obtain the lower bound for \Diameter-2 of strings with $O(1)$ complexity as a simple corollary; details are given in \Cref{sec:triangle}. For strings of  $O(\log n)$ complexity, we show that  \Diameter-1  is subquadratically hard. Observe that \Diameter-1 problem is equivalent to deciding if the intersection graph of a given set of objects is a clique.  For objects that are semialgebraic sets of constant description complexity,
\Diameter-1 can be solved in truly subquadratic time by known data structures for intersection searching~\cite{AgarwalE99}. Our lower bound in \Cref{thm:string-rectangle} shows that the problem is hard for strings of  $O(\log n)$ complexity by a reduction from OV. The idea is to create a polygonal chain of $d$ vertices from each vector of dimension $d$, such that there are two orthogonal vectors if and only if there are two non-intersecting chains; details are given in \Cref{sec:diameter1-string}. 


\begin{theorem}\label{thm:string-rectangle}  There is no $O(n^{2-\eps})$ time algorithm for constant $\eps \in (0,1)$ for:
\begin{itemize}
        \item \Diameter-$2$ of fat triangles or strings of $O(1)$ complexity under the 3-uniform 6-hyperclique hypothesis.
        \item   \Diameter-$1$ of strings of $O(\log n)$ complexity under OV hypothesis.
\end{itemize}
\end{theorem}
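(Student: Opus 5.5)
The theorem has two parts, and I handle them by separate reductions; each must yield an instance of size $N$ in time $O(N^{2-\eps'})$.

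\textbf{Fat triangles and $O(1)$-complexity strings.} I reduce from 3-uniform 6-hyperclique, following the line-segment construction for \Diameter-2 (\Cref{sec:lb:seg:diam2}), which adapts the hypercube construction of Bringmann \etal~\cite{Bringmann2022-me}.

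Split the six parts into two triples. Each triple $(a,b,c)$ whose three internal hyperedges are present becomes a vertex on one of two sides, $L$ or $R$; there are $N=O(n^3)$ such vertices. The twelve hyperedges that cross the split each use two vertices from one triple and one from the other. I encode each such cross-hyperedge pattern by a small gadget layer of ``connector'' objects, so that:
\begin{itemize}
\item a left object $T_{abc}$ and a right object $T_{def}$ have a common neighbor exactly when some cross hyperedge is \emph{missing};
\item hence the pair is at distance $\ge 3$ exactly when $\{a,\dots,f\}$ is a 6-hyperclique.
\end{itemize}
All other pairs (left–left, right–right, connector–anything) are forced to distance at most $2$ by a few universal hub objects. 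These hubs must touch everything except the left–right links, so they cannot short-circuit those.

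Geometrically, I would realize the left objects as long thin triangles emanating from one line and the right objects from another. Connectors are placed at grid positions indexed by, e.g., $(a,b,f)$ and a non-edge, so that a connector meets $T_{abc}$ iff the object passes its ``slot'' determined by $(a,b)$, and meets $T_{def}$ iff the slot matches $f$. Then $N=O(n^3)$, and an $O(N^{2-\eps})$ algorithm would decide 6-hyperclique in $O(n^{6-3\eps})$ time, contradicting the hypothesis.

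\emph{Main obstacle: fatness.} A fat triangle cannot be a thin ray, so unintended crossings are hard to avoid. My plan is to apply a projective-style transformation that maps the thin-segment construction into a tiny region: each segment becomes the long side of a fat triangle whose other two sides bulge into an empty side region where only that object lives. One then checks that triangles from the same class may overlap freely (those pairs are at distance $\le 2$ anyway), while cross-class intersections occur only along the original segments. For strings of $O(1)$ complexity the corollary is immediate, since each triangle boundary is a closed 3-segment string, or the original segments already suffice.

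\textbf{\Diameter-1 for strings with $O(\log n)$ bends.} I reduce from OV with $n$ vectors in $\{0,1\}^d$, $d=O(\log n)$. The instance is a clique iff every pair of chains intersects, so I need: chains $C_u,C_v$ intersect iff $u\cdot v\neq 0$.

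First, to avoid handling the set $A$ against itself, I use the standard variant with two sets $A,B$ and additionally force same-side pairs to intersect, for instance by having all chains pass through a common point at a spine.

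\emph{Construction.} Place $d$ vertical ``coordinate columns'' at $x=1,\dots,d$.
\begin{itemize}
\item A chain for $u\in A$ zigzags: in column $i$ it dips down to height $-1$ if $u_i=1$, and stays at height $+2$ otherwise.
\item A chain for $v\in B$ stays at height $+3$ except that in column $i$ it rises up from below to height $0$ if $v_i=1$.
\end{itemize}
To make these spikes actually cross only when both bits are $1$, I would instead use narrow interleaved teeth. Within column $i$, $A$-chains occupy the slab $y\in[0,1]$ only when $u_i=1$, and $B$-chains sweep horizontally across that slab only when $v_i=1$. Outside the slabs, $A$ chains run along $y=2$ and $B$ chains along $y=-2$. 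Each tooth is a vertical segment $x=i$, $y\in[-0.5,2]$ for $A$; for $B$ it is a horizontal segment $y=0$, $x\in[i-0.3,i+0.3]$ reached by vertical legs at $x=i\pm0.3$. The $A$ tooth lies at $x=i$, strictly between the $B$ legs, so the two chains cross exactly at $(i,0)$ when both bits are $1$.

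Same-side pairs are made to intersect by a shared initial point at the far left. Different-side pairs avoid extra crossings because the $A$ baseline ($y=2$) and the $B$ baseline ($y=-2$) are separated. Each chain has $O(d)=O(\log n)$ bends, and the instance has $2n$ strings, so a truly subquadratic algorithm refutes OV.
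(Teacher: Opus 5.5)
Your \Diameter-1 reduction is correct. It differs from the paper's only in the gadget. The paper uses chains through $(i,u_i-1)$ and $(i,1-v_i)$, which lie in the strips $-1\le y\le 0$ and $0\le y\le 1$ and can meet only on $y=0$, in a column where both bits are $1$; you use separated baselines and interleaved teeth instead. Make the shared point per side, since one point common to \emph{all} chains would make every pair adjacent. In fact your common baselines already force same-side chains to overlap.

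The genuine gap is in the first bullet, which is where essentially all the work lies. The combinatorial skeleton you describe is the standard one of Bringmann et al.\ (with two slips: each side has a single internal hyperedge, and there are $18$ crossing triplet types, not $12$). The theorem, however, is about realizing this skeleton in the plane without unintended crossings, and the mechanism you sketch cannot do that. A left object $T_{abc}$ must meet a connector exactly when it shares one of the three pairs $(a,b),(b,c),(c,a)$, or one of the three single indices $a,b,c$. A long thin triangle is essentially a segment, so it cannot expose three independently placed pair-slots.

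The missing idea is the paper's encoding on two far-apart convex chains:
\begin{itemize}
\item On each of three short edges $s_{AB},s_{BC},s_{CA}$ of the left chain, the pair $(a,b)$ is the point $p_{AB}(a,b)$ at relative offset $\frac{a-1}{k}+\frac{b}{\tau k^2}$, so the points are grouped by their first index.
\item The left object for hyperedge $(a,b,c)$ is the triangle with vertices $p_{AB}(a,b),p_{BC}(b,c),p_{CA}(c,a)$.
\item Sharing the pair $(x,y)$ then means containing the single point $p_{XY}(x,y)$, and having index $z$ means meeting the short interval $p_{ZW}(z,1)p_{ZW}(z,k)$.
\item A connector is a triangle with one vertex at $p_{XY}(x,y)$ and opposite side covering that interval.
\item Every segment joining the two chains avoids the interiors of both hulls, so no other incidences arise.
\end{itemize}

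The fattening step also fails as stated. Each of the $\Theta(n^3)$ connectors spans from one chain to the other, so none of them has a private empty region to bulge into. A projective map also gives no control over fatness, or over what the thickened part near an incidence hits. The paper instead does the following:
\begin{itemize}
\item Every left (resp.\ right) triangle is an $\varepsilon$-perturbation of one fixed half-square, using edges of length $\varepsilon/100$ on lines of slope $\pm(1+\varepsilon/10)$.
\item Each connector is a half-square whose hypotenuse joins $u=p_{XY}(x,y)$ to a point $u'$ near the target interval, with its right-angle vertex in $y>0$.
\item The chains are separated by $T=\Theta(1/\varepsilon)$, which together with those slopes keeps the $\pi/4$ wedge at $u$ outside the hull.
\item Taking $\tau=\Theta(1/\varepsilon)$ makes the small tip $u'p_{ZW}(z,1)p_{ZW}(z,k)$ miss all triangles of other groups (\Cref{lem:fattriangle_lb}).
\end{itemize}
Your observation that same-class overlaps are harmless is correct and is used implicitly there, but it does not replace this analysis.

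Finally, two points on the $O(1)$-complexity strings. The segment construction of \Cref{sec:lb:seg:diam2} encodes $4$-cliques (combK4), not 3H6, so it does not suffice here. Passing from triangles to their boundaries also requires adjusting the hub, because in the paper the hub contains all left triangles; for example, you could use the chain polyline itself as the hub string.
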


\section{Preliminaries}
\paragraph{Graph preliminaries.}
Throughout this paper, we will let $G = (V,E)$ be a geometric intersection graph of $n = |V|$ geometric objects $\cO$. We will assume that we are given the geometric objects $\cO$. In this paper, we mostly consider connected objects that have $O(1)$ complexity, unless noted otherwise. We use $d(u,v)$ to denote the distance between two vertices $u,v\in V$ in~$G$.
We denote the 1-hop \EMPH{neighborhood} of a vertex $v\in V$ by $N[v]$, and the \EMPH{$r$-neighborhood ball} of $v$ by $N^r[v]:= \{u\in V : d(u,v) \le r\}$. For any two vertices $u$ and $v$, we denote by $\pi_G(u,v)$ the shortest path between them and  $|\pi_G(u,v)|$ the number of edges of the path.

\paragraph{VC-dimension.}
Given a set system  $(U, \mathcal{F})$ with a ground set $U$ and a family $\mathcal{F}$ of subsets of $U$, its \EMPH{VC-dimension} is the cardinality of the largest $S\subseteq U$ such that $S$ is \emph{shattered} by $\mathcal{F}$---%
for every $S'\subseteq S$, there is some $X\in \mathcal{F}$ such that $X\cap S = S'$. We denote this VC-dimension by $\vcdim(U, \mathcal{F})$. 
We say that a graph $G$ has \EMPH{distance VC-dimension} at most $d$ if the set system of neighborhood balls $(V_G, \{N^r[v]\}_{v\in V, r \geq 0})$ has VC dimension at most $d$.

Let $\mathcal{A}$ and $\mathcal{B}$ be two families of subsets of the same ground set $U$. We define $\mathcal{A}\cup \mathcal{B} \coloneqq \{A\cup B: A\in \mathcal{A}, B\in \mathcal{B}\}$. It is well known that:
\begin{lemma}[Vapnik~\cite{Vapnik1998}]\label{lm:vc-union} $\vcdim(U,\mathcal{A}\cup \mathcal{B}) \leq \vcdim(U,\mathcal{A})  + \vcdim(U,\mathcal{B})$.
\end{lemma}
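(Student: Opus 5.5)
The plan is to reduce to \emph{hereditary} (downward-closed) families by the standard down-shifting technique, where the additive bound is immediate. Write $d_A = \vcdim(U,\mathcal{A})$ and $d_B = \vcdim(U,\mathcal{B})$. The family $\mathcal{A}\cup\mathcal{B}$ is read exactly as defined, as the family of pairwise unions $\{A\cup B\}$. Since shattering a set $S$ only depends on traces on $S$, we may restrict attention to a finite ground set, namely a shattered $S$ itself, and replace each family by its trace on $S$.

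\textbf{The hereditary case.} Suppose $\mathcal{A}$ and $\mathcal{B}$ are closed under taking subsets, and suppose $S$ is shattered by $\mathcal{A}\cup\mathcal{B}$. Then $S$ itself is a trace, so $S \subseteq A\cup B$ for some $A\in\mathcal{A}$ and $B\in\mathcal{B}$. By heredity, $A\cap S\in\mathcal{A}$, and every subset of $A\cap S$ also lies in $\mathcal{A}$. Hence $A\cap S$ is shattered by $\mathcal{A}$, so $|A\cap S|\le d_A$. Likewise $|B\cap S|\le d_B$. Therefore $|S|\le |A\cap S|+|B\cap S|\le d_A+d_B$.

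\textbf{Reduction by shifting.} For $x\in S$, the shift $T_x$ replaces each $F$ by $F\setminus\{x\}$ whenever $F\setminus\{x\}$ is not already present. Two facts are standard:
\begin{itemize}
\item $T_x$ never increases the VC-dimension of a family.
\item Iterating the shifts over all $x\in S$ until stable yields a hereditary family.
\end{itemize}
Apply the same sequence of shifts to $\mathcal{A}$ and to $\mathcal{B}$ to obtain hereditary families $\mathcal{A}^*$ and $\mathcal{B}^*$. Their VC-dimensions are at most $d_A$ and $d_B$. What remains is the monotonicity claim for pairwise unions: if $S$ is shattered by $\mathcal{A}\cup\mathcal{B}$, then $S$ (or some set of equal size) is shattered by $T_x\mathcal{A}\cup T_x\mathcal{B}$.

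To prove the monotonicity claim, take a target $S'\subseteq S$ and split into cases.
\begin{itemize}
\item If $x\notin S'$, I would realize $S'$ using $A\setminus\{x\}$ and $B\setminus\{x\}$. Each of these is either present after the shift or is the shifted image of something.
\item If $x\in S'$, we need some $A\cup B$ with trace $S'$ in which at least one of $A$, $B$ still contains $x$ after shifting. Such a set survives the shift exactly when its $x$-deleted version was already present.
\end{itemize}
The natural argument is to pair the witnesses for $S'$ and $S'\setminus\{x\}$, both of which exist by shattering, and to show that in at least one of the two families the $x$-containing member is ``protected'' from the shift.

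\textbf{Main obstacle.} I expect the case $x\in S'$ to be the crux. The difficulty is that a union can have $x$ contributed by $A$ in one witness and by $B$ in another, so protection may fail in both families simultaneously. I would first attempt a careful exchange argument that combines $A$ from one witness with $B$ from the other. This step must be checked with real care: it is exactly where the additive bound for pairwise unions is most delicate, and it is the step on which the whole plan rests.
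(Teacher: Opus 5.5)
\textbf{The step you flagged as the crux fails, and it cannot be repaired.} With $\mathcal{A}\cup\mathcal{B}$ defined as the family of pairwise unions, the statement itself is false. Take $U=\{1,2,3\}$, $\mathcal{A}=\{\emptyset,\{1\},\{2\},\{3\}\}$ and $\mathcal{B}=\{\emptyset,\{3\},\{1,2\}\}$.
\begin{itemize}
\item No member of $\mathcal{A}$ has two elements, so $\mathcal{A}$ has VC-dimension exactly $1$.
\item Shattering a $2$-set needs four traces, and $|\mathcal{B}|=3<4$, so $\mathcal{B}$ also has VC-dimension exactly $1$.
\item The pairwise unions give $\emptyset$, the three singletons, $\{1,2\}=\emptyset\cup\{1,2\}$, $\{1,3\}=\{1\}\cup\{3\}$, $\{2,3\}=\{2\}\cup\{3\}$, and $\{1,2,3\}=\{3\}\cup\{1,2\}$.
\end{itemize}
So $U$ is shattered and $\vcdim(U,\mathcal{A}\cup\mathcal{B})=3>1+1$.

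This is exactly your monotonicity claim breaking. $\mathcal{A}$ is already hereditary, so $T_1$ leaves it unchanged, while $T_1$ replaces $\{1,2\}$ by $\{2\}$ in $\mathcal{B}$. The unique witness for $S'=\{1,2,3\}$ is $\{3\}\cup\{1,2\}$, and its copy of $x=1$ sits in the unprotected member $\{1,2\}$. The unique witness for $S'\setminus\{1\}$ is $\{2\}\cup\{3\}$. Recombining $A$ from one witness with $B$ from the other gives only $\{3\}$ or $\{1,2\}$. After the shift, every pairwise union has at most two elements, so no $3$-set is shattered. This is the ``protection fails in both families'' scenario you anticipated.

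Your hereditary-case argument and the two standard shifting facts are correct. That is precisely why the transfer step must be false: if it held, the false additive bound would follow.

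\textbf{What the paper does and what is actually true.} The paper gives no proof here, only the attribution to Vapnik, and the fact does not hold in this form. The correct versions are:
\begin{itemize}
\item For the ordinary set-theoretic union of the two collections, the bound is $d_A+d_B+1$. The $+1$ is needed, e.g.\ for $\{\emptyset\}$ and $\{\{1\}\}$.
\item For pairwise unions, the number of traces on an $m$-set is at most the product of the corresponding counts for $\mathcal{A}$ and $\mathcal{B}$. Combined with Sauer--Shelah this gives $O(d_A+d_B)$, but only with a constant larger than $1$.
\item For $k$-fold unions of families of VC-dimension at most $d$, the bound is $O(dk\log k)$, and this is tight in general.
\end{itemize}
Using the last bound in the proof of \Cref{thm:string-bounded} costs a factor $O(\log|\mathcal{S}|)$. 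This still leaves the paper's $n^{2-1/O(\cdot)}$-type running times truly subquadratic for constant $\chi$ (or $h$) and $\Delta$.
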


\paragraph{Fine-grained complexity hypotheses.}\label{appendix:hypothesis}

Recent development of fine-grained complexity has identified a few hypotheses that we include here for completeness. We use them to prove our lower bounds. 

\begin{definition}[Orthogonal Vectors (OV) hypothesis]\label{def:OV}
    Given sets $A, B$ of $n$ vectors in $\{0,1\}^d$ with $d=\omega(\log n)$, deciding whether there exists an orthogonal pair $(a,b)\in A\times B$ requires $n^{2-o(1)}$ time.
\end{definition}
The OV hypothesis is implied~\cite{Williams2005-nr} by the Strong Exponential Time Hypothesis~\cite{Impagliazzo2001-aw}.

\begin{definition}[3-uniform 6-hyperclique (3H6) hypothesis]\label{def:3H6}
Given a $6$-partite $3$-uniform hypergraph $G=(V,E)$ where $V$ is the disjoint union of vertex set $V^{(1)}, \ldots, V^{(6)}$, each containing $n$ vertices, and $E\subseteq {V \choose 3}$ such that each edge connects three vertices from different vertex sets. The problem is to decide whether there are $6$ vertices $S=\{v_1,v_2,\ldots,v_6\}$ with $v_i\in V^{(i)}$, $i=1,\ldots,6$, forming a $6$-clique, i.e., $\{v_i,v_j,v_k\}\in E$ for all $\{i,j,k\}\in {S\choose 3}$.
The 3H6 hypothesis says that the problem requires $n^{6-o(1)}$ time. 
\end{definition}

More information about the hyperclique hypothesis can be found in~\cite{Lincoln2018-ei}.

\begin{definition}[Combinatorial 4-clique (combK4) hypothesis]\label{def:combK4}
    Any combinatorial algorithm detecting whether a graph of $n$ vertices contains a $4$-clique requires $n^{4-o(1)}$ time.
\end{definition}

More information about the combinatorial $4$-clique problem and connections to other hypotheses can be found in~\cite{Chan2008-oe,Abboud2015-oc}.

\section{Line Segments and String Graphs}\label{sec:segs}

A \EMPH{string graph} is an intersection graph of curves in the plane. Each curve is called a \EMPH{string}. Note that intersection graphs of line segments are string graphs.
In this section, 
we will devote to proving VC-dimension bounds for string graphs that will give subquadratic diameter algorithms for line-segment intersection graphs. Once we have upper bounds for VC-dimension, we can apply (variations of) the algorithmic framework of~\cite{ChanCGKLZ25} to obtain subquadratic time algorithms.

\begin{figure}
    \centering
    \begin{tikzpicture}
    \node[anchor=south west, inner sep=0] (image) at (0,0) {
       \includegraphics[width=0.8\textwidth]{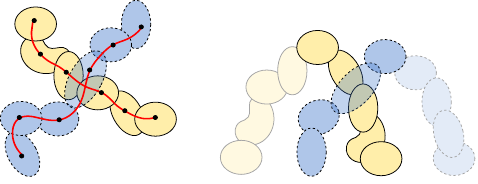}
    };

    \begin{scope}[x={(image.south east)},y={(image.north west)}]
        \node[black] at (0.1, 1.0) {$P$};
        \node[black] at (0.3, 1.05) {$Q$};
         \node[black] at (0.2, -0.05) {$(a)$};
         \node[black] at (0.8, -0.05) {$(b)$};
         \node[black] at (0.67, 0.72) {$v_{ij}$};
         \node[black] at (0.8, 0.12) {$a_{i}$};
         \node[black] at (0.81, 0.72) {$v_{cd}$};
         \node[black] at (0.65, 0.12) {$a_{c}$};
    \end{scope}
\end{tikzpicture}
    \caption{(a) Two (vertex-disjoint) paths $P$ and $Q$ cross on the plane; their canonical (red) curves cross. (b) Two paths $\pi_G(v_{ij}, a_i)$ (yellow)  and $\pi_G(v_{cd}, a_c)$ (blue) cross on the plane; in this figure, $|\pi_G(v_{ij},a_c)|\leq |\pi_G(v_{ij},a_i)|$.}
    \label{fig:canonical-curve}
\end{figure}

Our key technical contribution in this section is a new color-based argument to bound the VC dimension. To set the stage, we need to introduce the notion of canonical curves. Let $G$ be the intersection graph of (connected) objects. For each object $o_x$ corresponding to a vertex $x\in V$, we pick a point $p_x\in o_x$ as a representation of $o_x$.   For a path $P$  between two vertices $x$ and $y$ in $G$, there is a planar curve, denoted by $\sigma_P$, that connects $p_x\in o_x$ and $p_y\in o_y$, and traces the objects corresponding to the vertices in $P$. We call $\sigma_P$ the \EMPH{canonical curve} of $P$. For a pair of vertices $x$ and $y$, we define their \EMPH{canonical curve $\sigma(x,y)$} to be the canonical curve of the shortest path in $G$ between $x$ and $y$.  (We will fix one arbitrary shortest path per pair to define the canonical curve.)  We say that two paths $P$ and $Q$ of $G$ \EMPH{cross on the plane} if they do not share any endpoint and their canonical curves intersect in $\R^2$. Note that $P$ and $Q$ might be vertex-disjoint; see \Cref{fig:canonical-curve}(a).
  
Our starting point is the \EMPH{$K_5$-avoidance argument} by Chang, Gao, and Le~\cite{CGL24}  for showing that the distance VC-dimension of any intersection graph of pseudo-disks is at most 4, which we briefly review here.  The proof is by contradiction: suppose that 5 vertices $\{a_1,a_2,\ldots, a_5\}$ are shattered, then, for every two points $a_{i}$ and $a_{j}$ where $i\not= j \in [5]$, there is a (distinct) vertex $v_{ij}$ whose neighborhood contains $a_i$ and $a_j$ only. The two canonical curves $\sigma(a_i,v_{ij})\circ \sigma(v_{ij},a_j)$ form a curve between $p_{a_i}$ and $p_{a_j}$ on the plane. The set of 5 points $\{p_{a_i}\}_{i\in [5]}$ and 10 curves  $\{\sigma(a_i,v_{ij})\circ \sigma(v_{ij},a_j)\}_{i\not=j \in [5]}$ between these points form a  drawing of $K_5$ on the plane. Then, by the Hanani–Tutte theorem, there exist two curves that do not share endpoints and cross (an odd number of times). Consequently, there exists $v_{ij}$ and $v_{cd}$ for  four distinct indices $i,j,c,d \in[5]$ such that  $\pi_G(v_{ij}, a_i)$  and $\pi_G(v_{cd}, a_c)$ cross on the plane.  Using the fact that objects are pseudo-disks, they conclude that either (i) $|\pi_G(v_{cd}, a_i)| \leq |\pi_G(v_{cd}, a_c)|$ or (ii) $|\pi_G(v_{ij}, a_c)| \leq |\pi_G(v_{ij}, a_i)|$; see \Cref{fig:canonical-curve}(b). In the former case, the neighborhood of $v_{cd}$ contains $a_i$, contradicting that $v_{cd}$ only contains $a_c$ and $a_d$; in the latter case, the neighborhood of $v_{ij}$ contains $a_c$, contradicting that $v_{ij}$ only contains $a_i$ and $a_j$. The last part of the argument depends crucially on the fact that objects are pseudo-disks, and therefore, does not apply to strings and line segments (which are not pseudo-disks).

Here, we devise a new argument based on vertex coloring on top of the $K_5$-avoidance technique. Our idea is best illustrated in the case of bipartite string graphs $G = (A\cup B, E)$ (in \Cref{subsec:bipartite-string}). Specifically, each vertex can be colored as $A$ or $B$, depending on which side it belongs to.  Then we can view each canonical curve $\sigma(a_i,v_{ij})$ as a sequence $S_{ij}$ of alternating colors: $S_{ij} = ABAB\ldots $. Following the setup of the $K_5$-avoidance argument, when $\pi_G(v_{ij}, a_i)$  and $\pi_G(v_{cd}, a_c)$ cross on the plane, our key idea is to align the corresponding color sequences $S_{ij}$ and $S_{cd}$ at the crossing objects, which have \EMPH{different colors}. Then, by a simple case analysis, we can still show that either  (i) $|\pi_G(v_{cd}, a_i)| \leq |\pi_G(v_{cd}, a_c)|$ or (ii) $|\pi_G(v_{ij}, a_c)| \leq |\pi_G(v_{ij}, a_i)|$ as in the case of pseudo-disks and get a contradiction. We note that the color alignment argument holds only when there are two colors (due to bipartiteness); it does not hold when there are three or more colors. Indeed, for the case of three colors, we have a quadratic conditional lower bound by \Cref{thm:segments-main}, and the VC-dimension is unbounded. 

For string graphs of bounded chromatic number and bounded diameter, we decompose the set system $(V, \{N^{r}[v]\}_{v\in V, r\geq 0})$  into set systems of \EMPH{different types}, each type corresponds to a fixed sequence of (vertex) colors. Then, for each sequence of colors $S$, we define a notion of ball $B_S(v)$ that contains all vertices reachable from $v$ by paths whose vertex colors form a subsequence of $S$; the goal is then to bound the VC dimension of  $(V, \{B_S(v)\}_{v\in V})$. Our key insight is that, by restricting the color sequences in the definition of $B_S(v)$  to be subsequences of $S$, we could apply our color alignment argument again.  The case analysis is more delicate, but the underlying idea remains the same as in bipartite string graphs. The proof is given in \Cref{subsec:bounded-chromatic}. 

Finally, in \Cref{subsection:slope}, we adapt the argument for string graphs of bounded chromatic number and bounded diameter to (possibly degenerate) line segments with few slopes. Here, the new issue is that the colors are not proper: the endpoints of an edge could share the same color. We get around this issue by exploiting the fact that objects are line segments. Roughly speaking, if two segments $s$ and $\Tilde{s}$  of the same slope, say horizontal, intersect, then one of the neighbors of $s$ will intersect $\Tilde{s}$ (or vice versa). This means we can align the color of a neighbor of $s$ with that of $\Tilde{s}$, and hence the color alignment argument can be applied again. 

\subsection{Bipartite String Graphs}\label{subsec:bipartite-string}
\begin{theorem}\label{thm:string-bipartite} Every bipartite string graph $G = (A\cup B,E)$ has distance VC-dimension at most $8$.
\end{theorem}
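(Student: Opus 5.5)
The plan is to reduce to the $K_5$-avoidance argument of Chang, Gao, and Le, using bipartiteness to replace the pseudo-disk step by a parity argument. Suppose for contradiction that a set $S$ of $9$ vertices is shattered by $\{N^r[v]\}_{v\in V, r\ge 0}$. By pigeonhole, at least $5$ of them lie on the same side, say $a_1,\dots,a_5\in A$. Any subset of a shattered set is shattered, so $\{a_1,\dots,a_5\}$ is shattered.

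Hence for each pair $i\neq j\in[5]$ there are a vertex $v_{ij}$ and a radius $r_{ij}$ with $N^{r_{ij}}[v_{ij}]\cap\{a_1,\dots,a_5\}=\{a_i,a_j\}$; the pairs $(v_{ij},r_{ij})$ are distinct for distinct pairs $\{i,j\}$. Draw $K_5$ with vertex $i$ at $p_{a_i}$ and edge $ij$ drawn as $\sigma(a_i,v_{ij})\circ\sigma(v_{ij},a_j)$, after a small perturbation to get a proper drawing. By the Hanani–Tutte theorem, two independent edges $ij$ and $cd$ ($\{i,j\}\cap\{c,d\}=\emptyset$) cross. Relabeling, we may assume $\pi_G(v_{ij},a_i)$ and $\pi_G(v_{cd},a_c)$ cross on the plane.

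Two canonical curves can only meet inside objects of the traced paths. So there are a vertex $x$ on $\pi_G(v_{ij},a_i)$ and a vertex $y$ on $\pi_G(v_{cd},a_c)$ such that either $x=y$ or $xy\in E$. Write $s=d(v_{ij},x)$, $t=d(x,a_i)$, $s'=d(v_{cd},y)$, $t'=d(y,a_c)$. Since $x$ lies on a shortest path, $s+t=d(v_{ij},a_i)\le r_{ij}$, and similarly $s'+t'\le r_{cd}$. We get the triangle bounds $d(v_{ij},a_c)\le s+t'+[x\neq y]$ and $d(v_{cd},a_i)\le s'+t+[x\neq y]$.

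The key parity step is as follows.
\begin{itemize}
\item If $x=y$: assume without loss of generality $t'\le t$. Then $d(v_{ij},a_c)\le s+t'\le s+t\le r_{ij}$, so $a_c\in N^{r_{ij}}[v_{ij}]$, a contradiction.
\item If $xy\in E$: $x$ and $y$ lie on opposite sides of the bipartition, while $a_i,a_c\in A$ lie on the same side. Hence $t=d(x,a_i)$ and $t'=d(y,a_c)$ have different parities, so $t\neq t'$. If $t'<t$, then $t'\le t-1$ and $d(v_{ij},a_c)\le s+1+t'\le s+t\le r_{ij}$, a contradiction. Symmetrically, if $t<t'$, then $d(v_{cd},a_i)\le s'+1+t\le s'+t'\le r_{cd}$, so $a_i\in N^{r_{cd}}[v_{cd}]$, again a contradiction.
\end{itemize}

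Thus no $5$ same-colored vertices are shattered, and the distance VC-dimension is at most $4+4=8$. Equivalently, one can restrict the ground set to $A$ and to $B$ separately and use a union-type bound in the spirit of \Cref{lm:vc-union}.

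The step I expect to need the most care is the geometric bookkeeping. First, the crossing given by Hanani–Tutte between the concatenated curves must be localized to a crossing between the two specific half-paths $\pi_G(v_{ij},a_i)$ and $\pi_G(v_{cd},a_c)$, after relabeling the four endpoints. Second, a crossing of canonical curves must genuinely yield $x=y$ or $xy\in E$; this requires perturbing the curves so that they meet only inside common objects. Third, degenerate situations must be ruled out, for example $v_{ij}$ coinciding with some $a_k$, or $v_{ij}=v_{cd}$ with different radii. These should be handled by the same distance inequalities, since they never use distinctness of the centers.
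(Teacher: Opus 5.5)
Your proposal is correct and follows essentially the same route as the paper: pigeonhole to five shattered vertices on one side, the $K_5$ drawing with Hanani--Tutte yielding crossing half-paths $\pi_G(v_{ij},a_i)$ and $\pi_G(v_{cd},a_c)$, and a parity argument to reroute through the crossing so that $a_c\in N^{r_{ij}}[v_{ij}]$ or $a_i\in N^{r_{cd}}[v_{cd}]$. The only difference is bookkeeping: the paper uses positions $k_i,k_c$ in alternating color sequences with ``without loss of generality $k_i\le k_c$'', whereas you compare the distances $d(x,a_i)$ and $d(y,a_c)$ and handle both orderings explicitly.
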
 

\begin{proof} 
Suppose that at least $9$ vertices are shattered by the set system, then at least $5$ vertices in one side, say $A$, are shattered. Let these be $X_A \coloneqq \{a_1,a_2,\ldots, a_5\}$. Let \EMPH{$v_{ij}$} and \EMPH{$r_{ij}$} be such that $X_A\cap N^{r_{ij}}[v_{ij}] = \{a_i,a_j\}$. We call $r_{ij}$ the the radius \emph{associated with} $v_{ij}$.

For every string $s_x$ corresponding to a vertex $x\in V$, we pick a point $p_x\in s_x$, as a representation of $s_x$.  
Recall that for every two vertices $x$ and $y$ and a shortest path $\pi_{G}(x,y)$ between them in graph $G$, the canonical curve $\sigma(x,y)$ is a planar curve tracing the strings corresponding to vertices in $\pi_{G}(x,y)$ that connects $x\in s_x$ and $y\in s_y$. 

Write the vertices of the canonical curve from $a_i$ to $v_{ij}$ as a sequence $S_{ij}$ of alternating colors of length $\ell_i\le r_{ij}+1$: $S_{ij} = ABAB\ldots$ where each color corresponds to which set (either $A$ or $B$) the vertices belong. The initial color of $S_{ij}$ is $A$ since $a_i\in A$.  We'll use the notation \EMPH{$S_{ij}[k]$} to denote the $k$th color in the sequence $S_{ij}$ and the notation
\EMPH{$S_{ij}[k_1 \dots k_2]$} to denote the contiguous subsequence of $S_{ij}$ between the indices $k_1$ and $k_2$.
  
To get a contradiction, we follow the $K_5$-avoidance argument using the Hanani–Tutte theorem. Specifically, observe that $X_A$ and the set of curves $\{\sigma(a_i,v_{ij})\circ \sigma(v_{ij},a_j)\}_{1\leq i,j\leq 5}$ form a drawing of $K_5$ on the plane. 
Then by the Hanani–Tutte theorem, there exist two curves not sharing endpoints that cross an odd number of times, i.e., the curves cross at least once. 
Suppose that the crossing pair of canonical curves were $\sigma(a_i, v_{ij})$ and $\sigma(a_c, v_{cd})$ for four distinct indices $i,j,c,d \in[5]$.
We will show that the following two facts result in a contradiction.
\begin{enumerate}
    \item There are two vertices $x_i$ corresponding to a string on the canonical curve $\sigma(a_i, v_{ij})$ and $x_c$ corresponding to a string on $\sigma(a_c, v_{cd})$ where either $x_i = x_c$ or $x_i$ crosses $x_c$ (i.e. $(x_i,x_c) \in E$).
    \item $a_c \notin N^{r_{ij}}[v_{ij}]$ and 
    $a_i \notin N^{r_{cd}}[v_{cd}]$.
\end{enumerate}

Suppose that $x_i$ was the $k_i$th vertex out of $\ell_i$ total vertices on the canonical curve $\sigma(a_i, v_{ij})$
and that $x_c$ was the $k_c$th vertex out of $\ell_c$ total vertices on the canonical curve $\sigma(a_c, v_{cd})$. Recall that as $x_c \in N^{r_{cd}}[v_{cd}]$, we must have $\ell_c \le r_{cd}+1$.
Without loss of generality, assume that $k_i \le k_c$. 

First we consider the case where $x_i = x_c$. Then there must be a path from $a_i$ to $v_{cd}$ corresponding to the concatenated sequence $S_{ij}[1\dots k_i] \circ S_{cd}[(k_c+1)\dots \ell_c]$. However the length of this sequence is $k_i+(\ell_c-k_c) \le \ell_c \le r_{cd}+1$, which means that $a_i \in N^{r_{cd}}[v_{cd}]$ which is a contradiction.

Next we consider the case that $x_i$ and $x_c$ cross. Note that exactly one of $x_i$ and $x_c$ is in $A$ while the other is in $B$. In particular, this means that $k_i \neq k_c$, since both $S_{ij}$ and $S_{cd}$ are alternating sequences beginning with A, so $k_i < k_c$. As both are integers, $k_i+1 \le k_c$. 
Now this means that the path corresponding with the sequence $S_{ij}[1\dots k_i] \circ S_{cd}[k_c\dots \ell_c]$ has length at most $k_i + (\ell_c-k_c + 1) \le \ell_c \le r_{cd}+1$ which again means that $a_i \in N^{r_{cd}}[v_{cd}]$, a contradiction.
\end{proof}

In \Cref{ssec:diameter-NDAALS} we show that this VC-dimension bound can be easily used in the framework of \cite{ChanCGKLZ25} to obtain a subquadratic time algorithm.
\begin{restatable}{theorem}{GenPosLineSeg}
\label{thm:gen-pos-line-seg}
The diameter of axis-aligned segments in general position can be computed in $\OO(n^{2-1/32})$ time.
\end{restatable}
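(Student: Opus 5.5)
Our plan is to combine \Cref{thm:string-bipartite} with the VC-dimension-based framework of Chan \etal~\cite{ChanCGKLZ25}. First, we check that the intersection graph $G$ of axis-aligned segments in general position is bipartite. Under non-degeneracy no two horizontal segments overlap and no two vertical segments overlap, so every edge joins a horizontal segment $A$ to a vertical segment $B$. Hence $G=(A\cup B,E)$ is a bipartite string graph, and by \Cref{thm:string-bipartite} its distance VC-dimension is at most $8$. We will also want the VC-dimension of the associated ``distance-difference'' or ``pattern'' set systems used in~\cite{ChanCGKLZ25}. These are obtained by combining $O(1)$ ball systems, so their VC-dimension is $O(1)$ via \Cref{lm:vc-union} and standard closure properties.

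Second, we recall the structure of the algorithm in~\cite{ChanCGKLZ25}. It consists of two ingredients.
\begin{enumerate}
\item A \emph{spanning path/tree with low crossing number} for the set system $(V,\{N^r[v]\})$, which follows from the bounded VC-dimension via Haussler packing / Welzl-type arguments. With VC-dimension $d$ the crossing number is $O(n^{1-1/d})$.
\item A geometric data structure supporting a BFS-like step: given a set of objects (a frontier), find all unvisited objects intersecting some object of the frontier, in time near-linear in the number of objects reported plus sublinear overhead.
\end{enumerate}
The algorithm computes BFS from one vertex exactly, and then for the other vertices updates distance labels along the low-crossing path. Consecutive vertices on the path have balls differing only in few ``crossing'' elements. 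This gives total time roughly $\OO(n^{2-1/\Theta(d)})$.

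For axis-aligned segments, the intersection-reporting structure is standard. A horizontal frontier against vertical candidates is orthogonal segment intersection, handled with interval/segment trees plus deletions, and symmetrically for the other direction. So each BFS costs $\OO(n)$, as in the unit-square case of~\cite{ChanCGKLZ25}.

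Third, we plug in $d=8$, tracking the exponent. The framework gives $\OO(n^{2-1/(cd)})$ for a small constant $c$, and with the bound $8$ this should yield $n^{2-1/32}$. We expect the main obstacle to be matching the exact exponent. We must identify which set system the framework actually needs (balls versus their pairwise ``differences''/induced shatter functions) and confirm that its dual shatter function is $O(n^{d})$ with $d$ governed by the bound $8$. Concretely, we would first try using the primal shatter bound $O(n^{8})$ to get a spanning path of crossing number $\OO(n^{1-1/8})$, and then redo the balancing in~\cite{ChanCGKLZ25} between the number of path breaks and the cost per break. Because that balancing introduces a further square-root-type loss twice, the result is $1/32$. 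If that balancing does not directly give $1/32$, we would adjust the block size in the framework's parameter choice accordingly, since the claimed bound only requires some truly subquadratic exponent derived from $d=8$.
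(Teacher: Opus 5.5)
Your VC-dimension step is correct and matches the paper: non-degeneracy means every edge joins a horizontal segment to a vertical one, so \Cref{thm:string-bipartite} gives $d\le 8$. The algorithmic half, however, has a genuine gap. You never establish the data-structural hypothesis that the framework of~\cite{ChanCGKLZ25} actually requires (\Cref{thm:diameter}). Besides decremental BFS and bounded distance VC-dimension, it needs a \emph{rainbow colored intersection searching} (RIS) structure: for arbitrarily colored objects, decide in $\OO(1)$ time whether the neighbors of a query object carry all colors.

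Your description of the framework is also not its mechanism. You describe BFS from one source, then updating labels along a low-crossing path because consecutive vertices have similar balls. With only intersection reporting plus deletions this costs $\OO(n)$ per source, which is quadratic overall. In the actual framework, the low-crossing path $\lambda$ is on the ground set, so each ball $N^r[v]$ is a union of few $\lambda$-intervals. The algorithm grows these representations via $N^{r+1}[v]=\bigcup_{u\in N[v]}N^r[u]$, and the interval-cover queries this needs are reduced in~\cite{ChanCGKLZ25} to RIS queries. RIS does not follow from segment-intersection reporting, since one query may hit many segments of the same color and you must count \emph{distinct} colors.

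The paper builds RIS as follows. For each color $c$, take the vertical decomposition of the horizontal color-$c$ segments. Lift each cell $R$, whose upper boundary lies at height $h_y$, to the 3D box $R\times[h_y,\infty)$. Map a vertical query segment from $(q_x,q_y)$ to $(q_x,q'_y)$ to the point $(q_x,q_y,q'_y)$. This point lies in exactly one color-$c$ box if the query meets a color-$c$ segment, and in none otherwise. So 3D orthogonal range (box-stabbing) counting returns the number of colors hit in $\OO(1)$ time, which is then compared with the total number of colors. Vertical input segments are handled symmetrically, and non-degeneracy guarantees that parallel segments never intersect.

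Combined with orthogonal range searching for decremental BFS, \Cref{thm:diameter} gives the exact diameter in $\OO(n^{2-1/(4d)})=\OO(n^{2-1/32})$ time directly. No rebalancing and no auxiliary ``difference'' set systems are needed. Your fallback of settling for \emph{some} subquadratic exponent would not prove the stated bound anyway.
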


This sharply contrasts with the conditional lower bound by Bringmann \etal~\cite{Bringmann2022-me} who showed that computing the diameter of an intersection graph of axis-aligned segments that are \emph{not} in general position requires $\Omega(n^{2-\eps})$ time for any $\eps > 0$.

\subsection{String Graphs of Bounded Diameter and Chromatic Number}\label{subsec:bounded-chromatic}

In this section, we prove the following VC-dimension bound for string graphs with bounded chromatic number and bounded diameter.
\begin{theorem}\label{thm:string-bounded} Let $G = (V,E)$ be a string graph with chromatic number $\chi$ and diameter $\Delta$. Then $(V, \{N^{r}[v]\}_{v\in V, r\geq 0})$ has VC-dimension at most $O(\chi^{\Delta+1})$.
\end{theorem}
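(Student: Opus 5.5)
We follow the plan sketched in the overview: decompose the ball system into ``typed'' balls indexed by color sequences, bound the VC-dimension of each typed system by a constant using the color-alignment argument of \Cref{thm:string-bipartite}, and combine using \Cref{lm:vc-union}.

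\textbf{Setup and decomposition.} Fix a proper coloring $c\colon V\to[\chi]$. For a color sequence $S=(s_1,\dots,s_\ell)$ with $\ell\le \Delta+1$ and consecutive entries distinct, define
\[
B_S(v)=\{u\in V: \text{there is a path } v=w_\ell,\dots,w_1=u \text{ whose color sequence, read from } u, \text{ is a subsequence of } S\}.
\]
We consider subsequences in the sense of a monotone embedding of positions, so any path of length at most $|S|-1$ following the pattern is allowed.

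Because the diameter is $\Delta$, only radii $r\le\Delta$ matter. Every $u\in N^r[v]$ has a shortest path to $v$ whose color sequence $T$ has length at most $r+1$. Hence
\[
N^r[v]=\bigcup_{T}\{u: \text{there is a shortest-path-like walk from } v \text{ to } u \text{ with colors exactly } T\}.
\]
This union has up to $\chi^{\Delta+1}$ terms, each contributing a constant, and is where the $O(\chi^{\Delta+1})$ bound arises. The cleanest formulation is that $N^r[v]$ equals the union, over all color sequences $T$ of length exactly $r+1$, of the sets $B_T(v)$. This equality holds because a path with color sequence a subsequence of $T$ has length at most $r$, and conversely a shortest path's color sequence can be padded to length $r+1$. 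Padding requires care: we allow repeated colors in $T$, or let a subsequence skip positions, so that shorter paths are included. There are at most $\chi^{\Delta+1}$ such $T$ for each $r$. Since $\sum_r \chi^{r+1}=O(\chi^{\Delta+1})$, \Cref{lm:vc-union} applied to the family of unions gives VC-dimension at most $O(\chi^{\Delta+1})\cdot d$, provided each system $(V,\{B_T(v)\}_{v\in V})$ for a fixed $T$ has VC-dimension at most a constant $d$. We must also be careful that \Cref{lm:vc-union} applies to the union of fixed families, so we treat $\{N^r[v]\}$ for each fixed $r$ as a subfamily of $\bigcup_T \{B_T(v)\}_v$, and take a union over $r\le\Delta$ at an additive cost.

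\textbf{Fixed type.} This is the main step. We show that for a fixed $T$, the system $\{B_T(v)\}_v$ shatters no set of $5\chi$ vertices; a sharper count is not needed. Suppose such a set is shattered. Pigeonhole gives $5$ shattered vertices $a_1,\dots,a_5$ of the same color. For each pair, choose $v_{ij}$ with $B_T(v_{ij})\cap X=\{a_i,a_j\}$, and fix a witnessing path from $a_i$ to $v_{ij}$ whose color sequence embeds into $T$, recording the position $k$ in $T$ of each vertex on it. As before, Hanani--Tutte yields crossing canonical curves for $\pi(a_i,v_{ij})$ and $\pi(a_c,v_{cd})$ with four distinct indices. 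This gives vertices $x_i,x_c$ on them that are equal or adjacent, occupying positions $k_i$ and $k_c$ in $T$; without loss of generality $k_i\le k_c$.

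\textbf{Alignment and expected obstacle.} If $x_i=x_c$, the prefix of the $a_i$-path up to $x_i$ followed by the suffix of the $a_c$-path after $x_c$ uses positions forming an increasing sequence in $T$. Thus $a_i\in B_T(v_{cd})$, a contradiction. If $x_i\ne x_c$ are adjacent, they have different colors. Since $x_c$'s color is $T[k_c]$ and $x_i$'s is $T[k_i]$, we get $k_i\neq k_c$, so $k_i<k_c$. The concatenation $a_i\to x_i\to x_c\to v_{cd}$ then embeds with positions $\dots,k_i,k_c,\dots$, which are strictly increasing, so again $a_i\in B_T(v_{cd})$. The delicate point, and the main obstacle, is making the position bookkeeping consistent. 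We must ensure that witnessing paths for $B_T$ exist that are simple and have well-defined embedding positions, and that the endpoint convention is uniform across paths: all paths are read from the $a$-side starting at position $1$ of $T$ (or at the earliest position carrying that color), which is why all $a$'s are taken with the same color. We must also check that restricting to subsequences, rather than exact color patterns, is exactly what makes the spliced walk legal, and that restricting radius to at most $\Delta$ does not lose sets; the latter holds since $N^r[v]=V$ for $r\ge\Delta$.
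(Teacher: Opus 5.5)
Your overall approach matches the paper's proof. You use typed balls indexed by color sequences closed under taking subsequences, a Hanani--Tutte plus color-alignment argument bounding each typed system, and \Cref{lm:vc-union} to combine the $O(\chi^{\Delta+1})$ types. Writing $N^r[v]=\bigcup_{|T|=r+1}B_T(v)$ for each $r$ and then taking a union over $r\le\Delta$ is a harmless variant of the paper's $\mathcal{B}_G\subseteq\bigcup_{S\in\mathcal{S}}\mathcal{B}_S$. Reading color sequences from $u$ instead of from the center changes nothing.

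There is one concrete gap: as written, you do not get the claimed exponent. You pigeonhole a shattered set of size $5\chi$ down to five same-colored vertices. So your per-type bound is $d\le 5\chi-1$, which is not a constant. Plugged into your own count it gives $O(\chi^{\Delta+1})\cdot d=O(\chi^{\Delta+2})$, not $O(\chi^{\Delta+1})$. Your remark ``a sharper count is not needed'' is therefore false for the stated bound.

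The fix is to drop the pigeonhole, because your splicing argument never uses the colors of the $a$'s. Your definition of $B_T$ does not force a witnessing path to start at position $1$ of $T$; its colors may embed starting anywhere. The splice only needs two things:
\begin{itemize}
\item the prefix of the $a_i$-path uses positions $\le k_i$;
\item the remainder of the $a_c$-path uses positions $>k_c\ge k_i$ in the equal case, or $\ge k_c>k_i$ in the adjacent case.
\end{itemize}
In the adjacent case $k_i\neq k_c$ holds because $T[k_i]=c(x_i)\neq c(x_c)=T[k_c]$. Hence any five shattered vertices already give a contradiction, and each typed system has VC-dimension at most $4$, exactly as in \Cref{lm:VCbound-type}.

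The same-side restriction in \Cref{thm:string-bipartite} was needed only because that proof compared raw path positions along exact alternating sequences via parity. Embedding into a fixed $T$ replaces that parity argument, so you should not carry the restriction over.

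The other points you flag as obstacles are routine. A spliced walk can be shortcut to a simple path. Shortcutting deletes a contiguous block of colors, so the resulting color sequence is still a subsequence of $T$.
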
 

The key insight into proving \Cref{thm:string-bounded} is to decompose  $(V, \{N^{r}[v]\}_{v\in V, r\geq 0})$  into set systems of \EMPH{different types}, and show that each type has bounded VC-dimension. First, we introduce the types. 
Let \EMPH{$\mathcal{S}$} be the set of sequences of length at most $\Delta+1$ where each element of a sequence is a color in $[\chi]$. Observe that $|\mathcal{S}|  = O(\chi^{\Delta+1})$.

We will use the notation $P(v\rightarrow u)$ to denote a directed path obtained from an undirected path $P(v,u)$ by directing all the edges away from $v$. If $a$ and $b$ are two vertices in $P(v\rightarrow u)$ such that $b$ is reachable from $a$ on the path, then we denote by $P(v\rightarrow u)[a,b]$ the subpath from $a$ to $b$. We consider a proper coloring of the vertices in $G$ with at most $\chi$ different colors. The colors of the vertices in $P(v\rightarrow u)$ induce a sequence of colors, say $S$. We say that  $P(v\rightarrow u)$ is an \EMPH{$S$-rainbow path} from $v$ to $u$. If a sequence of colors $S'$ is a (not necessarily contiguous) subsequence of $S$, we write \EMPH{$S' \sqsubseteq S$}.  For every sequence $S\in \mathcal{S}$, let
\begin{equation}
    \EMPH{$B_S(v)$} \coloneqq \{ u\in V: \text{ $\exists S'\sqsubseteq S$ and an $S'$-rainbow path $P(v\rightarrow u)$} \}
\end{equation}

We show that these carefully defined balls have bounded VC-dimension using a generalization of the bipartite argument.
\begin{lemma}\label{lm:VCbound-type} $\vcdim(V, \{B_{S}(v)\}_{v\in V})\leq 4$ for any sequence of colors $S$. 
\end{lemma}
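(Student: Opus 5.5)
The plan is to run the $K_5$-avoidance argument from the proof of \Cref{thm:string-bipartite}, replacing the alternating-color bookkeeping by an embedding of each path's color sequence into the fixed sequence $S$. Suppose for contradiction that $X=\{a_1,\dots,a_5\}$ is shattered by $\{B_S(v)\}_{v\in V}$. For each pair $i\neq j$, fix a vertex $v_{ij}$ with $B_S(v_{ij})\cap X=\{a_i,a_j\}$. Fix also paths $P_{ij}=P(v_{ij}\rightarrow a_i)$ and $P_{ji}=P(v_{ij}\rightarrow a_j)$ that are $S'$-rainbow for some $S'\sqsubseteq S$.

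For each such path we fix an embedding of its color sequence into $S$. This assigns to every vertex $x$ of the path a position $\mathrm{pos}(x)\in[|S|]$ with $S[\mathrm{pos}(x)]$ equal to the color of $x$, and the positions are strictly increasing along the path away from $v_{ij}$. We will use a simple closure fact: if a walk from $v$ to $u$ has vertices with strictly increasing positions whose colors match $S$ at those positions, then $u\in B_S(v)$. Indeed, shortcutting repeated vertices gives a path whose color sequence is a subsequence of the walk's sequence, hence still $\sqsubseteq S$.

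As in the bipartite proof, choose a representative point on each string. The curves $\sigma_{P_{ij}}\circ\sigma_{P_{ji}}$ from $p_{a_i}$ to $p_{a_j}$ then give a drawing of $K_5$. By the Hanani–Tutte theorem two independent edges cross, so after renaming, the canonical curves of $P_{ij}$ and $P_{cd}$ intersect for four distinct indices $i,j,c,d$. Hence there are vertices $x$ on $P_{ij}$ and $y$ on $P_{cd}$ with either $x=y$ or $xy\in E$. Let $p=\mathrm{pos}(x)$ in the embedding of $P_{ij}$ and $q=\mathrm{pos}(y)$ in that of $P_{cd}$.

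We now do the case analysis.
\begin{itemize}
\item If $x=y$ and $q\le p$: follow $P_{cd}$ from $v_{cd}$ to $y$ (positions $\le q$), then $P_{ij}$ from $x$ to $a_i$ (positions $>p\ge q$). This gives $a_i\in B_S(v_{cd})$, a contradiction. The subcase $p<q$ is symmetric and gives $a_c\in B_S(v_{ij})$.
\item If $xy\in E$: the coloring is proper, so $S[p]\neq S[q]$ and hence $p\ne q$. Say $p<q$. Follow $P_{ij}$ from $v_{ij}$ to $x$ (positions $\le p$), step along the edge to $y$ (position $q>p$, which matches $y$'s color), then follow $P_{cd}$ from $y$ to $a_c$ (positions $>q$). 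This gives $a_c\in B_S(v_{ij})$, a contradiction. The case $q<p$ is symmetric.
\end{itemize}
Hence no $5$ vertices are shattered, and the VC-dimension is at most $4$.

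The main obstacle I expect is the edge case $xy\in E$. There the key point is that properness of the coloring forbids $p=q$, so the two embeddings can be spliced through the edge with strictly increasing positions. This is exactly what replaces the parity-alignment step of the bipartite proof. A lesser point is making sure the splice yields a genuine rainbow path rather than a walk, which the shortcutting remark handles.
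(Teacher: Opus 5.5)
Your proof is correct and follows essentially the same route as the paper. Both use the $K_5$/Hanani--Tutte setup and embed each path's color sequence into $S$ (the paper's $S$-indices $I^i_S$, $I^c_S$ are your positions). Both then splice the two paths either at the shared vertex, oriented by comparing positions, or across the edge $xy$, where properness of the coloring forces the two positions to differ. Your shortcutting remark, which turns the spliced walk into a genuine path, is a small detail the paper leaves implicit.
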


\begin{proof} We follow the setup  (and proof technique) of \Cref{thm:string-bipartite}. 
In particular, $X = \{a_1,\ldots, a_5\}$ is the set of vertices shattered by $\set{B_S(v)}_{v\in V}$, and $v_{ij}$ is such that $X\cap B_{S}(v_{ij}) = \{a_i,a_j\}$. Then, there exist four distinct indices $i,j,c,d\in [5]$ inducing 
an $S_i$-rainbow path $P_i$ with canonical curve $\sigma(v_{ij}, a_i)$ and 
an $S_c$-rainbow path $P_c$ with canonical curve $\sigma(v_{cd}, a_c)$ which cross. 
Note that $S_i \sqsubseteq S$ and $S_c \sqsubseteq S$. 
We show that the following facts result in a contradiction.
  \begin{enumerate}
    \item There are two vertices $x_i$ corresponding to a string on the canonical curve $\sigma(v_{ij}, a_i)$ and $x_c$ corresponding to a string on $\sigma(v_{cd},a_c)$ where either $x_i = x_c$ or $x_i$ intersects $x_c$ (i.e. $(x_i,x_c) \in E$).
      \item  $a_i\not\in B_{S}(v_{cd})$ and $a_c\not\in  B_{S}(v_{ij})$.
  \end{enumerate}

As $S_i$ is a subsequence of $S$, there exist some mapping $I_S^i(k)$ for $k\in \{1,\dots, |S_i|\}$ such that if $I^i_S(k) = h$ then $S_i[k] = S[h]$, and for all $\ell_1 < \ell_2$, we have that $I_S^i(\ell_1) < I_S^i(\ell_2)$ (this is by the formal definition of subsequence). 
We call $I_S^i(k)$ the \EMPH{$S$-index} of $S_i[k]$. We similarly define the $I_S^c(k)$ to be the $S$-index of $S_c[k]$. 
Let $k_i$ denote the index of $x_i$ in the path $P_i$ and $k_c$ for the index of $x_c$ in $P_c$. Without loss of generality, assume that $I^i_S(k_i) \le I_S^c[k_c]$.

Now consider if $x_i = x_c$. There must exist a path from $v_{ij}$ to $x_i=x_c$ to $a_c$ with the sequence $S' =S_i[1\dots k_i]\circ S_c[(k_c+1)\dots|S_c|]$. 
Furthermore, $S_i[1\dots k_i]\sqsubseteq S[1\dots I^i_S(k_i)]$ and $S_c[(k_c+1) \dots |S_c|] \sqsubseteq S[(I_S^c[k_c]+1)\dots|S|]$.
Since $I^i_S(k_i) \le I_S^c[k_c]$, then it follows that 
$S'\sqsubseteq S$ which is a contradiction as this would mean $a_c \in B_S(v_{ij})$.

On the other hand suppose there is an edge between $x_i$ and $x_c$. 
Note that as $x_i$ and $x_c$ have different colors, we must have $I^i_S(k_i)\neq I_S^c(k_c)$, meaning that $I^i_S(k_i)< I_S^c(k_c)$ and as $S$-indices take on integer values, $I^i_S(k_i)+1\le I_S^c(k_c)$.
There must exist a path from $v_{ij}$ to $x_i=x_c$ to $a_c$ with the sequence $S''=S_i[1\dots k_i]\circ S_c[k_c\dots|S_c|]$. 
As $S_i[1\dots k_i]\sqsubseteq S[1\dots I^i_S(k_i)]$ and $S_c[k_c \dots |S_c|] \sqsubseteq S[I_S^c[k_c]\dots|S|]\sqsubseteq S[(I_S^i[k_i]+1)\dots|S|]$, this shows that $S''\sqsubseteq S$. 
This means $a_c\in B_S(v_{ij})$ which is a contradiction.
\end{proof}

\begin{proof}[Proof of \Cref{thm:string-bounded}] Since the diameter of $G$ is $\Delta$, every shortest path $P(u\rightarrow v)$ from $u$ to $v$ induces a sequence of at most $\Delta+1$ colors.
Recall that $\mathcal{S}$ is the set of all sequences of colors of length at most $\Delta+1$, so $|\mathcal{S}| \leq O(\chi^{\Delta+1})$.
Let $\mathcal{B}_S = \{B_{S}(v)\}_{v\in V}$ and $\mathcal{B}_G = \{N^{r}[v]\}_{v\in V, r\geq 0}$. Observe that $\mathcal{B}_G\subseteq\bigcup_{S\in \mathcal{S}}\mathcal{B}_S$, and hence
\begin{equation*}
 \vcdim(V,\mathcal{B}_G) \overset{\text{Lem.~\ref{lm:vc-union}}}{\leq}  \sum_{S\in \mathcal{S}}\vcdim(V,\mathcal{B}_S) \overset{\text{Lem.~\ref{lm:VCbound-type}}}{\leq}  4|\cS| = O(\chi^{\Delta+1}). 
\end{equation*}
\end{proof}

\subsection{Intersection Graphs of Line Segments with Few Slopes}\label{subsection:slope}

Recall that we say that a collection of line segments is \EMPH{$h$-slope} if the slopes of the line segments have $h$ distinct values. To show the following theorem,
we will again use the Hanani-Tutte theorem, but now we must allow overlaps for the curves. We say that two curves cross if they cross in the traditional sense after some sufficiently small local perturbations, as in~\cite{FulekK18}.

\begin{theorem}\label{thm:segments} Let $G = (V,E)$ be an intersection graph of $h$-slope line segments with diameter  $\Delta$. Then $(V, \{N^{r}[v]\}_{v\in V, r\geq 0})$ has VC-dimension at most $O(h^{\Delta+1})$.
\end{theorem}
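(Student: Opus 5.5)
We follow the template of \Cref{thm:string-bounded}, with the slope of a segment playing the role of its color. Color each segment by its slope class, a color in $[h]$. This coloring is generally not proper, since two segments of the same slope may overlap collinearly. Let $\mathcal{S}$ be the set of color sequences of length at most $\Delta+1$, so $|\mathcal{S}|=O(h^{\Delta+1})$. Define $B_S(v)$ exactly as before: the vertices reachable from $v$ by an $S'$-rainbow path with $S'\sqsubseteq S$. Every shortest path has at most $\Delta+1$ vertices and its color sequence lies in $\mathcal{S}$. Hence $\{N^r[v]\}\subseteq\bigcup_{S}\{B_S(v)\}$, and by \Cref{lm:vc-union} it suffices to show $\vcdim(V,\{B_S(v)\}_v)\le 4$ (or any constant) for each fixed $S$.

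For a fixed $S$, suppose $a_1,\dots,a_5$ are shattered, and choose witnesses $v_{ij}$ with rainbow paths to $a_i$ and $a_j$ whose sequences are subsequences of $S$. We draw $K_5$ using canonical curves that run along the segments. Since segments can overlap, the curves may share portions, so we invoke the Hanani--Tutte theorem in the perturbation sense of~\cite{FulekK18}. This yields two independent edges whose curves cross oddly, and hence paths $P_i$ (from $v_{ij}$ to $a_i$) and $P_c$ (from $v_{cd}$ to $a_c$) that contain segments $x_i\in P_i$ and $x_c\in P_c$ which are equal or intersect.

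The argument of \Cref{lm:VCbound-type} then handles two cases as before. If $x_i=x_c$, or if $x_i,x_c$ intersect and have different colors, then $I^i_S(k_i)\neq I^c_S(k_c)$. Taking without loss of generality $I^i_S(k_i)<I^c_S(k_c)$, the splice $S_i[1..k_i]\circ S_c[k_c..]$ is a subsequence of $S$. This puts $a_c\in B_S(v_{ij})$, or symmetrically $a_i\in B_S(v_{cd})$, a contradiction.

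\textbf{Main obstacle.} The remaining case is that $x_i$ and $x_c$ have the same slope and overlap collinearly, with $I^i_S(k_i)=I^c_S(k_c)$. Splicing at the edge $x_ix_c$ would repeat the color $S[h]$, which need not be allowed by $S$.

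I would try first the hint from the overview. Because the curves genuinely cross rather than merely touch, after perturbation a neighbor on one path must pass across the overlapping collinear pair. Concretely, $P_c$ arrives at $x_c$ from its predecessor $y_c$ (or leaves to its successor), and the crossing is transverse. So either $y_c$ or the successor $z_c$ of $x_c$ meets $x_i$. The same holds symmetrically with the roles of the two paths swapped.

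If $z_c$ meets $x_i$, splice $P_i[..x_i]$ with $P_c[z_c..]$. The $S$-index of $z_c$ is greater than $I^c_S(k_c)=I^i_S(k_i)$, so the splice is a subsequence of $S$. This gives $a_c\in B_S(v_{ij})$.

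Otherwise the crossing is attributed to the predecessor side. In that case I would use the symmetric option: a neighbor of $x_i$ meets $x_c$, and I splice toward $v_{cd}\to a_i$ after swapping the roles of $i$ and $c$.

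The delicate point is proving that some such neighbor really exists. The planned argument is geometric. Two collinear overlapping segments have a union that is a segment, and a curve crossing the other curve transversally must enter and exit through segments of different slopes. Any such segment touching the overlap region touches both $x_i$ and $x_c$. This needs a careful choice of the crossing point: take an extremal overlap in the perturbation, and treat endpoints of a path (where there is no predecessor or successor) separately. The per-type VC-dimension may then come out as a slightly larger constant, which still gives the claimed $O(h^{\Delta+1})$ bound.
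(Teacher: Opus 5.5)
Your overall route is the paper's. You colour by slope, decompose into the types $B_S$, and reuse the splice of \Cref{lm:VCbound-type}. In the one new case (distinct collinear $x_i,x_c$ with $I^i_S(k_i)=I^c_S(k_c)$) you replace one of the two segments by a path-neighbour, whose $S$-index differs. However, the case analysis you give for that last step has a hole as written.

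You only splice through a \emph{successor}. If $z_c$ does not meet $x_i$, you switch to ``a neighbour of $x_i$ meets $x_c$'' and swap roles, which again needs the successor $z_i$. But the neighbour of $x_i$ that meets $x_c$ can be $y_i$ alone. Concretely, put $x_i=[0,2]\times\{0\}$ and $x_c=[1,3]\times\{0\}$. Attach vertical neighbours on $x_i$ at $x$-coordinates $1.8$ ($y_i$) and $0.5$ ($z_i$), and on $x_c$ at $1.5$ ($y_c$) and $2.5$ ($z_c$), with $y_i$ and $y_c$ above the line. The curves share the piece $[1.5,1.8]\times\{0\}$ and cross exactly once after perturbation, yet the only contacts are $y_c\cap x_i$ and $y_i\cap x_c$. (The blanket claim that ``the same holds symmetrically'' is also false in general. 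For example, take $x_c\subset x_i$ with both neighbours of $x_i$ attached outside $x_c$ and the two neighbours of $x_c$ leaving on opposite sides of the line.)

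The missing observation is that a predecessor contact needs no detour. If $y_c$ meets $x_i$, then $S_c[1..k_c-1]\circ S_i[k_i..|S_i|]\sqsubseteq S$. The first part uses $S$-indices below $I^c_S(k_c)=I^i_S(k_i)$ and the second uses indices at least that, so $a_i\in B_S(v_{cd})$. Likewise, $y_i$ meeting $x_c$ gives $a_c\in B_S(v_{ij})$ via $S_i[1..k_i-1]\circ S_c[k_c..|S_c|]$. Hence \emph{every} contact between a path-neighbour of one segment and the other segment is a contradiction. This is exactly what the paper uses when it says the predecessor/successor ``does not have identical $S$-index''.

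Separately, your assertion that $x_i=x_c$ forces $I^i_S(k_i)\neq I^c_S(k_c)$ is false. Handle that case as in the lemma, with $\le$ and the splice $S_i[1..k_i]\circ S_c[(k_c+1)..|S_c|]$ that skips the shared vertex.

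With this repaired, what remains is the existence of some such contact. You only plan this step, and it is what the paper's proof is devoted to. The paper distinguishes three situations for the two collinear segments:
\begin{itemize}
\item nested: a neighbour of the inner one meets the outer one;
\item partially overlapping: either a neighbour meets the other segment, or the curves can be perturbed apart, so Hanani--Tutte would not report this crossing;
\item disjoint: there is no crossing.
\end{itemize}
A shorter way to finish your sketch is as follows. Draw each canonical curve along a segment only between its two transition points. Then the curves meet at a point of pieces $J_i\subseteq x_i$ and $J_c\subseteq x_c$ that are intervals on the common line. An endpoint of the nonempty interval $J_i\cap J_c$ is an endpoint of $J_i$ or of $J_c$, i.e.\ a transition point. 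The neighbour entered there contains that point, which lies on both $x_i$ and $x_c$. Only pieces ending at a terminal point $p_a$ need the separate care you mention; the paper is equally brief there. This gives the per-type bound $4$ and, through \Cref{lm:vc-union}, the $O(h^{\Delta+1})$ bound.
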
 

\begin{proof}
It is tempting to directly apply \Cref{thm:string-bounded} to prove \Cref{thm:segments}, but a set of $h$-slope line segments may be degenerate, so two line segments of the same slope can intersect, and therefore, the graph is not $h$-colorable. However, we can still create a (non-proper) coloring of the geometric intersection graph $G$ where each line segment is colored by its slope, and define $S$-rainbow paths for sequences of colors $S$ as before. The only issue occurs in the proof of \Cref{lm:VCbound-type}.
If two canonical paths considered share any vertices we can apply the argument in \Cref{lm:VCbound-type}.

It suffices to show that we can always find two adjacent vertices that correspond to different $S$-indices.
Unfortunately, the intersection guaranteed by 
the Hanani-Tutte theorem may consists of two canonical curves $\sigma(a_i,v_{ij})\circ \sigma(v_{ij}, a_j)$ and $\sigma(a_c, v_{cd}) \circ \sigma(v_{cd}, a_d)$ that cross at two distinct edges (vertices of the intersection graph) $x_{ij}$ and $x_{cd}$ with the same $S$-index, meaning that $(x_{ij},x_{cd})\in E$ and have the same slope as line segments.
By rotating the plane, we may assume that $x_{ij}$ and $x_{cd}$ are horizontal line segments.
Note that both canonical curves consists of at least two vertices. 
If $x_{ij}$ (or $x_{cd}$) intersects with the predecessor/successor of $x_{cd}$ (or $x_{ij}$) on their corresponding canonical curve, we would be done (since that predecessor/successor vertex does not have identical $S$-index) by the same argument of \Cref{lm:VCbound-type}. The remainder of this proof is dedicated to showing that any such intersection given by the Hanani-Tutte theorem also guarantees an intersection between a predecessor/successor vertex.

\begin{figure}[t]
    \centering
    \includegraphics[page=1]{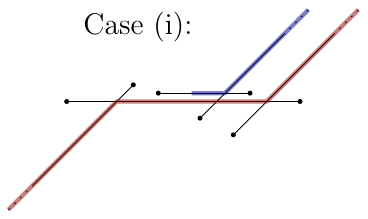}
    \includegraphics[page=2]{fig/segment_intersection2.pdf}
    \includegraphics[page=3]{fig/segment_intersection2.pdf}
    \includegraphics[page=4]{fig/segment_intersection2.pdf}
    \caption{Cases for the proof of \Cref{thm:segments}. The blue and red paths indicate the canonical curves. In all cases, the horizontal segments 
    lie on the same line, but have been drawn with a small gap in between for clarity. The predecessor/successor curves may not exist, or may also be horizontal, but we have drawn them as non-horizontal (if they exist) for clarity.}
    \label{fig:segment_intersection}
\end{figure}

See \Cref{fig:segment_intersection} for illustrations of how the canonical curves can intersect at two vertices that correspond to horizontal line segments, where a blue canonical curve can intersect with a red canonical curve at two horizontal line segments. In the figure, the predecessor/successor edges can be either horizontal or non-horizontal. We illustrate them as non-horizontal for clarity, as only the location of intersection with the predecessor/successor edges matters. 
The cases reflect how the horizontal intervals on the two canonical curves can intersect, in case (i) the red and blue horizontal segments are nested, in cases (ii) and (iv) they intersect but do not nest, and in case (iii) they are completely disjoint.
In case (i), the blue horizontal segment is nested in the red horizontal segment, and thus intersection of the blue predecessor/successor with the red horizontal segment is unavoidable. 
In case (iii) the horizontal segments of the red and blue curves are not intersecting at all, which is not a valid intersection that the Hanani-Tutte theorem can give us.
In case (ii) and (iv) the blue and red horizontal segments intersect, but are not nested. In these cases either a predecessor/successor segment intersects the horizontal segment of the other as in case (ii), or they can be separated as curves in case (iv), and are not a valid intersection that is given by the Hanani-Tutte theorem (they can be perturbed to be non-intersecting).
\end{proof}

We would like to apply \Cref{thm:segments} to the framework of \cite{ChanCGKLZ25} (more precisely \Cref{thm:diameter}) to obtain a result for \Diameter-$\Delta$ of line segments with $h$ slopes. However, there are technical issues with implementing an efficient data structure for line segments with $h$ slopes. Indeed, even for a collection of axis-aligned segments (i.e., with two slopes), we do not see how to implement the colored range searching data structure efficiently. We resolve these technical issues in \Cref{ap:h-slope-line-seg} by carefully computing neighborhood balls in the order of increasing length of sequences to reuse prior computation and ensure that the data structure problem involves exactly one type of slope at a time.
\begin{theorem} \label{thm:h-slope-line-seg}
\Diameter-$\Delta$ of line segments with at most $h$ different slopes can be solved in $n^{2-1/O(h^{\Delta+1})}$ time.  
\end{theorem}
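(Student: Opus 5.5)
The plan is to combine the VC-dimension bound of \Cref{thm:segments} with the algorithmic framework of \cite{ChanCGKLZ25}. Recall how that framework runs. It fixes a spanning path or tree and walks along it. It maintains the balls $N^r[v]$ for $r\le\Delta$ in a compressed form, as differences with respect to a nearby vertex. The total cost is governed by the VC-dimension $d$ through the Haussler packing / spanning-path-with-low-crossing-number bound, which yields running time $n^{2-1/O(d)}$ provided one has a data structure for the one-step expansion. That data structure must take a set $U$ of segments, given implicitly as a union of canonical pieces, and compute $N[U]$ (or test membership/emptiness) in near-linear or sublinear time per operation.

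Instead of balls, I will work with the typed balls $B_S(v)$ from the proof of \Cref{thm:segments}, one family per color sequence $S\in\cS$ of length at most $\Delta+1$. Here a color is a slope. By \Cref{thm:segments} and its proof, each family $\{B_S(v)\}_v$ has VC-dimension $O(1)$. Moreover $N^r[v]=\bigcup_{|S|\le r+1}B_S(v)$, so deciding whether every pair is within distance $\Delta$ reduces to computing these typed balls for all $v$.

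The key structural step is that $B_S(v)$ obeys a recurrence on prefixes. Write $S=S'c$. Then $B_S(v)=B_{S'}(v)\cup\{u \text{ of slope } c : u \text{ intersects some } w\in B_{S'}(v)\}$, up to the subsequence closure, which is handled by also taking the union over the subsequences $S''\sqsubseteq S'$, all of which are shorter. I will therefore process sequences in order of increasing length, reusing the earlier computations. Each step then only requires an intersection query against segments of a single fixed slope $c$, namely: report or mark all slope-$c$ segments hitting a given set of segments. After an affine map making slope $c$ horizontal, this becomes orthogonal-type intersection searching (horizontal segments versus segments of one other fixed slope, or collinear overlaps, which are 1D interval queries). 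Multilevel range trees make each query polylogarithmic after near-linear preprocessing. Plugging this into the difference-maintenance scheme of \cite{ChanCGKLZ25} for each of the $|\cS|=O(h^{\Delta+1})$ families, and using \Cref{lm:vc-union} for the final union, gives total VC-dimension $d=O(h^{\Delta+1})$ and running time $n^{2-1/O(d)}=n^{2-1/O(h^{\Delta+1})}$.

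I expect the main obstacle to be the data-structure step. Mixed-slope colored range searching, where a query asks which segments intersect any segment from a dynamically changing set, does not appear to be efficiently implementable directly. That is why the computation has to be reorganized so that each update expands by exactly one slope at a time. The delicate point is checking that the symmetric-difference updates along the low-crossing path remain bounded when the balls are built incrementally through prefixes. My first attempt is to bound the differences of each intermediate family $\{B_{S'}(v)\}$ separately via its own VC-dimension bound, and then charge each expansion step to those differences.
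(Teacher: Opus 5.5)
\textbf{There is a genuine gap, in the expansion step.} Your ingredients match the paper's: typed balls $B_S(v)$ with slopes as colors, $N^r[v]=\bigcup_{|S|\le r+1}B_S(v)$, constant VC-dimension per family, and sequences processed by increasing length. But the step that actually yields a subquadratic bound is left open, and in the form you set it up it does not go through.

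Your prefix recurrence is $B_{S'c}(v)=B_{S'}(v)\cup\{u:\chi(u)=c,\ N(u)\cap B_{S'}(v)\neq\emptyset\}$. This makes the \emph{query} the whole set $B_{S'}(v)$, which contains segments of every slope occurring in $S'$.
\begin{itemize}
\item Range trees answer one query segment in polylogarithmic time. Running them over the members of $B_{S'}(v)$, or even listing the new members output-sensitively, costs $\Theta(|B_{S'}(v)|)$ per center and $\Theta(n^2)$ overall. The new ball must instead be produced directly as a short list of $\lambda$-intervals.
\item The difference-maintenance rescue does not repair this as stated. Inserting $w$ into the current set is fine: report and remove the still-uncovered slope-$c$ segments hitting $w$. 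Deleting $w$, however, requires reporting the slope-$c$ segments that just lost their \emph{last} neighbor in the set. That is a dynamic coverage/depth problem over ranges coming from several different slopes, and you give no polylogarithmic update-and-report structure for it.
\item The framework of \cite{ChanCGKLZ25} never grows balls by querying with a set. Its data-structure requirement is a single-object RIS / interval-cover query.
\end{itemize}

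\textbf{The missing idea is to grow the ball at the center, not at the far end.} For $\chi(v)=S[1]$ the paper uses
$B_S(v)=B_{S[1]\circ S[3:s]}(v)\cup\bigcup_{w\in N[v],\,\chi(w)=S[2]}B_{S[2:s]}(w)$.
Now the query is the single segment $v$. The stored objects are the slope-$S[2]$ segments $w$, each labeled with its already computed $\Rep_\lambda(B_{S[2:s]}(w))$. The input intervals are $\Rep_\lambda(B_{S[1]\circ S[3:s]}(v))$. This is Extended Interval Searching (Problem~\ref{prob:DS0}). It reduces via Interval Cover to RIS over segments of one slope queried by a single segment, which orthogonal range searching handles.

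So the single-slope restriction must sit on the stored side, on the neighbors being unioned over, because that is where RIS needs it. In your formulation it sits on the target side, where it does not help. If you ran your path walk offline, treating each $w$ as present in $B_{S'}(\cdot)$ over a set of path positions, you would reach this same problem in transposed form. The stored objects would then have all slopes of $S'$, so you would still have to split on the color of the vertex adjacent to the expanded end, as the paper's equation does.

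Finally, all balls must be represented in one common stabbing order $\lambda$ for the combined system $\{B_S(v)\}_{v,S}$, not in per-family orders as your last sentence suggests, because the recurrence merges balls of different sequences. That combined system's (dual) VC-dimension $d=O(h^{\Delta+1})$ is what enters the exponent. The cost $\OO(bN+L/b)$, with $N=\OO(n^{2-1/d})$ total intervals, $L=O(n^2)$ total length and $b=n^{1/(2d)}$, gives $\OO(n^{2-1/(2d)})$.
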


\section{Almost-Linear Time Constant-Diameter Algorithm for 2D Unit Squares}\label{sec:unitsq:diam2}

In this section, we present an $O^*(n)$-time algorithm for testing whether the diameter of a 2D unit-square graph is at most $\Delta$, for any given constant $\Delta$.  The precise time bound is $n2^{O(\sqrt{\log n\log\log n})}$, interestingly.
All squares in this section are axis-aligned.  Our result significantly extends Bringmann et al.'s near-linear algorithm for unit-square graphs~\cite{Bringmann2022-me}, which worked only for $\Delta=2$. It also significantly improves the $\OO(n^{7/4})$ time bound of Duraj, Konieczny, and Pot\k epa's algorithm~\cite{duraj2023better} in the constant $\Delta$ case;
since Duraj et al.'s approach is based on distance VC-dimension, it does not seem possible to improve their time bound to near linear (the VC-dimension or shatter dimension for unit squares is at least 2 even for 1-neighborhoods!).

Our new algorithm employs an unusual divide-and-conquer approach:
\begin{enumerate}
\item Often in computational geometry, we divide based on (say) the median $x$- or $y$-coordinate,
but in our new algorithm, we will divide based on coordinate values modulo 1.
\item We will use a larger fan-out rather than dividing 2-way.
\item We will repeatedly map input points to new points in a much higher dimension (dependent on $\Delta$),
constructed in a nontrivial way, to enable the usage of orthogonal range searching techniques.
\end{enumerate}
Even though the diameter problem may not at first seem amenable to divide-and-conquer (not being a ``decomposable'' problem),
we show that a combination of these ideas is powerful enough to solve the problem efficiently
for any constant $\Delta$ (unlike the far more straightforward approach by Bringmann et al.\ for $\Delta=2$).

For a point $p\in\R^2$, denote by  $\cube{p}$ the unit square centered at $p$.
Let $x(p)$ and $y(p)$ denote the $x$- and $y$-coordinate of a point $p$ respectively.  
To keep the presentation simple, we assume that all coordinate values are distinct.  It is 
straightforward to modify the algorithm below to handle cases when some values may be equal, just by being more careful about $<$ vs.\ $\le$ signs and open vs.\ closed intervals.  (This is unlike our other results on axis-aligned line segments, where degeneracies can make a big difference.)
We will solve the problem in a slightly more general setting for $\Delta+1$ point sets $P_0,\ldots,P_\Delta$: testing whether all distances between $\cube{p_0}$ and $\cube{p_\Delta}$ for $(p_0,p_\Delta)\in P_0\times P_\Delta$
are exactly $\Delta$ in the $(\Delta+1)$-partite intersection graph of the unit squares centered at $P_0,\ldots,P_\Delta$,
i.e., the subgraph of the intersection graph where we include an edge between $\cube{p}$ and $\cube{q}$ only when $p\in P_j$ and $q\in P_{j+1}$ for some $j\in\{0,\ldots,\Delta-1\}$.
The original problem reduces to the case when $P_0,\ldots,P_\Delta$ are identical sets (or almost identical, if we want to ensure general position).

Build a uniform grid over $\R^2$ of unit side length.
We first identify a special case which we can solve quickly.  Specifically, suppose that the points $p_j\in P_j$ lie in a fixed grid cell for a fixed $j$, and points in $p_{j+1}\in P_{j+1}$ lie in a fixed neighboring grid cell, w.l.o.g., say, to its right; furthermore, suppose that all such points satisfy $x(p_j)\bmod 1 > \mu$ and $x(p_{j+1})\bmod 1<\mu$.  Then we already know that $|x(p_j)-x(p_{j+1})|<1$, and so it suffices to  enforce only the constraint that $|y(p_j)-y(p_{j+1})|<1$.
Consequently, we can proceed greedily: for each $p_0\in P_0$, we only need to find the lowest point in $P_j$ reachable from $p_0$, and for each $p_\Delta\in P_\Delta$, we only need to find the highest point in $P_{j+1}$ reachable from $p_\Delta$ (or vice versa, depending on the position of neighboring cell),  as explained in the lemma below:

\begin{lemma}\label{lem:unitsq1}
Given $\Delta+1$ point sets $P_0,\ldots,P_\Delta\subset \R^2$ of total size $n$, and given $j\in\{0,\ldots,\Delta-1\}$ and $\mu\in(0,1)$,
we can compute mappings $\phi: P_0\rightarrow\R$
and $\psi: P_\Delta\rightarrow\R$ in $\OO(n)$ time, satisfying the following property for every $(p_0,p_\Delta)\in P_0\times P_\Delta$: 
\begin{quote}
$(\exists (p_1,\ldots,p_{\Delta-1})\in P_1\times\cdots\times P_{\Delta-1}:$\ $\cube{p_i}$ intersects $\cube{p_{i+1}}$ for all $i$, and\\
   $p_j\in (\mu,1)\times(0,1)$, and $p_{j+1}\in ((0,\mu)\cup(1,1+\mu))\times(-1,1))$\\[.5ex]  $\Longleftrightarrow\ \ \phi(p_0)<\psi(p_\Delta)$.
\end{quote}
\end{lemma}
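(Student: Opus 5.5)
The constraint on the pair $(p_j,p_{j+1})$ decouples the $x$-coordinates. If $p_j\in(\mu,1)\times(0,1)$ and $p_{j+1}$ has $x$-coordinate in $(0,\mu)\cup(1,1+\mu)$, then $|x(p_j)-x(p_{j+1})|<1$ holds automatically. For $x(p_{j+1})\in(0,\mu)$ the difference is below $1$ in absolute value. For $x(p_{j+1})\in(1,1+\mu)$ the difference $x(p_{j+1})-x(p_j)$ lies in $(0,1)$, since $x(p_{j+1})<1+\mu<1+x(p_j)$. So $\cube{p_j}$ meets $\cube{p_{j+1}}$ iff $|y(p_j)-y(p_{j+1})|<1$.

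The plan is to split the path at the edge $(j,j+1)$. Let $A\subseteq P_j\cap((\mu,1)\times(0,1))$ be the points reachable from $p_0$ by a path $p_0,\ldots,p_j$ in the layered graph. Let $B\subseteq P_{j+1}\cap(((0,\mu)\cup(1,1+\mu))\times(-1,1))$ be the points reachable backward from $p_\Delta$. The left-hand side then holds iff some $a\in A$, $b\in B$ satisfy $|y(a)-y(b)|<1$.

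I split $B$ into $B^-$ (those $b$ with $y(b)<y(a)$ relevant) and handle the two sides separately. A cleaner route uses the fact that $y(a)\in(0,1)$ and $y(b)\in(-1,1)$. If $y(b)\ge 0$, then $|y(a)-y(b)|<1$ always holds. If $y(b)<0$, the condition is $y(a)<y(b)+1$. Define
\[
\phi(p_0)=\min\{y(a):a\in A(p_0)\},
\]
which is $+\infty$ if $A(p_0)=\emptyset$. Define
\[
\psi(p_\Delta)=\max\{\min(y(b)+1,\,1^{+}) : b\in B(p_\Delta)\},
\]
where a nonnegative $y(b)$ is treated as giving value $2$ (any number exceeding all $y(a)<1$), and the value is $-\infty$ if $B(p_\Delta)$ is empty. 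Then $\phi(p_0)<\psi(p_\Delta)$ iff some pair works, because the condition is monotone: the lowest reachable $a$ and the best $b$ suffice. If $B$ were unrestricted in $y$, I would need two-sided handling, but the window $(-1,1)$ combined with $y(a)\in(0,1)$ makes it one-sided, which is exactly the greedy remark before the lemma.

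It remains to compute $\phi$ and $\psi$ in $\OO(n)$ time. Computing $\phi$ requires, for every $p_0$, the minimum $y$ over $A(p_0)$. I would propagate backwards layer by layer. Define $f_j(q)=y(q)$ for $q\in P_j$ in the window, and $+\infty$ otherwise. Then set
\[
f_i(q)=\min\{f_{i+1}(q'): q'\in P_{i+1},\ \cube{q}\cap\cube{q'}\neq\emptyset\}
\]
for $i=j-1,\ldots,0$, and finally $\phi=f_0$. Each step is a 2D orthogonal range-minimum query: the unit squares intersect iff $q'$ lies in the open $2\times2$ square centered at $q$. With a standard range tree, or an offline sweep with a segment tree, each layer costs $O(n\log n)$ in total. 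Since $\Delta$ is constant, the whole computation is $\OO(n)$. Symmetrically, $\psi$ is computed by range-maximum propagation from $P_{j+1}$ forward to $P_\Delta$, starting from the transformed values $g(b)$.

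Correctness of the propagation is immediate by induction: $f_i(q)$ is the minimum $y$ of a window point of $P_j$ reachable from $q$ via a layered path. The main thing to double-check is the decoupling claim, including the strictness of the inequalities, which relies on the general-position assumption. The steps are otherwise routine.
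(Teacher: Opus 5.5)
Your proof is correct and follows the paper's argument. Both propagate the minimum $y$-coordinate over reachable window points of $P_j$ back to $P_0$, and the maximum of $y+1$ over reachable window points of $P_{j+1}$ forward to $P_\Delta$, via orthogonal range-min/max queries, using that the window constraints make $|x(p_j)-x(p_{j+1})|<1$ and $y(p_{j+1})-y(p_j)<1$ automatic; your capping of the value when $y(b)\ge 0$ is harmless but unnecessary, since $y(b)+1\ge 1>y(a)$ already, and the paper simply takes $\psi(p_\Delta)$ to be the propagated maximum of $y(b)$ plus $1$.
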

\begin{proof}
Define the weight of each point in $P_j\cap ((\mu,1)\times(0,1))$ to be its $y$-coordinate; the weights of all other points in $P_j$ is set to $\infty$.
For each $i=j-1,\ldots,0$, define the weight of each point $p\in P_i$ to be the min weight among all points in
$P_{i+1}$ with $L_\infty$-distance at most 1 from $p$.   These values can be computed by orthogonal range min queries.
For each $p\in P_0$, define $\phi(p)$ to be the weight of $p$.

Similarly, define the weight of each point in $P_{j+1}\cap ((1,1+\mu)\times(-1,1))$ to be its $y$-coordinate; the weights of all other points in $P_{j+1}$ is set to $-\infty$.
For each $i=j+2,\ldots,\Delta$, define the weight of each point $p\in P_i$ to be the max weight among all points in
$P_{i-1}$ with $L_\infty$-distance at most 1 from $p$.   These values can be computed by orthogonal range min queries.
For each $p\in P_\Delta$, define $\psi(p)$ to be the weight of $p$ plus 1.  

Then the property is satisfied ($p_j\in (\mu,1)\times(0,1)$ and $p_{j+1}\in ((0,\mu)\cup(1,1+\mu))\times(-1,1)$
imply $|x(p_j)-x(p_{j+1})|<1$ and $y(p_{j+1})-y(p_j)<1$, so we only need $y(p_j)-y(p_{j+1})<1$).
\end{proof}

Suppose $P_0$ lies in a fixed grid cell, say, $(0,1)^2$.
We apply \Cref{lem:unitsq1} to solve the above special case for each of the $\Delta$ choices for $j$ and each of the $O(\Delta^2)$ choices of grid cells for $P_j$ and its neighboring cells for $P_{j+1}$.
However, we cannot solve these cases separately, since the diameter problem is not decomposable.  Instead, our idea is to view each special case as imposing an additional constraint in a new dimension, so that the combined problem can be reduced to an orthogonal range searching problem~\cite{AgarwalE99,CGAA} in a higher ($O(\Delta^3)$) dimension.

\begin{lemma}\label{lem:unitsq3}
Given $\Delta+1$ point sets $P_0,\ldots,P_\Delta\subset \R^2$ of total size $n$ where $P_0\subset (0,1)^2$, and given 
$\mu\in(0,1)$,
we can compute mappings $\phi: P_0\rightarrow\R^{O(\Delta^3)}$
and $\psi: P_\Delta\rightarrow\R^{O(\Delta^3)}$ in $\OO(\Delta^3 n)$ time, satisfying the following for every $(p_0,p_\Delta)\in P_0\times P_\Delta$: 
\begin{quote}
$(\exists (p_1,\ldots,p_{\Delta-1})\in P_1\times\cdots\times P_{\Delta-1},\exists j:$\ $\cube{p_i}$ intersects $\cube{p_{i+1}}$ for all $i$, and\\
   $x(p_j)\bmod 1 > \mu$ and $x(p_{j+1})\bmod 1 < \mu$, or vice versa)\\[.5ex] $\Longleftrightarrow$\ \ $\phi(p_0)$ does not dominate $\psi(p_\Delta)$.
\end{quote}
\end{lemma}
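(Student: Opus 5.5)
The plan is to reduce the disjunction over all choices of $j$, of the grid cells containing $p_j$ and $p_{j+1}$, and of the direction (``or vice versa'') to $O(\Delta^3)$ separate applications of \Cref{lem:unitsq1}. Each application produces one scalar inequality $\phi_t(p_0)<\psi_t(p_\Delta)$. We then stack these scalars as coordinates, so that ``some case succeeds'' becomes ``some coordinate of $\phi(p_0)$ is strictly smaller than the corresponding coordinate of $\psi(p_\Delta)$''. This is exactly ``$\phi(p_0)$ does not dominate $\psi(p_\Delta)$''. Since coordinates are distinct in general position, or after a symbolic tie-break, dominance can be taken as coordinatewise $\ge$, and its negation is the desired disjunction.

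The first step is to enumerate the cases. Because $P_0\subset(0,1)^2$ and consecutive centers are within $L_\infty$-distance $1$, any $p_j$ on a valid walk lies in a cell $(a,a+1)\times(b,b+1)$ with $|a|,|b|\le j<\Delta$. That gives $O(\Delta^2)$ candidate cells for $p_j$, and for each one $O(1)$ neighboring cells for $p_{j+1}$. Now suppose $x(p_j)\bmod 1>\mu$ and $x(p_{j+1})\bmod 1<\mu$, with $|x(p_j)-x(p_{j+1})|<1$. Then either $p_{j+1}$ is in the column immediately to the right of $p_j$'s column, or it is in the same column with a smaller $x$-fraction. In the latter case $|x(p_j)-x(p_{j+1})|<1$ holds automatically as well. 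The region $((0,\mu)\cup(1,1+\mu))\times(-1,1)$ in \Cref{lem:unitsq1} was designed to absorb both sub-cases; relative to $p_j$'s cell it covers the same column and the right column, for both the row below and the current row.

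The ``vice versa'' case and the different vertical positions are handled by reflecting the plane ($x\mapsto -x$ or $y\mapsto -y$) and translating so that $p_j$'s candidate cell becomes $(0,1)^2$. For the vice-versa case we replace $\mu$ by $1-\mu$ under the reflection; the measure-zero boundary is harmless in general position. We also cover the row above, since the lemma as stated treats one vertical orientation at a time. Each such (reflected, translated) configuration together with each $j\in\{0,\dots,\Delta-1\}$ is one case $t$. There are $O(\Delta)\cdot O(\Delta^2)\cdot O(1)=O(\Delta^3)$ cases, and for each we invoke \Cref{lem:unitsq1} on the transformed point sets to get $\phi_t,\psi_t$ in $\OO(n)$ time, for $\OO(\Delta^3 n)$ total.

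Correctness splits into two directions. If a walk with a crossing index $j$ exists, its $p_j$ lies in one of the enumerated cells and $p_{j+1}$ in the corresponding region, so the matching case $t$ gives $\phi_t(p_0)<\psi_t(p_\Delta)$. Conversely, any strict inequality in some coordinate certifies a walk satisfying that case's constraints, and those constraints imply the condition in the lemma's left-hand side. The main obstacle is verifying that the finite list of translated and reflected instances of \Cref{lem:unitsq1}'s region exactly covers every geometric way $p_j,p_{j+1}$ can straddle the threshold $\mu$. In particular, the same-column wrap case and the boundary rows $y\in(-1,1)$ need checking, and we must ensure no spurious region admits a pair with $\|p_j-p_{j+1}\|_\infty\ge 1$. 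To handle this, I would first fix $p_j$'s cell as $(0,1)^2$ and tabulate the at most four relevant neighboring cells explicitly. Overlaps between cases are harmless, since we only need a disjunction.
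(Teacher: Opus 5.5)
Your proposal is correct and follows essentially the paper's proof. Both arguments apply \Cref{lem:unitsq1} once for each of the $O(\Delta^2)$ cell offsets, each $j$, and $O(1)$ symmetric orientations (including a $y$-reflection for the row above), then take the Cartesian product of the scalar maps so that ``does not dominate'' encodes the disjunction. The only cosmetic difference is that you handle the ``vice versa'' case by the reflection $x\mapsto -x$ with $\mu\mapsto 1-\mu$, whereas the paper swaps the roles of $p_j$ and $p_{j+1}$ by reversing $P_0,\ldots,P_\Delta$; both are valid.
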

\begin{proof}
For each $\alpha\in \Z^2$ with $L_\infty$-norm at most $\Delta$, and each $j\in\{0,\ldots,\Delta-1\}$,
we compute mappings $\phi^{(\alpha,j,1)},\ldots,\phi^{(\alpha,j,4)}: P_0\rightarrow\R$ and $\psi^{(\alpha,j,1)},\ldots,\psi^{(\alpha,j,4)}: P_\Delta\rightarrow\R$ satisfying the
following properties:
\begin{enumerate}
\item $(\exists (p_1,\ldots,p_{\Delta-1})\in P_1\times\cdots\times P_{\Delta-1}:$\ $\cube{p_i}$ intersects $\cube{p_{i+1}}$ for all $i$, and\\
   $p_j\in \alpha+((\mu,1)\times(0,1))$, and $p_{j+1}\in \alpha+(((0,\mu)\cup(1,1+\mu))\times(-1,1)))$\\[.5ex] $\Longleftrightarrow\ \ \phi^{(\alpha,j,1)}(p_0)<\psi^{(\alpha,j,1)}(p_\Delta)$.
\item $(\exists (p_1,\ldots,p_{\Delta-1})\in P_1\times\cdots\times P_{\Delta-1}:$\ $\cube{p_i}$ intersects $\cube{p_{i+1}}$ for all $i$, and\\
   $p_j\in \alpha+((\mu,1)\times(-1,0))$ and $p_{j+1}\in \alpha+(((0,\mu)\cup (1,1+\mu))\times(-1,1)))$\\[.5ex] $\Longleftrightarrow\ \ \phi^{(\alpha,j,2)}(p_0)<\psi^{(\alpha,j,2)}(p_\Delta)$.
\item $(\exists (p_1,\ldots,p_{\Delta-1})\in P_1\times\cdots\times P_{\Delta-1}:$\ $\cube{p_i}$ intersects $\cube{p_{i+1}}$ for all $i$, and\\
   $p_{j+1}\in \alpha+((\mu,1)\times(0,1))$, and $p_j\in \alpha+(((0,\mu)\cup(1,1+\mu))\times(-1,1)))$\\[.5ex] $\Longleftrightarrow\ \ \phi^{(\alpha,j,3)}(p_0)<\psi^{(\alpha,j,3)}(p_\Delta)$.
\item $(\exists (p_1,\ldots,p_{\Delta-1})\in P_1\times\cdots\times P_{\Delta-1}:$\ $\cube{p_i}$ intersects $\cube{p_{i+1}}$ for all $i$, and\\
   $p_{j+1}\in \alpha+((\mu,1)\times(-1,0))$ and $p_j\in \alpha+(((0,\mu)\cup (1,1+\mu))\times(-1,1)))$\\[.5ex] $\Longleftrightarrow\ \ \phi^{(\alpha,j,4)}(p_0)<\psi^{(\alpha,j,4)}(p_\Delta)$.
\end{enumerate}
The first two properties differ only in the location of $p_j\in \alpha+((\mu, 1)\times (0,1))$ versus $p_j\in \alpha+((\mu, 1)\times (-1,0))$, and the second two properties are identical to the first with the locations of $p_j$ and $p_{j+1}$ swapped.
Each such mapping can be computed by \Cref{lem:unitsq1} 
(shifting by $\alpha$, possibly with $y$-coordinates negated and/or $P_0,\ldots,P_\Delta$ reversed).  
Finally, we define $\phi$ and $\psi$ as the Cartesian products of these mappings $\phi^{(\alpha,j,1)},\ldots,\phi^{(\alpha,j,4)}$ 
and $\psi^{(\alpha,j,1)},\ldots,\psi^{(\alpha,j,4)}$ over all $O(\Delta^2)$ choices of $\alpha$
and $\Delta$ choices of $j$.
\end{proof}

How can we reduce the problem in the general case to the case from \Cref{lem:unitsq3}?
A natural approach is binary divide-and-conquer. 
We divide the input sets into two parts: points with $x$-coordinates less than $\mu$ modulo 1, and points with $x$-coordinates greater than $\mu$ modulo 1, where $\mu$ is the median of the $x$-coordinates mod 1.
There are 3 cases for a path $p_0,\ldots,p_\Delta$:
(i) it stays entirely in $\{(x,y): x\bmod 1 < \mu\}$;
(ii) it stays entirely in $\{(x,y): x\bmod 1 > \mu\}$;
(iii) it crosses between $\{(x,y): x\bmod 1 > \mu\}$
and $\{(x,y): x\bmod 1 < \mu\}$.
Case (i) can be handled by one recursive call with half of the points.
Case (ii) can be handled by another recursive call with the other half of the points.
If case (iii) occurs, there must be some $j$ for which $x(p_j)\bmod 1> \mu$ and $x(p_{j+1})\bmod 1<\mu$ or vice versa%
---this is \emph{precisely} the case handled by \Cref{lem:unitsq3}!

However, there is one important issue: the diameter problem is not decomposable.
For example, consider a pair $(p_0,p_\Delta)\in P_0\times P_\Delta$ with
$x(p_0)\bmod 1 <\mu$ and $x(p_\Delta)\bmod 1<\mu$.  It could be covered by case (i) or case (iii).  If we are solving the distance oracle problem (testing whether $p_0$ and $p_\Delta$ have distance $\Delta$ for a given query pair $(p_0,p_\Delta)$), we could consider both cases, recurse, and take the ``or'' of the answers, thus obtaining logarithmic query time.  However, for the diameter problem, we need to exclude pairs $(p_0,p_\Delta)$ that already pass the test from case (iii) when we make the recursive call to handle case (i).  We cannot afford to enumerate all such pairs since there are quadratically many in the worst case. Our idea is to realize that the pairs we want to exclude are encodable by orthogonal range searching/dominance constraints in $O(1)$ dimensions, according to \Cref{lem:unitsq3}.  As we recurse, we just add more and more dimensions to the encoding.

A binary divide-and-conquer has $O(\log n)$ recursion depth, and if the dimension were to increase to $O(\log n)$, we would no longer be able to solve the orthogonal range searching subproblems in near-linear time.
A final idea is to use a larger fan-out $b$ for the divide-and-conquer (for example, the recursion depth is reducible to $O(1)$ by setting $b=n^\eps$, though we will suggest a slightly better choice of $b$ below).
The details are made precise below:

\begin{lemma}\label{lem:unitsq4}
Given $\Delta+1$ point sets $P_0,\ldots,P_\Delta\subset \R^2$ of total size $n$, 
where $P_0\subset (0,1)^2$,
and given parameter $D$ and mappings $f: P_0\rightarrow \R^D$ and $g: P_\Delta\rightarrow \R^D$,
we can decide whether for all $(p_0,p_\Delta)\in P_0\times P_\Delta$ with $f(p_0)$ dominating $g(p_\Delta)$,
there exists $(p_1,\ldots,p_{\Delta-1})\in P_1\times\cdots\times P_{\Delta-1}$, such that
$\cube{p_i}$ intersects $\cube{p_{i+1}}$ for all $i$, 
in $n2^{O(\sqrt{\Delta^3\log n\log\log n})}\log^Dn$ time.
\end{lemma}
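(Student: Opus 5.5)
We prove \Cref{lem:unitsq4} by $b$-way divide-and-conquer on $x$-coordinates modulo $1$. The dominance filter $(f,g)$ carries the pairs already certified at higher levels, and it gains $O(\Delta^3)$ coordinates per level through \Cref{lem:unitsq3}.

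\textbf{Splitting step.} Let $N=|P_0|+\cdots+|P_\Delta|$. Choose thresholds $0=\mu_0<\mu_1<\cdots<\mu_b=1$ so that every slab $I_t=\{(x,y): x\bmod 1\in(\mu_{t-1},\mu_t)\}$ contains $O(N/b)$ points. Take any pair $(p_0,p_\Delta)$ with $f(p_0)$ dominating $g(p_\Delta)$, and any path $p_0,\ldots,p_\Delta$ in the $(\Delta+1)$-partite graph. Either the whole path lies in a single slab $I_t$, or some consecutive $p_j,p_{j+1}$ lie on opposite sides of some threshold $\mu_s$, meaning one has $x\bmod 1>\mu_s$ and the other $<\mu_s$.

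For each $s\in\{1,\ldots,b-1\}$, apply \Cref{lem:unitsq3} with $\mu=\mu_s$ to obtain maps $\phi_s,\psi_s$ into $\R^{O(\Delta^3)}$. A pair is certified by a crossing path exactly when, for some $s$, $\phi_s(p_0)$ does not dominate $\psi_s(p_\Delta)$. The pairs that remain to be checked are therefore those where $f(p_0)$ dominates $g(p_\Delta)$ and $\phi_s(p_0)$ dominates $\psi_s(p_\Delta)$ for every $s$. For these pairs, every valid path lies in one slab, namely the slab of $p_0$.

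This is the main obstacle. Concatenating all $b-1$ maps would cost $\Theta(b\Delta^3)$ new dimensions per level, which is too many. The fix is to note that once $p_0\in I_t$ is fixed, a path confined to $I_t$ never straddles $\mu_s$ for $s\neq t-1,t$. So for a pair with $p_0\in I_t$, the only thing to certify is whether some path stays inside $I_t$. Any path that leaves $I_t$ must straddle $\mu_{t-1}$ or $\mu_t$, because it starts in $I_t$ and must exit through one of the two boundary thresholds (modulo-1 wraparound is handled by the grid-cell shifts $\alpha$ built into \Cref{lem:unitsq3}).

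Hence, for each $t$, we recurse on $P_0\cap I_t,\,P_1\cap I_t,\ldots,P_\Delta\cap I_t$. The filter is extended to $f'=(f,\phi_{t-1},\phi_t)$ and $g'=(g,\psi_{t-1},\psi_t)$, so only $O(\Delta^3)$ dimensions are added per level. $P_\Delta$ is also restricted to $I_t$ in the recursion. A point $p_\Delta\notin I_t$ cannot be reached by a path inside $I_t$, so any pair with such a $p_\Delta$ that still passes the filter is immediately a "no". Detecting whether one exists is a dominance-pair emptiness query: does some $p_0\in I_t$ and some $p_\Delta\notin I_t$ satisfy $f'(p_0)\ge g'(p_\Delta)$? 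We answer it with an orthogonal range searching structure in $\OO(N)\log^{D'}N$ time, where $D'$ is the current dimension.

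\textbf{Base case and analysis.} Stop when a subproblem has $O(1)$ points, or after depth $L=\lceil\log_b n\rceil$. At that point, brute-force the remaining pairs with a dominance check: with $O(1)$ points, test every path directly. In total the dimension reaches $D+O(\Delta^3L)$. Each level costs $\OO(b\Delta^3N)$ for the $b$ calls to \Cref{lem:unitsq3}, plus $O(N)\log^{D+O(\Delta^3 L)}n$ for the range searching queries. Over $L$ levels the total is
\[
n\, b\,L\,\log^{D+O(\Delta^3\log n/\log b)} n.
\]
Setting $b=2^{\sqrt{\Delta^3\log n\log\log n}}$ balances the factor $b$ against $\log^{O(\Delta^3\log n/\log b)}n=2^{O(\Delta^3\log n\log\log n/\log b)}$. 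This yields the claimed bound
\[
n\,2^{O(\sqrt{\Delta^3\log n\log\log n})}\log^D n.
\]
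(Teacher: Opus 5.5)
Your proposal is correct and follows the paper's proof essentially step for step. The shared steps are:
\begin{itemize}
\item a $b$-way split by quantiles of $x\bmod 1$;
\item \Cref{lem:unitsq3} maps at the thresholds, with the filter extended only by the two boundary thresholds $\mu_{t-1},\mu_t$ of each slab;
\item a dominance-emptiness check for $p_0\in I_t$, $p_\Delta\notin I_t$;
\item the recurrence $T(n,D)\le b\,T(n/b,D+O(\Delta^3))+bn\log^{D+O(\Delta^3)}n$.
\end{itemize}

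Two small slips do not affect the result. First, you also need $\phi_0,\phi_b$; these are trivial, since $\mu_0=0$ and $\mu_b=1$ cannot be straddled. Second, the per-level range-searching cost carries a factor $b$, which your final bound already includes.
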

\begin{proof}
We use $b$-way divide-and-conquer for a parameter $b$ to be set later:
\begin{enumerate}
\item Compute quantiles $0=\mu_0 <\mu_1< \cdots < \mu_{b-1} < \mu_b=1$ such that each interval $(\mu_k,\mu_{k+1})$ contains roughly
$n/b$ elements of $\{x(p)\bmod 1: p\in P_0\cup\cdots\cup P_\Delta\}$.  For each $k\in\{0,\ldots,b-1\}$ and $i\in\{0,\ldots,\Delta\}$, let $P_i^{(k)}=\{p\in P_i: x(p)\bmod 1\in (\mu_k,\mu_{k+1})\}$.
\item For each $k\in \{0,\ldots,b\}$, compute mappings $\phi^{(k)}: P_0\rightarrow\R^{O(\Delta^3)}$
and $\psi^{(k)}: P_\Delta\rightarrow\R^{O(\Delta^3)}$, satisfying the following property:
\begin{quote}
$(\exists (p_1,\ldots,p_{\Delta-1})\in P_1\times\cdots\times P_{\Delta-1},\exists j:$\ $\cube{p_i}$ intersects $\cube{p_{i+1}}$ for all $i$, and\\
   $x(p_j)\bmod 1 > \mu_k$ and $x(p_{j+1})\bmod 1 < \mu_k$, or vice versa)\\[.5ex] $\Longleftrightarrow$\ \ $\phi^{(k)}(p_0)$ does not dominate $\psi^{(k)}(p_\Delta)$.
\end{quote}
\item
For each $k\in\{0,\ldots,b-1\}$, verify that there is no pair $(p_0,p_\Delta)\in P_0^{(k)}\times (P_\Delta\setminus P_\Delta^{(k)})$
such that $(f(p_0),\phi^{(k)}(p_0),\phi^{(k+1)}(p_0))$ dominates 
$(g(p_0),\psi^{(k)}(p_\Delta),\psi^{(k+1)}(p_\Delta))$.
This can be done by $(D+O(\Delta^3))$-dimensional orthogonal range searching in $bn(\log n)^{D+O(\Delta^3)}$ time.

\item
For each $k\in\{0,\ldots,b-1\}$, recursively solve the problem for $P_0^{(k)},\ldots,P_\Delta^{(k)}$ with $f$ and $g$ replaced by
$(f,\phi^{(k)},\phi^{(k+1)})$ and $(g,\psi^{(k)},\psi^{(k+1)})$.
\end{enumerate}

To see correctness, observe that for any $(p_0,\ldots,p_\Delta)\in P_0\times\cdots\times P_\Delta$ and any $k$, one of the following is true:
(i)~$\exists j: x(p_j)\bmod 1 < \mu_k$ and $x(p_{j+1})\bmod 1 > \mu_k$, or vice versa, or 
(ii)~$\exists j: x(p_j)\bmod 1 < \mu_{k+1}$ and $x(p_{j+1})\bmod 1 > \mu_{k+1}$, or vice versa, or
(iii)~$(p_0,\ldots,p_\Delta)$ is in $P_0^{(k)}\times\cdots\times P_\Delta^{(k)}$, or
(iv)~$(p_0,\ldots,p_\Delta)$ is in $(P_0\setminus P_0^{(k)})\times\cdots\times (P_\Delta\setminus P_\Delta^{(k)})$.  (In other words,
if the sequence of $x$-coordinates in $(p_0,\ldots,p_\Delta)$ modulo 1 does not
``cross'' $\mu_k$ nor $\mu_{k+1}$, then the sequence must stay completely inside the interval $(\mu_k,\mu_{k+1})$ or completely outside.)
For $(p_0,p_\Delta)\in P_0^{(k)}\times (P_\Delta\setminus P_\Delta^{(k)})$, it is (i) or (ii) (handled by step~3).
For $(p_0,p_\Delta)\in P_0^{(k)}\times P_\Delta^{(k)}$, it is (i), (ii), or (iii) (handled by step~4).

The running time satisfies the recurrence $T(n,D)\le b\,T(n/b,D+O(\Delta^3))+bn(\log n)^{D+O(\Delta^3)}$.
There are $O(\log_b n)$ levels of recursion, and the dimension parameter $D$ increases by $O(\Delta^3)$ at each level.
The recurrence solves to $T(n,D)\le bn(\log n)^{D+O(\Delta^3\log_b n)}$.  Setting $b=2^{\sqrt{\Delta^3\log n\log\log n}}$ yields
$T(n,D)\le n2^{O(\sqrt{\Delta^3\log n\log\log n})}\log^Dn$.
\end{proof}


\begin{theorem}\label{thm:unitsq}
Given $\Delta+1$ point sets $P_0,\ldots,P_\Delta\subset \R^2$ of total size $n$, 
we can decide whether for all $(p_0,p_\Delta)\in P_0\times P_\Delta$,
there exists $(p_1,\ldots,p_{\Delta-1})\in P_1\times\cdots\times P_{\Delta-1}$, such that
$\cube{p_i}$ intersects $\cube{p_{i+1}}$ for all $i$, 
in time $n2^{O(\sqrt{\Delta^3\log n\log\log n})}$, which is $O^*(n)$ 
for any constant $\Delta$ (or, in fact, for any $\Delta=o(\log^{1/3}n/\log^{1/3}\log n)$).
\end{theorem}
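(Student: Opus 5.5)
The plan is to reduce \Cref{thm:unitsq} to \Cref{lem:unitsq4} by removing the assumption $P_0\subset(0,1)^2$ and supplying trivial dominance constraints. Build the unit grid and partition $P_0$ by grid cell: $P_0=\bigcup_{C} P_0^C$, one part for each nonempty cell $C$. For each $C$, translate the whole configuration by an integer vector so that $C$ becomes $(0,1)^2$. This translation does not affect intersections or the values of coordinates modulo~1, so it is harmless.

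Next we restrict the other sets. A path $p_0,\ldots,p_\Delta$ with consecutive unit squares intersecting satisfies $\|p_i-p_0\|_\infty<i\le\Delta$. Hence for $p_0\in C$ we only need points of $P_i$ in cells within $L_\infty$-distance $\Delta+1$ of $C$. This holds for $P_\Delta$ too: a point of $P_\Delta$ farther away than that cannot be at distance $\Delta$ from $p_0$. So if any point of $P_\Delta$ lies outside the $(2\Delta+3)^2$-cell neighborhood of a nonempty cell $C$, we answer ``no'' immediately. Checking this is easy: compute the bounding range of the occupied cells, or equivalently, for each nonempty $C$, count the points of $P_\Delta$ in its neighborhood with a hash table over cells and compare against $|P_\Delta|$.

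Otherwise, for each $C$ we call \Cref{lem:unitsq4} on $P_0^C$ together with the points of $P_1,\ldots,P_\Delta$ lying in the $O(\Delta^2)$ cells around $C$. We take $D=1$ and constant maps $f\equiv 0$, $g\equiv 0$ (interpreting domination non-strictly, or using $f\equiv1,g\equiv0$), so that every pair is constrained. The answer is the conjunction over all $C$.

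For the running time, each point of $P_1\cup\cdots\cup P_\Delta$ participates in $O(\Delta^2)$ subproblems, so the total input size is $O(\Delta^2 n)$. Since $T(m)$ from \Cref{lem:unitsq4} is superlinear in $m$, the total time is $O(\Delta^2)\cdot n2^{O(\sqrt{\Delta^3\log n\log\log n})}\log n = n2^{O(\sqrt{\Delta^3\log n\log\log n})}$. The $\Delta^2$ and $\log n$ factors are absorbed because $\Delta=o(\log^{1/3} n)$ regime or constant $\Delta$. Bucketing points into cells takes $O(n)$ expected time by hashing, or $O(n\log n)$ by sorting.

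The main thing to get right is correctness of the localization. A pair $(p_0,p_\Delta)$ that is dropped because $p_\Delta$ lies far from $C$ is correctly handled by the upfront ``no'' check. Intermediate points discarded from the subproblem could never lie on a valid path, so the existence quantifier is unaffected.

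Finally, for the original diameter problem, we apply this with $P_0=\cdots=P_\Delta$ equal to the centers. The quantity tested, the existence of a $\Delta$-step walk in the layered graph, matches $d\le\Delta$ once self-adjacency is allowed: each square intersects itself, so shorter paths can be padded by repeating a vertex.
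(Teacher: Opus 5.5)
Your proposal is correct and follows the paper's proof. Both partition $P_0$ by unit grid cells, restrict each $P_i$ to the $O(\Delta^2)$ surrounding cells, invoke \Cref{lem:unitsq4} with trivial maps $f,g$, and charge each point to $O(\Delta^2)$ subproblems. Your explicit up-front ``no'' check for points of $P_\Delta$ outside the neighborhood of an occupied cell is a detail the paper's one-line proof leaves implicit: it simply restricts $P_\Delta$ to $\alpha+(-\Delta,\Delta+1)^2$, which silently discards exactly those pairs.
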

\begin{proof}
Build a uniform grid of side length 1.
For each nonempty grid cell $\alpha+(0,1)^2$ (with $\alpha\in\Z^2$),
we solve the problem for
$P_0\cap (\alpha+(0,1)^2)$, $P_1\cap (\alpha+(-1,2)^2)$, \ldots,
$P_\Delta\cap (\alpha+(-\Delta,\Delta+1)^2)$
by \Cref{lem:unitsq4} (with trivial mappings $f$ and $g$, i.e., $D=0$).
Each point participates in $O(\Delta^2)$ subproblems.
\end{proof}

\renewcommand{\L}{\mathrm{left}}
\renewcommand{\R}{\mathrm{right}}

\section{Diameter-2 Lower Bound for 2D Triangles}\label{sec:triangle}

In this and the next three sections, we turn to proving conditional lower bounds.
We begin with a proof of a near-quadratic lower bound for \Diameter-2 for 2D triangles, assuming the 3-uniform 6-hyperclique hypothesis.

Consider convex chains $C_\L,C_\R$ on six points each (i.e., including endpoints), positioned so that for any point pair $p_\L\in C_\L, p_\R\in C_\R$, the segment $p_\L p_\R$ is disjoint from the interiors of the convex hulls of $C_\L$ and $C_\R$. See \Cref{fig:triangle_basic} for an example of such a configuration.

Let $v^1,\dots,v^6$ denote the vertices along $C_\L$, and $s_{AB},s_{BC},s_{CA}$ denote the segments $v^1v^2,v^3v^4$, and $v^5v^6$, respectively.  We define the segments $s_{DE},s_{EF},s_{FD}$ analogously along $C_\R$.

\begin{figure}[tb]
    \centering
    \includegraphics{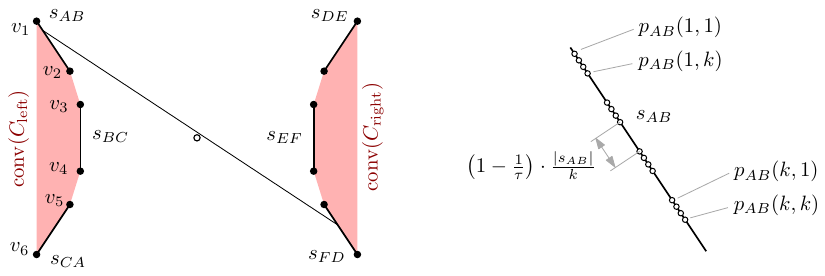}
    \caption{Left: Two convex chains $C_\L$ and $C_\R$ placed so that any segment connecting them remains disjoint from the interiors of their convex hulls (the red shaded regions). Right: Defining a group of $k\times k$ points along $s_{AB}$, such that $p_{AB}(a,b)$ are grouped by the value of $a$, and groups have gaps of size $\simeq|s_{AB}|/k$.}
    \label{fig:triangle_basic}
\end{figure}

Let $G=(V_A\cup V_B\cup V_C\cup V_D\cup V_E\cup V_F,E)$ be a 6-partite 3-uniform hypergraph with $k:=n^{1/3}$ vertices in each of the six parts. Our idea is to encode some edges and non-edges of this graph via certain triangles. First, we associate specific points on the previously defined segments with certain pairs $(a,b)\in [k]^2$. We will then encode a triplet $(v_a,v_b,v_c)\in V_A\times V_B \times V_C$ via the pairs $(a,b)$, $(b,c)$ and $(c,a)$.

Consider now the segment $s_{AB}$. For each $a,b\in [k]^2$ let $p_{AB}(a,b)$ be the point of $s_{AB}$ at distance
\[|s_{AB}|\cdot\left(\frac{a-1}{k}+\frac{b}{\tau k^2}\right)\]
from $v_1$, where $|s_{AB}|$ denotes the length of $s_{AB}$, and $\tau\geq 2$ is a constant we can choose freely; for now, we can set $\tau=2$. Similarly, for each $(b,c)\in [k]^2$ and $(c,a)\in [k]^2$ we set $p_{BC}(b,c)$ and $p_{CA}(c,a)$ to be the point of $s_{BC}$ at distance $|s_{BC}|\cdot(\frac{b-1}{k}+\frac{c}{\tau k^2})$ from $v_3$ and the point of $s_{CA}$ at distance $|s_{CA}|\cdot(\frac{c-1}{k}+\frac{a}{\tau k^2})$ from $v_5$, respectively. Define the points $p_{DE}(d,e)$, $p_{EF}(e,f)$ and $p_{FD}(f,d)$ along the segments $s_{DE},s_{EF},s_{FD}$ analogously.

In each part $V_X$ of $G$, $X\in \{A, B, C, D, E, F\}$ we index the vertices from $1$ to $k$, i.e., $V_X=\{v^X_1,\dots,v^X_k\}$. We will drop the superscripts of the vertices when they can be inferred from the context.
Let $\{X,Y,Z\}$ be three distinct symbols among $\{A,B,C,D,E,F\}$. The triplet $\{X,Y,Z\}$ is considered a \emph{crossing triplet} if $\{X,Y,Z\}$ has a non-empty intersection with both $\{A,B,C\}$ and $\{D,E,F\}$.

Our reduction realizes an intersection graph similar to the intersection graph for a lower bound of~\cite{Bringmann2022-me} designed for $12$-dimensional hypercubes.

Let $G'$ be the intersection graph of the following triangles (see also \Cref{fig:triangle_thin}):
\begin{itemize}
    \item For each edge $(v_a,v_b,v_c)\in E(G)\cap (V_A\times V_B\times V_C)$ add the triangle
    \[\Delta_{ABC}(a,b,c)\text{ with vertices } p_{AB}(a,b),p_{BC}(b,c), p_{C,A}(c,a).\]
    \item For each edge $(v_d,v_e,v_f)\in E(G)\cap (V_D\times V_E\times V_F)$ add the triangle
    \[\Delta_{DEF}(d,e,f)\text{ with vertices } p_{DE}(d,e), p_{EF}(e,f),p_{FD}(f,d).\]
    \item For each crossing triplet $\{X,Y,Z\}$ with $XY\in \{AB,BC,CA\}$ and $ZW\in \{DE,EF,FD\}$ and each \emph{non-edge} $(v_x,v_y,v_z)\in (X\times Y\times Z) \setminus E(G)$ add a triangle
    \[\Delta_{XYZ}(x,y,z)\text{ with vertices }p_{XY}(x,y)p_{ZW}(z,1)p_{ZW}(z,k).\]
    \item For each crossing triplet $\{X,Y,Z\}$ with $XY\in \{DE,EF,FD\}$ and $ZW\in \{AB,BC,CA\}$  and each \emph{non-edge} $(v_x,v_y,v_z)\in (X\times Y\times Z) \setminus E(G)$ add a triangle
    \[\Delta_{ZXY}(z,x,y)\text{ with vertices }p_{ZW}(z,1)p_{ZW}(z,k)p_{XY}(x,y).\]
    \item Add a triangle $\Delta_{\L}$ that covers all triangles $\Delta_{ABC}$, intersects all triangles corresponding to crossing triplets, and is in the open half-plane $x<0$. Let $\Delta_\R$ be the mirror image of $\Delta_\L$.
\end{itemize}

\begin{figure}[tb]
    \centering
    \includegraphics{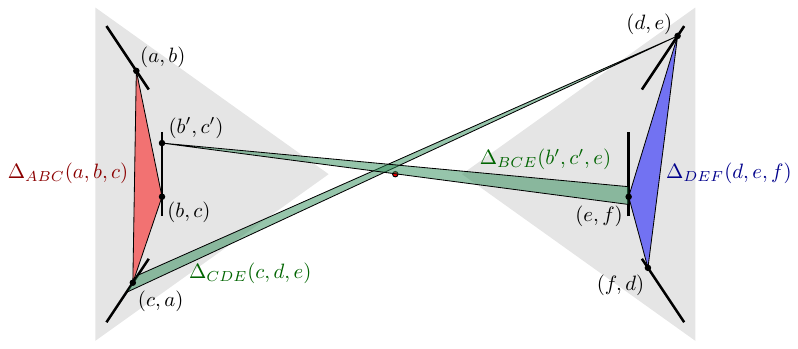}
    \caption{The construction of the intersection graph $G'$. Triangles on the left and right (red and blue) correspond to edges of $G$, crossing triangles (green) correspond to non-edges of $G$. The two gray triangles are the dummy triangles $\Delta_\L$ and $\Delta_\R$.}
    \label{fig:triangle_thin}
\end{figure}

\begin{observation}\label{obs:triangle_lower} The vertices of the constructed graph $G'$ satisfy the following properties:
    \begin{enumerate}
        \item $\Delta_{ABC}(a,b,c)\cap \Delta_{ABD}(a',b',d) \neq \emptyset$ if and only if $a=a'$ and $b=b'$.
        \item $\Delta_{ABC}(a,b,c)\cap \Delta_{ADE}(a',d,e) \neq \emptyset$ if and only if $a=a'$.
        \item $\Delta_{ABC}(a,b,c)\cap \Delta_{DEF}(d,e,f)=\emptyset$.
        \item $\Delta_{ABD}(a,b,d)\cap \Delta_{\L}\neq \emptyset$ and $\Delta_{ABD}(a,b,d)\cap \Delta_{\R}\neq \emptyset$.
        \item $\Delta_{ABC}(a,b,c)\cap \Delta_{\R} = \emptyset$ and $\Delta_{ABC}(a,b,c)\cap \Delta_{\L} \neq \emptyset$.
    \end{enumerate}
    The analogous statements hold when exchanging $ABC$ with $DEF$ or exchanging the crossing triplets with other crossing triplets.
\end{observation}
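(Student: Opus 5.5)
We verify each item by direct geometric inspection, using three features of the construction: the convex position of the two chains, the fact that every segment from $C_\L$ to $C_\R$ avoids the interiors of both convex hulls, and the two-level spacing of the points $p_{XY}(x,y)$ along each segment $s_{XY}$. Since $\tau\ge 2$, the points $p_{AB}(a,\cdot)$ occupy the subinterval of $s_{AB}$ starting at offset $|s_{AB}|\frac{a-1}{k}$ and of length at most $|s_{AB}|/(\tau k)\le |s_{AB}|/(2k)$. Distinct values of $a$ therefore give disjoint subintervals, separated by gaps of length $\simeq|s_{AB}|/k$. We first record the basic fact used throughout: every triangle $\Delta_{ABC}(a,b,c)$ has its vertices on $C_\L$, so it lies inside the convex hull $H_\L$ of $C_\L$. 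Every crossing triangle has its vertices on both chains: one on $s_{XY}$, two on $s_{ZW}$, or the symmetric arrangement.

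For item 1, $\Delta_{ABD}(a',b',d)$ has vertex $p_{AB}(a',b')$ on $C_\L$ and its other two vertices on $s_{DE}$ (or on whichever right segment carries the $D$-index). Its intersection with $H_\L$ is exactly the single point $p_{AB}(a',b')$. To see this, note that the two edges leaving $p_{AB}(a',b')$ are left–right segments that avoid the interior of $H_\L$. The third edge lies on $C_\R$. Because $H_\L$ is convex, these facts force the triangle to meet $H_\L$ only at that vertex. So the two triangles intersect iff $p_{AB}(a',b')\in\Delta_{ABC}(a,b,c)$. The part of $\Delta_{ABC}(a,b,c)$ lying on $s_{AB}$ is just its vertex $p_{AB}(a,b)$, since the other vertices lie on other edges of the strictly convex chain. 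This gives the condition $(a,b)=(a',b')$.

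Item 2 is where the spacing matters. Here $\Delta_{ADE}(a',d,e)$ has two vertices on $s_{AB}$ side of $C_\L$, namely $p_{AB}(a',1)$ and $p_{AB}(a',k)$, and one vertex on $C_\R$. Arguing as before, its intersection with $H_\L$ is exactly the segment $[p_{AB}(a',1),p_{AB}(a',k)]\subseteq s_{AB}$. This segment contains $p_{AB}(a,b)$ iff $a=a'$, by the nested, disjoint interval structure described above.

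Item 3 is immediate: $\Delta_{ABC}\subseteq H_\L$, $\Delta_{DEF}\subseteq H_\R$, and the two hulls are disjoint by the placement. Items 4 and 5 follow from how $\Delta_\L$ and $\Delta_\R$ were defined. $\Delta_\L$ covers every $\Delta_{ABC}$ and meets every crossing triangle by fiat, and mirror symmetry gives the statement for $\Delta_\R$. For $\Delta_{ABC}\cap\Delta_\R=\emptyset$, observe that $\Delta_\R$ lies in the open half-plane $x>0$, while $H_\L$ lies in $x<0$. This assumes the chains are placed symmetrically about the $y$-axis, with $C_\L$ on the left, and we fix that placement when instantiating the construction. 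The analogous statements for other triplets follow by relabeling and mirroring.

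The main obstacle is the rigorous claim that a crossing triangle meets $H_\L$ only in its $C_\L$-vertex or edge. We handle it by noting three things. First, the two lateral edges are segments between the chains and hence miss $\operatorname{int} H_\L$. Second, the far part of the triangle lies on the other side of the line through the $C_\L$-feature that supports $H_\L$. Third, this line supports $H_\L$ because the chain is convex and the segments from it avoid the hull interiors.
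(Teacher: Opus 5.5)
Your proof is correct and follows essentially the paper's route. The paper reads the thin case directly off the construction, and for the fat variant its \Cref{lem:fattriangle_lb} isolates exactly your two steps: a crossing triangle meets the two hulls only in its lone vertex $u$ and in the short piece at $p_{ZW}(z,1)p_{ZW}(z,k)$, after which the $\tau$-spacing of the groups decides every incidence. Your supporting-line argument for the first step tacitly uses that each chain lies strictly beyond the lines through the other chain's segments, which holds for any non-degenerate placement.

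The one step to tighten is the appeal to mirror symmetry for $\Delta_\R$. The crossing triangles encode non-edges of $G$, so they are not a mirror-invariant family, and symmetry alone does not transfer ``$\Delta_\L$ meets every crossing triangle'' to $\Delta_\R$. Instead, take $\Delta_\L\supseteq\mathrm{conv}(C_\L)$, which is possible inside $x<0$ under your symmetric placement. Then $\Delta_\R\supseteq\mathrm{conv}(C_\R)$ meets every crossing triangle, since each has a vertex on $C_\R$, and it contains every $\Delta_{DEF}$.
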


We can now prove the desired lower bound.

\begin{theorem}\label{thm:lb_thintriangle}
    Assuming the 3-uniform 6-hyperclique hypothesis,  there is no $O(n^{2-\eps})$ time algorithm for deciding if the intersection graph of a given set of $n$ triangles in the Euclidean plane has diameter at most 2. 
\end{theorem}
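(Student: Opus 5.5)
We reduce from 3-uniform 6-hyperclique using the graph $G'$ constructed above. The instance has $k=n^{1/3}$ vertices per part, and $G'$ has $O(k^3)=O(n)$ triangles: at most $k^3$ for each of $\Delta_{ABC}$ and $\Delta_{DEF}$, $O(k^3)$ for each of the constantly many crossing-triplet types, plus $\Delta_\L$ and $\Delta_\R$. All coordinates are rational of polynomial bit-length and the construction runs in $O(n)$ time. The target claim is
\[
G \text{ contains a 6-hyperclique } \iff \operatorname{diam}(G') \ge 3 .
\]
Given this, an $O(N^{2-\eps})$ algorithm for \Diameter-2 on $N=O(k^3)$ triangles would detect 6-hypercliques in $O(k^{6-3\eps})$ time, contradicting the 3H6 hypothesis.

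First I would show that every pair other than an (ABC, DEF) pair is at distance at most $2$, using \Cref{obs:triangle_lower}.
\begin{itemize}
\item Two left triangles share $\Delta_\L$ as a common neighbor (item 5), and symmetrically two right triangles share $\Delta_\R$.
\item A crossing triangle meets both $\Delta_\L$ and $\Delta_\R$ (item 4). So two crossing triangles are within distance 2, as are a crossing triangle and $\Delta_\L$, $\Delta_\R$, or any left or right triangle (via $\Delta_\L$ or $\Delta_\R$).
\item $\Delta_\L$ and $\Delta_\R$ are joined through any crossing triangle, and $\Delta_\L$ reaches every right triangle through a crossing triangle incident to it.
\end{itemize}
Here I must make sure the needed crossing triangles exist, i.e.\ there are enough non-edges. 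I would handle this by a standard padding: add dummy vertices or assume w.l.o.g.\ that every vertex lies in some non-edge of each crossing type, which does not affect hyperclique existence. The remaining pairs are $\Delta_{ABC}(a,b,c)$ versus $\Delta_{DEF}(d,e,f)$. These are disjoint (item 3), and they have no common neighbor among $\Delta_\L$, $\Delta_\R$ or the same-side triangles (items 3 and 5). Hence their distance is at most 2 if and only if some crossing triangle meets both.

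The core step is then, by items 1 and 2 and their symmetric versions, to characterize the crossing triangles meeting both. A triangle $\Delta_{XYZ}(x,y,z)$ meets both exactly when its indices agree with $(a,\dots,f)$ on the relevant coordinates. Such a triangle exists iff $(v_x,v_y,v_z)$ is a non-edge of $G$ for some crossing triplet drawn from $\{a,\dots,f\}$. So $\operatorname{dist}=3$ iff all 18 crossing triplets are edges of $G$. Together with $(a,b,c)\in E$ and $(d,e,f)\in E$, which is required for these two triangles to exist at all, this says exactly that $\{v_a,\dots,v_f\}$ is a 6-hyperclique. Conversely, a hyperclique yields two triangles at distance at least 3.

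The main obstacle is verifying \Cref{obs:triangle_lower} geometrically, in particular items 1 and 2 in the ``only if'' direction: distinct index pairs must give disjoint triangles despite the thin triangles crossing each other near the chains. Each crossing triangle touches chain $C_\L$ only at one point $p_{XY}(x,y)$, or along the short subsegment $p_{ZW}(z,1)p_{ZW}(z,k)$ spanning group $z$, which is separated from other groups by gaps. I would place the chains far apart relative to their size. Then, away from the chains, crossing triangles are all slivers heading toward the other side, and I would argue that two triangles can intersect only near their shared chain. This is where convexity of $C_\L$, $C_\R$ and the interior-avoidance property are used, possibly after choosing $\tau$ large. Near the chain, intersection reduces to comparing contact points and group intervals on the same segment $s_{XY}$, which gives items 1 and 2. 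Triangles of different crossing types that would cross far from the chains cause no harm, since extra edges among crossing triangles or to $\Delta_\L$ and $\Delta_\R$ do not create new length-2 paths between $\Delta_{ABC}$ and $\Delta_{DEF}$. So the only geometric facts I need are those stated in the observation for triangles adjacent to left or right triangles.
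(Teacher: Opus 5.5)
Your proposal is correct and takes the same route as the paper. It uses the same construction, relies on \Cref{obs:triangle_lower} to show that a left triangle and a right triangle are at distance at least $3$ exactly when all $18$ crossing triplets are hyperedges (the Bringmann et al.\ argument), and uses $\Delta_{\mathrm{left}},\Delta_{\mathrm{right}}$ to bring every other pair within distance $2$. Your padding step, which ensures every left and right triangle has at least one crossing neighbor, settles pairs such as $\Delta_{\mathrm{left}}$ versus a right triangle; the paper's ``one can verify'' passes over these cases.
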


\begin{proof}
Bringmann et al.~\cite{Bringmann2022-me} show that an intersection graph that satisfies Conditions 1-3 as described in Observation~\ref{obs:triangle_lower} has a pair of vertices $\Delta_{ABC}(a,b,c),\Delta_{DEF}(d,e,f)$ with hop distance greater than $2$ if and only if $(v_a,v_b,v_c,v_d,v_e,v_f)$ is a hyperclique of $G$.
We recall the argument briefly. Consider a pair of triangles $\Delta_1=\Delta_{ABC}(a,b,c)$ and $\Delta_2=\Delta_{DEF}(d,e,f)$. They both correspond to edges, so we have that the tuple $(v_a,v_b,v_c,v_d,v_e,v_f)$ forms a hyperclique if and only if all crossing triplets among $(v_a,v_b,v_c)$ and $(v_d,v_e,v_f)$ are present in $G$. Since we only represent non-crossing triplets in the graph, the tuple forms a hyperclique if and only if there is no crossing triangle $\Delta_{XYZ}$ connecting $\Delta_1$ and $\Delta_2$, i.e., they have no shared neighbor, or equivalently, they have hop distance at least $3$.

The reduction is almost complete. However, we need to ensure that no other vertex pairs can form a pair whose distance is at least $3$ in $G'$. 
Let $D_\L$ denote the set of triangles $\Delta_{ABC}(a,b,c)$, and let $D_\R$ denote the set of triangles $\Delta_{DEF}(d,e,f)$. One can verify that the triangles $\Delta_\L,\Delta_\R$ ensure that if a given pair $\Delta,\Delta'$ of vertices of $G'$ satisfies (a) $\Delta,\Delta'\in D_\L$, or (b) $\Delta,\Delta'\in D_\R$, or (c) at least one of $\Delta,\Delta'$ is not in $D_\L\cup D_\R$, then the hop distance of $\Delta$ and $\Delta'$ is at most $2$. Thus the only setting where distance $3$ is possible is the one discussed above.
\end{proof}

We can strengthen \Cref{thm:lb_thintriangle} to \emph{fat} triangles; in fact the angles of every triangle can be forced to be close to a right-angled isoceles triangle, henceforth called \EMPH{half-squares}. For a fixed $\eps>0$ we say that a triangle is an \EMPH{$\eps$-half-square} if its longest side length is in the interval $[(\sqrt{2}-\eps)s,(\sqrt{2}+\eps)s]$ and its two shorter side lengths fall in the interval $[(1-\eps)s,(1+\eps)s]$ for some $s>0$.

We claim that the above intersection graph can be realized with $\eps$-half-squares for any fixed $\eps>0$, as follows. First, we choose both convex chains in such a way that the segments $s_{XY}$ for $\{XY\}\in \{AB,BC,CE\}$ have length $\eps/100$ and there is a triangle with base $v_2v_5$ of length $\sqrt{2}+\eps/2$ whose third vertex is the midpoint of $s_{BC}=v_3v_4$. We place the chain as depicted in \Cref{fig:triangle_fat}(i) so that it is symmetric on the $x$ axis, and the line through $s_{AB}$ and $s_{CA}$ have slope $-1-\eps/10$ and $1+\eps/10$, respectively, and they intersect the $x$-axis at some point $(-T,0)$, where $T=T(\eps)$ will be determined later. The segments $s_{DE},s_{EF},s_{FD}$ are obtained by reflecting $s_{AB},s_{BC},s_{CA}$ on the origin. Observe that the triangles $\Delta_{ABC}(a,b,c)$ and $\Delta_{DEF}(d,e,f)$ are $\eps$-half-squares. For all other triangles, we modify them so that they are half-squares, as follows:
\begin{itemize}
    \item For each crossing triplet $\{X,Y,Z\}$ with $XY\in \{AB,BC,CA\}$ and $ZW\in \{DE,EF,FD\}$ and each \emph{non-edge} $(v_x,v_y,v_z)\in X\times Y\times Z \setminus E(G)$ add the unique half-square
    $\Delta_{XYZ}(x,y,z)$ with base $uu'$ and third vertex $\hat u$ as follows:
    \begin{itemize}
        \item We set $u=p_{XY}(x,y)$,
        \item $u'$ is chosen so that the intersection of $\Delta_{XYZ}(x,y,z)$ with $s_{ZW}$ is the segment $p_{XW}(z,1)p_{ZW}(z,k)$, and 
        \item $\hat u$ lies in the half-plane $y>0$.
    \end{itemize}
    \item For each crossing triplet $\{X,Y,Z\}$ with $XY\in \{DE,EF,FD\}$ and $ZW\in \{AB,BC,CA\}$  and each \emph{non-edge} $(v_x,v_y,v_z)\in X\times Y\times Z \setminus E(G)$ we modify
    $\Delta_{ZXY}(z,x,y)$ analogously.
    \item The triangles $\Delta_\L,\Delta_\R$ are now half-squares with the properties expressed in Observation~\ref{obs:triangle_lower}.
\end{itemize}

\begin{figure}[t]
    \centering
    \includegraphics[width=\linewidth]{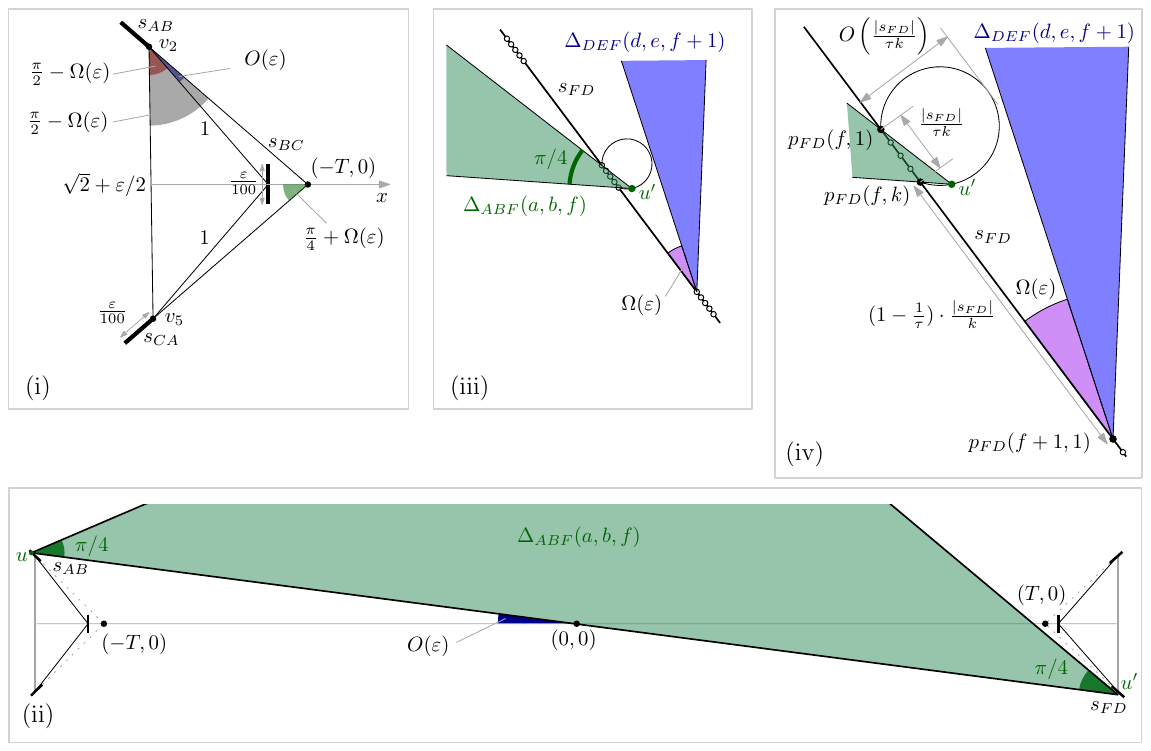}
    \caption{(i) Placement of the segments $s_{AB},s_{BC},s_{CA}$. (ii) By choosing $T$ large enough we ensure that the base of the half-square $\Delta_{ABF}(a,b,f)$ has angle $O(\eps)$ with the $x$-axis. (iii) The intersection of $\Delta_{ABF}(a,b,f)$ with the segment $s_{FD}$. We need to ensure that $\Delta_{ABF}(a,b,f)$ remains disjoint from $\Delta{DEF}(d,e,f')$ for all $f'\neq f$. (iv) The vertex $u'$ of $\Delta_{ABF}(a,b,f)$ is on the arc of inscribed angle $\pi/4$ with respect to the segment $p_{FD}(f,1)p_{FD}(f,k)$. By choosing $\tau=\Omega(1/\eps)$ large enough, we ensure that the arc is disjont from $\Delta_{DEF}(d,e,f+1)$ for all $d,e\in [k]$.}
    \label{fig:triangle_fat}
\end{figure}

We need the following geometric properties to establish Observation~\ref{obs:triangle_lower} for this modified construction.
\begin{lemma}\label{lem:fattriangle_lb}
    For every $\eps>0$ there exist $\tau$ and $T$ such that:
    \begin{enumerate}        
        \item[(1)] The intersection of $\Delta_{XYZ}(x,y,z)$ and $\conv(C_\L)\cup \conv(C_\R)$ is\\ the triangle $u'p_{ZW}(z,1)p_{ZW}(z,k)$ and the point $u$.
        \item[(2)] The triangle $u'p_{ZW}(z,1)p_{ZW}(z,k)$ is disjoint from all triangles corresponding to edges (i.e., the triangles $\Delta_{ABC}$ and $\Delta_{DEF}$) that are not incident to $z\in V_Z$.
    \end{enumerate}
\end{lemma}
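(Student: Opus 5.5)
Fix a representative crossing triangle, say $\Delta_{ABF}(a,b,f)$ with $u=p_{AB}(a,b)$ on $s_{AB}\subset C_\L$ and the short side $p_{FD}(f,1)p_{FD}(f,k)$ on $s_{FD}\subset C_\R$. All other crossing triplets follow by the symmetry of the construction (reflection in the $x$-axis and in the origin). I would first choose $T$ and then $\tau$. Throughout I treat the two chains as tiny, of diameter $O(\eps)$ each, sitting near $(-T,0)$ and $(T,0)$, and I work up to $O(\eps)$ and $O(1/T)$ angular errors.

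\textbf{Step 1: the long side is nearly horizontal.} The segment $uu'$ joins a point of $C_\L$ to a point near $C_\R$, so its direction is within $O(1/T)$ of the $x$-axis. Take $T=T(\eps)$ large enough that this angle is at most $\eps/100$. The half-square has its right angle at $\hat u$ above the base, so the two legs $u\hat u$ and $\hat u u'$ have slopes roughly $+1$ and $-1$, each off by at most $\eps/100$. Along $s_{AB}$ the chain has slope $-1-\eps/10$, which is strictly steeper than the leg through $u$. Together with the convex placement of $C_\L$ (every segment from $C_\L$ to $C_\R$ avoids the interior of $\conv(C_\L)$), this makes the triangle meet $\conv(C_\L)$ only at the vertex $u$. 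The same slope comparison is needed for the other left segments $s_{BC}$ and $s_{CA}$ and for the case of $\hat u$ lying above. This gives the ``point $u$'' part of (1).

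\textbf{Step 2: the right end.} By definition $u'$ is placed so that the triangle cuts $s_{FD}$ exactly in the segment $I_f=p_{FD}(f,1)p_{FD}(f,k)$. Near the right end the triangle is a wedge whose sides are the base (nearly horizontal) and the leg $\hat u u'$ (slope about $-1$). The segment $s_{FD}$ is the reflection of $s_{AB}$, with slope near $-1$ but steeper by $\eps/10$, so the two sides cross it transversally. The part of the triangle beyond $s_{FD}$ is therefore the small triangle $u' p_{FD}(f,1)p_{FD}(f,k)$. Its angle at $u'$ is about $\pi/4$. Equivalently, $u'$ lies on the arc over $I_f$ with inscribed angle $\pi/4$, and this arc has radius $O(|I_f|)=O(\eps/(\tau k))$. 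It remains to show that this small triangle does not leave $\conv(C_\R)$ through another chain segment in a way that adds extra intersection, which follows because it is so small relative to the chain. This completes (1).

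\textbf{Step 3: disjointness from other edge triangles (property (2)).} An edge triangle $\Delta_{DEF}(d,e,f')$ has its vertex $p_{FD}(f',d)$ inside the block of $s_{FD}$ for $f'$. Consecutive blocks are separated by gaps of length about $|s_{FD}|/k\cdot(1-1/\tau)$. The other two vertices of $\Delta_{DEF}$ lie on $s_{DE}$ and $s_{EF}$, on the far side of the segment $s_{FD}$ from $u'$ (inside the chain). The small triangle sits on the outer side of $s_{FD}$, within distance $O(|I_f|)$ of $I_f$, since it is bounded by the arc. Taking $\tau=\Omega(1/\eps)$ makes the block $I_f$ so short relative to the gap that the arc region cannot reach any edge triangle with $f'\ne f$. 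Here I use that the sides of $\Delta_{DEF}(d,e,f')$ leave $p_{FD}(f',d)$ into the interior side. Edge triangles on the left chain are far away, so they are trivially disjoint.

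\textbf{Main obstacle.} The hard part is Step 3, specifically controlling the edge triangles in the neighbouring blocks $f\pm1$. Their sides leaving $p_{FD}(f\pm1,\cdot)$ could, because of the angle of $s_{FD}$, bend back toward $I_f$. I would first bound the angle between those sides and $s_{FD}$ from below by a constant depending on $\eps$. I would then choose $\tau$ so that the $O(|I_f|)$-neighbourhood of $I_f$ is cleared, comparing against the gap of roughly $|s_{FD}|/k$.
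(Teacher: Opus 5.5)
Your proposal is correct and follows essentially the same route as the paper's proof. You take $T=\Theta(1/\varepsilon)$ so that connecting segments are nearly horizontal and the chain-segment directions miss the angular cones of the crossing triangle at $u$ and $u'$; you then use that $u'$ lies on the inscribed-angle-$\pi/4$ arc over $p_{FD}(f,1)p_{FD}(f,k)$, of size $O(|s_{FD}|/(\tau k))$, while edge triangles of other blocks (whose sides meet $s_{FD}$ at angle $\Omega(\varepsilon)$) stay $\Omega(\varepsilon|s_{FD}|/k)$ away, so $\tau=\Theta(1/\varepsilon)$ suffices.

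One slip in Step~3: the small triangle $u'p_{FD}(f,1)p_{FD}(f,k)$ lies on the same side of $s_{FD}$ as the edge triangles, not on the outer side. It is inside the convex hull of the right chain, as (1) and your own Step~2 assert, and that is precisely why the angle lower bound in your last paragraph is needed. Also, $s_{FD}$ is the image of $s_{CA}$, not $s_{AB}$, and only the six marked segments have length $\varepsilon/100$; the chains themselves have diameter about $\sqrt{2}$. Neither of these affects the argument.
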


\begin{proof}
    First, notice that the placement of the segments (\Cref{fig:triangle_fat}(i)) ensures that $s_{AB}$, $s_{CA}$, $s_{DE}$,and $s_{FD}$ have angle $\pi/4+c_0\eps$ with the $x$-axis for some $c_0>0$. Thus, if we set $T=c_1/\eps$ for a large enough $c_1$, then any segment connecting the left and right convex chain has angle at most $\frac{c_0}{2}\eps$ with the $x$-axis, and thus the segment directions of $s_{ZW}$ ($ZW\in \{DE,EF,FD\}$) do not fall into the angle cone of $u'$ for any crossing triangle $\Delta_{XYZ}(x,y,z)$. The symmetric claim holds for crossing triangles that intersect one of $s_{AB},s_{BC},s_{CD}$ in an interval of positive length. This concludes the proof of (1).
    
    Due to the placement of the segments (\Cref{fig:triangle_fat}(i)) we have that any side of the triangle $\Delta_{DEF}(d,e,f)$ and any of the segments $s_{DE},s_{EF},s_{FD}$ have angle at least $c_2\eps$ for some $\eps>0$ (\Cref{fig:triangle_fat}(ii)). Observe that for the vertex $u'$ of the crossing triangle $\Delta_{DEF}(d,e,f)$ the angle subtended by $p_{FD}(f,1)p_{FD}(f,k)$ from $u'$ is $\pi/4$, thus $u'$ is on a circular arc that has diameter $O(|p_{FD}(f,1)p_{FD}(f,k)|)=O(|s_{FD}|/(\tau k))$ (see \Cref{fig:triangle_fat}(iii)). On the other hand, the distance between the segment $p_{FD}(f,1)p_{FD}(f,k)|$ and any other point $p_{FD}(f',d)$ for $f'\neq f$ is at least $|s_{FD}|/(2 k)$ when $\tau\geq 2$. Consequently, the distance of any point on the segment $p_{FD}(f,1)p_{FD}(f,k)|$ and any triangle $\Delta_{DEF}(d,e,f')$ where $f'\neq f$ is at least $\Omega(\eps|s_{FD}|/ k)$ (see \Cref{fig:triangle_fat}(iv)). Thus, if we set $\tau=c_3/\eps$ for some large enough $\tau$, then $u'p_{FD}(f,1)p_{FD}(f,k)$ is disjoint from $\Delta_{DEF}(d,e,f')$ for all $f'\neq f$. This concludes the proof of (2).
\end{proof}

With \Cref{lem:fattriangle_lb} at hand, it is straightforward to verify each property of Observation~\ref{obs:triangle_lower}. We obtain the following lower bound.

\begin{theorem}\label{thm:lb_fattriangle}
    Assuming the 3-uniform 6-hyperclique hypothesis,  there is no $O(n^{2-\eps})$ time algorithm for deciding if the intersection graph of a given set of $n$ fat triangles (or even $\delta$-half-squares for any given $\delta>0$) in the Euclidean plane has diameter at most 2. 
\end{theorem}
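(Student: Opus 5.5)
The plan is to reuse the reduction of \Cref{thm:lb_thintriangle} verbatim, with the fat realization described just before \Cref{lem:fattriangle_lb}, and to check that the combinatorial conditions of Observation~\ref{obs:triangle_lower} still hold. Given a 6-partite 3-uniform hypergraph $G$ with $k=n^{1/3}$ vertices per part, we build $G'$ as follows. The edge triangles are $\Delta_{ABC}(a,b,c)$ and $\Delta_{DEF}(d,e,f)$, which are $\delta$-half-squares by the choice of chain geometry (segments of length $\delta/100$, slopes $\pm(1+\delta/10)$, base $v_2v_5$ of length $\sqrt2+\delta/2$). The crossing triangles $\Delta_{XYZ}(x,y,z)$ are the unique half-squares with one vertex $u=p_{XY}(x,y)$ and cutting $s_{ZW}$ exactly in the segment $p_{ZW}(z,1)p_{ZW}(z,k)$. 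Finally we add the two dummy half-squares $\Delta_\L,\Delta_\R$. There are $O(k^3)=O(n)$ triangles, and the construction takes $O(n)$ time. Hence an $O(n^{2-\eps})$ diameter-2 algorithm would detect 6-hypercliques in $O(k^{6-3\eps})$ time, contradicting 3H6.

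The core of the argument is verifying items 1--5 of Observation~\ref{obs:triangle_lower}, using \Cref{lem:fattriangle_lb} with $\tau=\Theta(1/\delta)$ and $T=\Theta(1/\delta)$.

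\textbf{Item 3.} The two edge families lie inside $\conv(C_\L)$ and $\conv(C_\R)$, which are disjoint, so $\Delta_{ABC}(a,b,c)\cap\Delta_{DEF}(d,e,f)=\emptyset$.

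\textbf{Items 1 and 2.} By part (1) of the lemma, a crossing triangle meets the chain hulls only in the point $u$ and in the small triangle $u'p_{ZW}(z,1)p_{ZW}(z,k)$.
\begin{itemize}
\item Take the point $u=p_{XY}(x,y)$. It lies in an edge triangle on that side exactly when that triangle has the vertex $p_{XY}(x,y)$, i.e.\ agrees on $(x,y)$. This uses the fact that an edge triangle meets $s_{XY}$ only at its vertex, because its other two vertices lie on other chain segments and the hull is convex.
\item Take the small triangle. By part (2) of the lemma it misses every edge triangle not incident to $z$. Conversely, every edge triangle incident to $z$ has a vertex $p_{ZW}(z,w)$ on the covered segment, so it does intersect.
\end{itemize}
Together these give exactly the ``iff'' conditions in items 1 and 2. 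For item 1 the point $u$ forces agreement on the pair on one side, and the small triangle forces agreement on the single index on the other side. For item 2 only one index of the edge triangle must match, via the $u$-side or the $z$-side as appropriate.

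\textbf{Items 4 and 5.} Place $\Delta_\L$ as a large half-square in $x<0$ containing $\conv(C_\L)$, with its hypotenuse near the $y$-axis. Every crossing triangle reaches from the left chain to the right chain, so it crosses the $y$-axis region and meets both $\Delta_\L$ and its mirror image $\Delta_\R$. Meanwhile $\Delta_{ABC}\cap\Delta_\R=\emptyset$ because $\Delta_\R$ lies in $x>0$. The one thing to check is that a half-square can cover $\conv(C_\L)$ while staying in $x<0$, which is clear since $C_\L$ lies at distance $\Theta(T)$ to the left.

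With the observation established, the remaining steps copy the proof of \Cref{thm:lb_thintriangle}. A pair consisting of $\Delta_{ABC}(a,b,c)$ and $\Delta_{DEF}(d,e,f)$ is at distance greater than 2 iff they have no common neighbor. They are not adjacent by item 3, and they share no dummy neighbor by item 5. A common neighbor must therefore be a crossing triangle agreeing with both, which exists iff some crossing triplet is a non-edge. So the pair is at distance greater than 2 iff the six vertices form a hyperclique.

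All other pairs are at distance at most 2:
\begin{itemize}
\item Two triangles on the same side share the neighbor $\Delta_\L$ (or $\Delta_\R$).
\item A crossing triangle is adjacent to both dummies, so it reaches every triangle within two hops.
\item Two crossing triangles share a dummy neighbor.
\end{itemize}

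The main obstacle is the quantitative geometry in \Cref{lem:fattriangle_lb}(2), namely that the apex $u'$ of the half-square never pokes into an edge triangle $\Delta_{DEF}(d,e,f')$ with $f'\neq f$. The plan is to use the arc argument. The point $u'$ lies on the arc seeing $p_{FD}(f,1)p_{FD}(f,k)$ at angle $\pi/4$, which has size $O(|s_{FD}|/(\tau k))$. Neighboring groups are separated by at least $|s_{FD}|/(2k)$, and the edge-triangle sides meet $s_{FD}$ at angle $\Omega(\delta)$. Taking $\tau=c/\delta$ large therefore separates the arc from those triangles. I would verify this together with the angle bound from $T=c/\delta$, which ensures that the base $uu'$ is nearly horizontal and that $u$ does not enter other hulls.
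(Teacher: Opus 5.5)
Your proposal is correct and follows the paper's route: the thin-triangle reduction realized with the fat chain geometry, with \Cref{lem:fattriangle_lb} used to verify each item of Observation~\ref{obs:triangle_lower}. The paper only asserts that this verification is straightforward, and your item-by-item check supplies those details. One small point that you and the paper both gloss over: pairs such as $\Delta_{\mathrm{left}}$ versus $\Delta_{DEF}(d,e,f)$, or $\Delta_{\mathrm{left}}$ versus $\Delta_{\mathrm{right}}$, need a crossing triangle as a common neighbor; this fails only when a hyperclique exists or one side has no hyperedges at all, and the latter case can be detected trivially beforehand.
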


\section{Diameter-2 Lower Bound for Segments}\label{sec:lb:seg:diam2}

In this section, we prove a near-quadratic lower bound for \Diameter-2 for line segments for combinatorial algorithms, assuming the combinatorial 4-clique hypothesis.
Our construction uses a geometric configuration that is similar to (but a bit simpler than) our lower bound construction for triangles in \Cref{sec:triangle}. We will also rely on a very similar encoding of number pairs $[k]^2$.

Let $C_\L$ be the convex chain given by the vertices $v_1=(-3,3), v_2=(-2,1), v_3=(-2,-1), v_4=(-3,-3)$. Let $s_{AB},s_{BA}$ denote the segments $v_1v_2,v_3v_4$, respectively. We define $C_\R$ to be the reflection of $C_\L$ on the origin, and we define the segments $s_{CD},s_{DC}$ analogously along $C_\R$. See \Cref{fig:segment_lb_basic}.

Let $G=(V_A\cup V_B\cup V_C\cup V_D)$ be a 4-partite graph with $k:=n^{1/2}$ vertices in each of the four parts. Our idea is to encode some edges and non-edges of this graph via certain segments. First, we associate specific points on the previously defined segments $s_{AB},s_{BA},s_{CD},s_{DC}$ with certain pairs $(x,y)\in [k]^2$. We will then represent certain edges and non-edges of $G$ with the help of these points.

\begin{figure}
    \centering
    \includegraphics{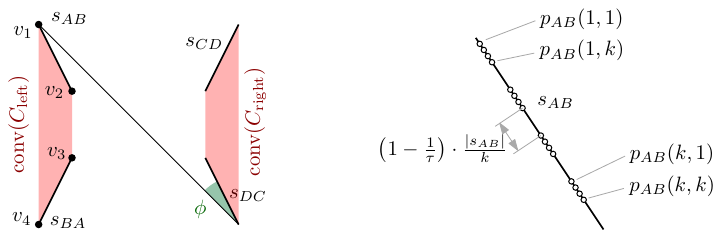}
    \caption{Left: the segments $s_{AB},s{BA},s_{CD},s_{DC}$. Any segment connecting a left and right segment among them has angle at least $\phi$ with them. Right: placing the points $p_{AB}(a,b)$ along $s_{AB}$.}
    \label{fig:segment_lb_basic}
\end{figure}

Consider first the segment $s_{AB}$. For each $(a,b)\in [k]^2$ let $p_{AB}(a,b)$ be the point of $s_{AB}$ at distance
\[|s_{AB}|\cdot\left(\frac{a-1}{k}+\frac{b}{\tau k^2}\right)\]
from $v_1$, where $|s_{AB}|$ denotes the length of $s_{AB}$, and $\tau\geq 2$ is a constant we can choose freely; for now, we can set $\tau=2$. Similarly, for each $(b,a)\in [k]^2$ we set $p_{BA}(b,a)$ to be the point of $s_{BA}$ at distance $|s_{BA}|\cdot(\frac{b-1}{k}+\frac{a}{\tau k^2})$ from $v_3$ . Define the points $p_{CD}(c,d)$, $p_{DC}(d,c)$ along the segments $s_{CD},s_{DC}$ analogously.

In each part $V_X$ of $G$, $X\in \{A, B, C, D\}$, we index the vertices from $1$ to $k$, i.e., $V_X=\{v^X_1,\dots,v^X_k\}$. We will drop the superscripts of the vertices when they can be inferred from the context.
Let $\{X,Y\}$ be distinct symbols among $\{A,B,C,D\}$. The pair $\{X,Y\}$ is considered a \emph{crossing pair} if $\{X,Y\}$ has a non-empty intersection with both $\{A,B\}$ and $\{C,D\}$. 

Let $G'$ be the intersection graph of the following segments (see \Cref{fig:segment_lb_adv}):
\begin{itemize}
    \item For each edge $(v_a,v_b)\in E(G)\cap (V_A\times V_B$) add the segment
    \[t_{AB}(a,b)\text{ with endpoints } p_{AB}(a,b),p_{BA}(b,a).\]
    We call these segments \EMPH{left segments}.
    \item For each edge $(v_c,v_d)\in E(G)\cap (V_C\times V_D)$ add the triangle
    \[t_{CD}(c,d)\text{ with endpoints } p_{CD}(c,d), p_{DC}(d,c).\]
    We call these segments \EMPH{right segments}.
    \item For each crossing pair $\{X,Y\}$ with $XX'\in \{AB,BA\}$ and $YY'\in \{CD,DC\}$ and each \emph{non-edge} $(v_x,v_y)\in (V_X\times V_Y) \setminus E(G)$ add a segment $h_{XY}(x,y)$ as defined in \Cref{lem:cross-segment} below. We call these segments \EMPH{crossing segments}.
    \item Add the four sides of the square with vertices $(-3,0),(0,-3),(3,0),(0,3)$ as well as its vertical diagonal from $(0,3)$ to $(0,-3)$. These five segments are called \emph{dummy} segments.
\end{itemize}

The crossing segments are guaranteed due to the following lemma.

\begin{lemma}\label{lem:cross-segment}
    For each crossing pair $\{X,Y\}$ with $XX'\in \{AB,BA\}$ and $YY'\in \{CD,DC\}$ and each $(x,y)\in [k]^2$ there exists a segment $h_{XY}(x,y)$ that intersects the segment $t_{XX'}(\tilde x,x')$ (resp., $t_{YY'}(\tilde y,y')$) if and only if $x=\tilde x$ (resp., $y=\tilde y$).
\end{lemma}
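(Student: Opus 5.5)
We will construct $h_{XY}(x,y)$ explicitly by exploiting the block structure of the point encodings. Fix a crossing pair, say $X=A$ (so $XX'=AB$) and $Y=C$ (so $YY'=CD$); the other three cases follow by the symmetries of the configuration (reflection in the $x$-axis swaps $s_{AB}$ with $s_{BA}$ and $s_{CD}$ with $s_{DC}$).

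On $s_{AB}$, the endpoints $p_{AB}(x,b)$ for $b\in[k]$ all lie in the \emph{block} $I_{AB}(x)$. This block is the subsegment of $s_{AB}$ at distances in $|s_{AB}|\cdot[\frac{x-1}{k}+\frac{1}{\tau k^2},\frac{x-1}{k}+\frac{1}{\tau k}]$ from $v_1$. Since $\tau\ge 2$, consecutive blocks are separated by gaps of length at least $|s_{AB}|/(2k)$. We define $I_{CD}(y)$ on $s_{CD}$ the same way. The left segments $t_{AB}(\tilde x,\cdot)$ all start at a point of $I_{AB}(\tilde x)$ and go to $s_{BA}$, so they stay inside $\conv(C_\L)$; similarly, right segments stay inside $\conv(C_\R)$.

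The natural candidate is a segment $h$ that touches $s_{AB}$ exactly along $I_{AB}(x)$ and $s_{CD}$ exactly along $I_{CD}(y)$. Such a segment would intersect every left segment whose endpoint lies in $I_{AB}(x)$ and no other, and likewise on the right. A single straight segment cannot contain two non-collinear intervals, however. Instead, we take $h_{XY}(x,y)$ to be a segment that enters $\conv(C_\L)$ only very slightly, through the gap region near $I_{AB}(x)$. Concretely, we will choose $h$ so that $h\cap\conv(C_\L)$ is a tiny triangle-like sliver whose intersection with $s_{AB}$ is a subinterval $J$ with $I_{AB}(x)\subseteq J$, and whose other endpoints stay within the gaps adjacent to $I_{AB}(x)$.

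The cleanest way to do this is to make $h$ nearly parallel to $s_{AB}$ near the left end and nearly parallel to $s_{CD}$ near the right end. That is impossible for a single segment, so we use the key geometric fact stated in the figure: every segment joining a point of a left segment to a point of a right segment makes angle at least $\phi$ with $s_{AB},s_{BA},s_{CD},s_{DC}$. This means the segment $h$ from the midpoint of $I_{AB}(x)$ to the midpoint of $I_{CD}(y)$ crosses $s_{AB}$ and $s_{CD}$ transversally.

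So we will modify the approach. Take $h$ to be the segment from a point $q_L$ just inside $\conv(C_\L)$ to a point $q_R$ just inside $\conv(C_\R)$, lying on the line through the two block midpoints. The point $q_L$ is at distance $\delta$ beyond $s_{AB}$, with $\delta$ small. Then $h$ meets a left segment $t_{AB}(\tilde x,b)$ if and only if that segment passes through the small triangle cut off by $h$ near its crossing point with $s_{AB}$. Transversality with angle $\ge\phi$ guarantees that this region has diameter $O(\delta/\sin\phi)$. We choose $\delta$ so that this region misses all blocks $I_{AB}(\tilde x)$ for $\tilde x\neq x$ (these are at distance $\ge|s_{AB}|/(2k)$). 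We also need the region to capture every left segment starting in $I_{AB}(x)$, which requires $\delta$ large compared to the block length $|s_{AB}|/(\tau k)$.

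Both requirements can be met simultaneously by choosing $\tau$ large, as in the proof of \Cref{lem:fattriangle_lb}. We take $\delta=\Theta(|s_{AB}|/k)$ with a constant depending on $\phi$, and then choose $\tau=\Omega(1/\phi)$. One more condition is needed: the left segments emanating from $I_{AB}(x)$ must not turn away from $h$. These segments go toward $s_{BA}$, that is, to the other side of the chain interior, so they do cross the tip of $h$.

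We also check that $h$ does not cross any left segment deep inside the hull, since $h$ never goes deeper than $\delta$. Finally, we check that the middle part of $h$, outside both hulls, meets no left or right segments; this holds by convexity of the chains.

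The main obstacle is this quantitative balancing. The tip of $h$ must penetrate deep enough to hit all $k$ segments starting in block $I_{AB}(x)$, including those whose direction toward $s_{BA}$ is nearly parallel to $h$. At the same time it must be shallow enough to avoid the neighbouring blocks. Controlling this is exactly where the angle bound $\phi$ of the chain placement, together with a large choice of $\tau$, is used.
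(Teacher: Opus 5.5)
Your proposal has a genuine gap. It is at the step you flag and then dismiss: the claim that the left segments from $I_{AB}(x)$ ``go toward $s_{BA}$, \ldots\ so they do cross the tip of $h$.''

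Extend your line, through the block midpoint $m_L$ and a point of the right chain, into $\mathrm{conv}(C_{\mathrm{left}})$. Because the right chain lies far to the right, this chord leaves the hull through the vertical edge $v_4v_1$. For example, from $(-2.5,2)$ toward $(2.5,-2)\in s_{CD}$ it exits at $(-3,2.4)$. So the line cuts off only the corner at $v_1$. The part of $s_{AB}$ between $m_L$ and $v_2$, together with all of $s_{BA}$, lies on one side of the line.

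Consequently, take any $t_{AB}(x,b)$ whose endpoint $p_{AB}(x,b)$ lies between $m_L$ and $v_2$. An example is $t_{AB}(x,k)$, which runs to $p_{BA}(k,x)$ near $v_4$. Such a segment has both endpoints on the same side of the line of $h$, so $h$ never meets it, whatever $\delta$ and $\tau$ you choose. Crossing $s_{AB}$ in the middle of the block therefore loses roughly half of the block. The obstacle is not a quantitative balancing of depth against gap; no choice of parameters repairs it.

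The missing idea is to make the line pass through the endpoint of the block lying away from the cut-off corner. On the left this is $p_{AB}(x,k)$, the end nearer $v_2$; on the right, use the analogous block endpoint. This is what the paper does: it takes the line through $p_{AB}(x,k)$ and a block endpoint on the right chain. It then stops the segment at the point $u$ where this line meets the segment $p_{AB}(x,1)v_4$.

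With this choice the whole block lies in the cut-off corner. Each $t_{AB}(x,b)$ is a chord of the convex quadrilateral $Q_x=p_{AB}(x,1)\,p_{AB}(x,k)\,v_3\,v_4$, joining the side $p_{AB}(x,1)p_{AB}(x,k)$ to the side $v_3v_4$. The cut $p_{AB}(x,k)u$ separates these two sides, so every such chord meets $h$.

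For the converse, bound from below the angle at $u$ of the triangle $p_{AB}(x,1)\,p_{AB}(x,k)\,u$. This uses both the angle $\phi$ between $h$ and $s_{AB}$ and the angle $\psi$ between left segments and $s_{AB}$; your sketch only invokes transversality with $s_{AB}$. The bound places $u$ within $O(|s_{AB}|/(\tau k))$ of the block. Left segments of other blocks start at distance at least $(1-\frac1\tau)|s_{AB}|/k\ge |s_{AB}|/(2k)$ from the block and make angle at least $\psi$ with $s_{AB}$. So for $\tau$ a large enough constant, the tip of $h$ misses all of them. The right chain is handled symmetrically.
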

\begin{proof}
    We prove the lemma for $XX'=AB$ and $YY'=DC$ and the segment $h_{AD}(a,d)$; the other variants can be proven analogously. See \Cref{fig:segment_lb_adv}.
    Let $\ell_{ad}$ denote the line $p_{AB}(a,k)p_{DC}(d,1)$. Let $u$ be the intersection of $\ell_ab$ and the line $p_{AB}(a,1)v_4$. Similarly, let $u'$ denote the intersection of $\ell_ab$ and the line $p_{DC}(d,1)v'_2$, where $v'_2=(2,1)$ is the lower endpoint of $s_{CD}$. Then we define  $h_{AD}(a,d)$ to be the segment $uu'$, as depicted in \Cref{fig:segment_lb_adv}(iii).

    Let $Q_a$ denote the convex quadrilateral $p_{AB}(a,1)p_{AB}(a,k)v_3v_4$. Every segment $t_{AB}(a,b)$ for all $b\in [k]$ connects the side $p_{AB}(a,1)p_{AB}(a,k)$ of $Q_a$ to its opposite side $v_3v_4$. Let $T_a$ denote this set of segments. Consequently, any segment from $p_{AB}(a,k)$ to any point on the side $p_AB(a,1)v_4$ intersects all segments in $T_a$. In particular, it follows that  $h_{AD}(a,d)$ intersects all segments $t_{AB}(a,b)$ for $b\in [k]$. It remains to show that $h_{AD}(a,d)$ is disjoint from all segments $t_{AB}(a',b)$ where $a'\neq a$ and $a',b\in [k]$.

    Recall that any line intersecting both $s_{AB}$ and $s_{DC}$ has an angle at least $\phi>0$ with $s_{AB}$ (\Cref{fig:segment_lb_basic} left). On the other hand, any segment connecting $s_{AB}$ and $s_{BA}$ has an angle at least $\psi>0$ with $s_{AB}$ (\Cref{fig:segment_lb_adv}(ii)). Thus, two such lines have an angle of at least $\phi+\psi>0$, and in particular, $\angle \big(p_{AB}(a,k)\,u\,p_{AB}(a,1)\big) \geq \phi+\psi$, thus $u$ is within the circular arc $\gamma_a$ from which the segment $p_{AB}(a,k)p_{AB}(a,1)$ is subtended by an angle of size $\phi+\psi$. Consequently, the distance of $u$ from $s_{AB}$ is $O(|p_{AB}(a,k)p_{AB}(a,1)|) = O(|s_{AB}|/(k\tau))$. On the other hand, $\gamma_a$ has distance at least $(1-\frac1\tau)\cdot |s_{AB}|/k-O(|s_{AB}|/(k\tau))$ from any point $p_{AB}(a',b)$ when $a'\neq a$; in particular, for $\tau$ large enough, this distance is at least $|s_{AB}|/(2k)$. Thus, any line segment $t_{AB}(a',b)$ has angle at least $\psi$ with $s_{AB}$ and intersects $s_{AB}$ at a distance at least  $|s_{AB}|/(2k)$, so if $\tau$ is a large enough constant, then we have that $t_{AB}(a',b)$ is disjoint from the circle $\gamma_a$. This implies that $h_{AD}(a,d)$ is disjoint from $t_{AB}(a',b)$, concluding the proof.
\end{proof}

\begin{figure}
    \centering
    \includegraphics[width=\linewidth]{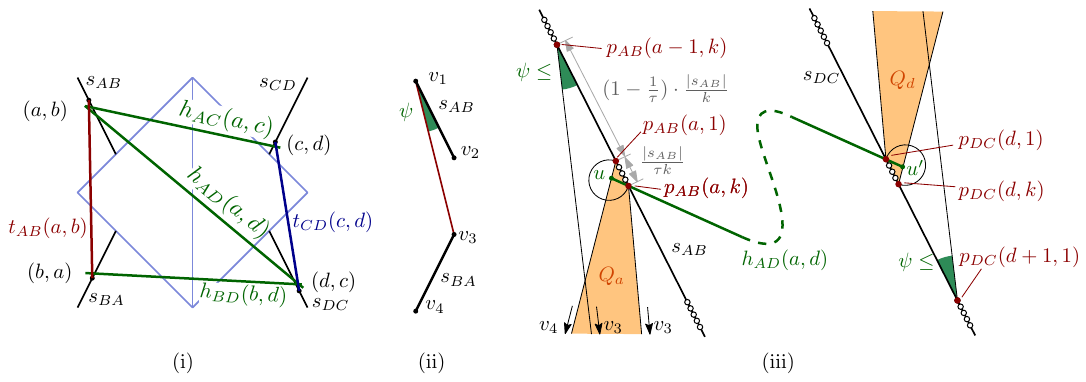}
    \caption{(i) The construction with segments, with dummy segments in light blue. The segment $h_{AD}(a,d)$  (green) corresponding to a crossing pair is lengthened beyond $p_{AB}(a,k)$ to cross all segments $t_{AB}(a,b)$ for all $b\in [k]$. (ii) Any segment $t_{AB}(a,b)$ has angle at least $\psi=\angle v_3v_1v_2$ with both $s_{AB}$ and $s_{BA}$. (iii) The two ends of the segment $h_{AD}(a,d)$ (the middle of this picture is distorted). Segment $h_{AD}(a,d)$ crosses the quadrilateral $Q_a$ (orange, left) to ensure that it intersects all segments $t_{AB}(a,b)$ for all $b\in [k]$. The angle of $h_{AD}(a,d)$ with the segment $p_{AB}(a,1)v_4$ is bounded by a constant, thus its endpoint stays inside the circle. By setting $\tau$ large enough, we ensure that $h_{AD}(a,d)$ does not intersect any segment $t_{AB}(a',b')$ where $a'\neq a$.}
    \label{fig:segment_lb_adv}
\end{figure}

We are now ready to prove our lower bound for segment intersection graphs.

\begin{theorem}\label{thm:segment_lower} Assuming the combinatorial 4-clique hypothesis,  there is no $O(n^{2-\eps})$ time combinatorial algorithm for deciding if the intersection graph of a given set of $n$ segments in the plane has diameter at most 2. 
\end{theorem}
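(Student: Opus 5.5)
We reduce combinatorial 4-clique on a 4-partite graph $G$ with parts $V_A,V_B,V_C,V_D$ of size $k$ to \Diameter-2 on the segment set $G'$ built above, which has $n' = O(k^2)$ segments. The construction takes $O(k^2)$ time: left and right segments come from edges, crossing segments from non-edges of the four crossing pairs, and there are $O(1)$ dummies. The reduction is combinatorial. So an $O(n'^{2-\eps})$ combinatorial algorithm would detect 4-cliques in $O(k^{4-2\eps})$ time, contradicting the combK4 hypothesis. The theorem therefore follows once we show that $G'$ has diameter at least 3 if and only if $G$ has a 4-clique.

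\emph{Key equivalence.} Take a left segment $t_{AB}(a,b)$ and a right segment $t_{CD}(c,d)$. Left and right segments are disjoint, since $t_{AB}(a,b)$ lies in $x<-2$ and $t_{CD}(c,d)$ in $x>2$. Their common neighbours can therefore only be crossing segments or dummies. We must check that dummies are not common neighbours: dummies meeting left segments must avoid right segments, and vice versa. The diamond with vertices $(\pm3,0),(0,\pm3)$ sits strictly between the chains. Its left corner $(-3,0)$ lies at $x=-3$, left of $v_2,v_3$ at $x=-2$, so the left side of the diamond does meet the left segments. Symmetrically the right sides meet the right segments, and no single dummy segment reaches both sides.

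By \Cref{lem:cross-segment}, a crossing segment $h_{XY}(x,y)$ meets $t_{AB}(a,b)$ exactly when its left coordinate matches ($x=a$ if $X=A$, $x=b$ if $X=B$), and meets $t_{CD}(c,d)$ exactly when its right coordinate matches. Hence $t_{AB}(a,b)$ and $t_{CD}(c,d)$ share a neighbour iff one of the pairs $ac,ad,bc,bd$ is a non-edge. So their distance exceeds 2 iff $ab$, $cd$ (both present, since the segments exist) and all four crossing pairs are edges, i.e.\ iff $\{a,b,c,d\}$ is a 4-clique.

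\emph{All other pairs are within distance 2.} This is where I expect the main work; it is the analogue of the ``one can verify'' step in \Cref{thm:lb_thintriangle}, and the five dummies exist to enforce it. The diagonal from $(0,3)$ to $(0,-3)$ meets every crossing segment, since each crossing segment connects the two chains and so crosses $x=0$ within $|y|\le 3$; it also meets all four diamond sides. The left diamond sides ($(-3,0)$–$(0,\pm3)$) meet every left segment. I would check this via the coordinates: $t_{AB}(a,b)$ runs from $s_{AB}$ (upper-left, from $(-3,3)$ to $(-2,1)$) to $s_{BA}$ (lower-left), so it passes through the region around $x\in[-3,-2]$, $y\approx 0$. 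If necessary the dummies can be adjusted slightly so that each left segment crosses a left side, and similarly each left side crosses every crossing segment near the chains.

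With these incidences the cases are routine. Two left segments share a left dummy side as neighbour. A left and a crossing segment also share a left dummy side, which meets the crossing segment. Two crossing segments share the diagonal. Dummies reach everything within 2 via the diagonal and the sides. Symmetric statements handle the right side. Left versus right then remains the only case that can have distance 3, which completes the equivalence.

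The main obstacle is checking that the dummy sides intersect every crossing segment and every same-side chain segment, while no dummy meets segments on both sides. If the concrete coordinates fail, I would adjust the diamond: shrink it horizontally so that its left and right corners lie inside the convex hulls of the left and right chains respectively, keeping it symmetric.
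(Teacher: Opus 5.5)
Your proposal is correct and matches the paper's proof: the same reduction from 4-partite 4-clique to $O(k^2)$ segments, and the same key equivalence via \Cref{lem:cross-segment}, with dummies excluded as common neighbours of a left and a right segment. It also uses the same dummy-based argument for all other pairs, which you spell out in more detail than the paper's ``one can verify''. One overstatement to fix: a left side of the diamond does not meet every crossing segment (one starting near $s_{BA}$ misses the upper-left side), but your case analysis only needs two facts. First, every crossing segment meets some left side. Second, every left segment meets both left sides, which already holds for the given coordinates: both endpoints satisfy $|x|+|y|>3$ and the segment crosses $y=0$ at some $x\in(-3,-2)$. So no adjustment of the diamond is needed, and the shrinking you propose as a fallback would in fact break the second fact, since a nearly vertical left segment near $x=-3$ would miss the shrunk diamond.
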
 

\begin{proof}
    Consider a pair of edges $(a,b)\in V_A\times V_B$ and $(c,d)\in V_C\times V_D$. We claim that $t_{AB}(a, b)$ and $t_{CD}(c,d)$ have hop distance at least $3$ in the constructed intersection graph if and only if $(a,b,c,d)$ is a $4$-clique in $G$. Indeed, they have hop distance at least $3$ if and only if there is no cross-edge $h_{XY}$ connecting them. Since the segments $h_{XY}$ are defined for non-edges, this is equivalent to saying that all of the edges $ac,ad,bc,bd$ are in $G$, thus $(a,b,c,d)$ is a $4$-clique.

    It remains to show that no other pair of segments can have hop distance at least $3$ in the constructed intersection graph. Indeed, one can verify that unless $s$ is a left segment and $s'$ is a right segment (or vice versa), there is always a dummy segment that is a shared neighbor of $s$ and $s'$.
    The construction takes linear time, thus the theorem follows.
\end{proof}

\section{Diameter-1 (Clique) Lower Bound for Strings}\label{sec:diameter1-string}

\begin{theorem}\label{thm:string_lower}
Assuming the OV hypothesis,  there is no $O(n^{2-\eps})$ time algorithm for deciding if the intersection graph of a given set of $n$ polygonal chains of complexity $O(\log n)$ in $\reals^2$ has diameter at most 1, or equivalently, for deciding if the induced intersection graph is a clique. 
\end{theorem}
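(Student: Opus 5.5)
We reduce from OV. Given $A,B\subseteq\{0,1\}^d$ with $|A|=|B|=n$ and $d=\Theta(\log n)$ (or slightly superlogarithmic, so the chains have complexity $O(d)$), we build one polygonal chain $\alpha_a$ for each $a\in A$ and one chain $\beta_b$ for each $b\in B$. The construction should satisfy three properties:
\begin{itemize}
\item any two $A$-chains intersect;
\item any two $B$-chains intersect;
\item $\alpha_a$ and $\beta_b$ intersect if and only if $a\cdot b\neq 0$.
\end{itemize}
Then the intersection graph is a clique exactly when no orthogonal pair exists. The construction takes $O(nd)$ time, so an $O(n^{2-\eps})$ clique test would refute OV.

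For the cross pairs we use $d$ disjoint \emph{gadget} regions $R_1,\dots,R_d$ placed left to right, say unit squares centered at $(3k,0)$.
\begin{itemize}
\item Every $A$-chain runs along a horizontal corridor above the squares, at height $1$.
\item Every $B$-chain runs along a corridor below the squares, at height $-1$.
\item In square $R_k$, if $a_k=1$, then $\alpha_a$ dips down through a vertical "finger" crossing the horizontal midline $y=0$ of $R_k$. If $a_k=0$ it stays above.
\item Symmetrically, $\beta_b$ pokes up through $R_k$ with a vertical finger if $b_k=1$.
\end{itemize}
To make fingers of $\alpha_a$ and $\beta_b$ in the same $R_k$ always cross, we make $A$-fingers full-width horizontal hooks at slightly distinct heights inside $R_k$ and $B$-fingers vertical spikes. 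Concretely, $\alpha_a$ with $a_k=1$ descends to height $-\delta_a$ and traverses the full width of $R_k$ horizontally before returning. $\beta_b$ with $b_k=1$ rises to height $+\delta'$ at some $x$ inside $R_k$. The spike then crosses the horizontal traverse. When either coordinate is $0$, the chains stay in their own half-planes $y>0$ resp.\ $y<0$ within $R_k$ and are disjoint there. Between gadgets the corridors are separated, so cross pairs meet only inside gadgets. Each chain has $O(d)$ vertices.

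For same-side pairs we add a common crossing region. Every $A$-chain starts at a prefix left of $R_1$ where all $A$-chains pass through a common point; equivalently, each starts with a segment through a fixed point $o_A$ far to the left. All $B$-chains likewise pass through $o_B$. We must ensure these prefixes do not create $A$–$B$ intersections. To this end we place $o_A$ at $(-10,5)$ and $o_B$ at $(-10,-5)$, with prefix segments confined to $y>1/2$ and $y<-1/2$ respectively. Chains in the same corridor may also overlap each other freely; that is harmless, since same-side pairs are supposed to intersect anyway.

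The main obstacle is the geometry inside a gadget. Several $A$-chains with $a_k=1$ and several with $a_k=0$ share $R_k$, and we need the dipping $A$-hooks not to accidentally cross a $B$-chain with $b_k=0$. The separation by the midline handles this: $A$-material with $a_k=1$ stays in $y\ge-\delta_{\max}$, while $B$-chains with $b_k=0$ stay at $y\le -1/2$. Choosing all $\delta$'s in $(0,1/4)$ and spike heights $\delta'\in(0,1/4)$ suffices. We also need the entry/exit vertical pieces of the hooks to stay inside the $A$ half of the corridor except within $R_k$. Once this case check is written out, the equivalence $a\cdot b=0\iff \alpha_a\cap\beta_b=\emptyset$ is immediate, and the theorem follows.
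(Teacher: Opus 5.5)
Your proposal is correct and follows the same route as the paper: a direct reduction from OV with one $O(d)$-vertex polygonal chain per vector, arranged so that an $A$-chain meets a $B$-chain exactly when the two vectors share a coordinate equal to $1$. The paper realizes this more compactly with the chains through $(i,u_i-1)$ and $(i,1-v_i)$, which lie in $y\le 0$ and $y\ge 0$ respectively and so can meet only on the line $y=0$.

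Your explicit handling of same-side pairs, via the shared corridors and the common points $o_A,o_B$, is a real gain in rigor, because the paper's proof never checks this. In the paper's construction two $A$-chains (or two $B$-chains) are disjoint only when one vector is all-ones and the other is the zero vector. That case has to be excluded separately, which is easy since a zero vector is orthogonal to everything.
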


\begin{proof}
    Let $A,B\subset \{0,1\}^d$ be a set of $n$ vectors each of dimension $d=O(\log n)$. We create the following polygonal chains:
    \begin{itemize}
        \item For each vector $u=(u_1,\dots,u_d)\in A$ let $C^A_u $ be the polygonal chain connecting the following sequence of vertices: $(1,u_1-1),(2,u_2-1),\dots,(d,u_d-1)$.
        \item For each vector $v=(v_1,\dots,v_d)\in B$ let $C^B_v$ be the polygonal chain connecting the following sequence of vertices: $(1,1-v_1),(2,1-v_2),\dots,(d,1-v_d)$.
    \end{itemize}
    This construction takes time linear in the total length of the vectors in $A$ and $B$.
    Observe that $C^A_u$ intersects $C^B_v$ if and only if for some index $i\in [d]$ we have $u_i-1=1-v_i$, which can only happen when $u_i=v_i=1$. Consequently, $C^A_u$ intersects $C^B_v$ if and only if $u\in A$ and $v\in B$ are non-orthogonal.

    Thus $A,B$ has an orthogonal pair if and only if the constructed intersection graph is not a clique.
\end{proof}

\section{Diameter Lower Bound for $3$-slope Segments}

\begin{figure}[t]
    \centering
    \includegraphics[width=\linewidth]{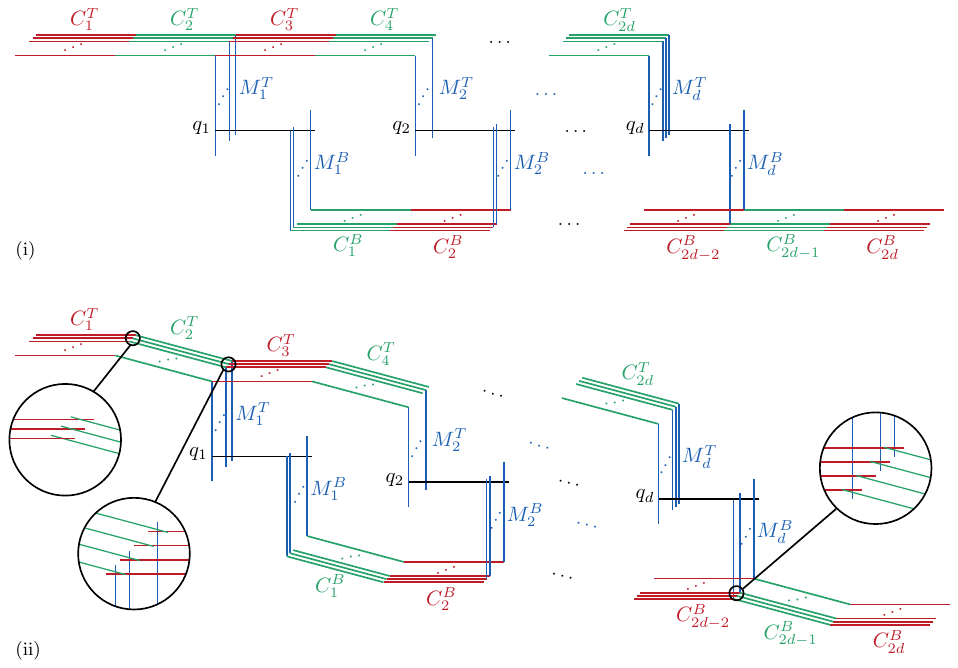}
    \caption{(i) The construction of~\cite{Bringmann2022-me} for axis-aligned segments. (ii) We introduce a small rotation on the green segment sets (those of even index in the top row and odd index in the bottom row). The resulting intersection graph is identical to the one in~\cite{Bringmann2022-me}, avoids degeneracies, and uses $3$ different slopes.}
    \label{fig:segment3slope}
\end{figure}

\begin{theorem}[Corollary of~\cite{Bringmann2022-me}]\label{thm:segment_3lope_lower}
Assuming the OV hypothesis,  there is no $O(n^{2-\eps})$ time algorithm for deciding if the intersection graph of a given set of $n$ segments with $3$ different slopes in $\reals^2$ has diameter at most $\Delta$ where $\Delta=O(\log n)$. 
\end{theorem}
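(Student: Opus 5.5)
We plan to derive the statement directly from the reduction of Bringmann \etal~\cite{Bringmann2022-me}, which shows that \Diameter-$\Delta$ for $\Delta=\Theta(\log n)$ is OV-hard on intersection graphs of axis-aligned segments. The only issue with their instance is degeneracy: collinear horizontal (or vertical) segments overlap. We will not change the intersection graph. Instead, we will change the realization so that the overlaps disappear, at the cost of adding one extra slope. This is what \Cref{fig:segment3slope} suggests.

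\textbf{Step 1: identify where the degeneracies occur.} We first recall the structure of the construction in~\cite{Bringmann2022-me}. It consists of groups (``rows'') of segment sets that lie on common lines. Overlaps occur only between consecutive sets in the same row. The sets alternate along each row, and the parity pattern is offset between the top and bottom rows. We therefore color every other set ``green'': the even-indexed sets in the top row and the odd-indexed sets in the bottom row. After this coloring, every degenerate overlap is between a green segment and a non-green segment.

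\textbf{Step 2: rotate the green segments.} We rotate all green segments by a common small angle $\theta>0$, each about a suitable point, for instance its midpoint or a designated endpoint region. The green segments then all share a third slope. Each collinear overlap becomes a transversal crossing at a single point, so the instance is non-degenerate and uses exactly three slopes (horizontal, vertical, and tilted).

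\textbf{Step 3: show the intersection graph is unchanged.} This is the main obstacle. We need every intersecting pair to still intersect and every disjoint pair to remain disjoint. Because the construction has polynomially bounded integer coordinates, every non-intersecting pair has positive separation. Every intersecting pair either crosses transversally with slack or overlaps in an interval of positive length. We will pick $\theta$ polynomially small in $n$ and choose each rotation center inside the overlap region. Then overlapping pairs become crossing pairs, transversal crossings persist, and no new contacts arise, since each endpoint moves by less than the minimum separation. Some overlaps may have only endpoint contact. Where that happens we first extend segments slightly along their lines, which does not change the graph because the separations are positive, so that every overlap has positive length.

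\textbf{Step 4: conclude.} The graph is identical to the one in~\cite{Bringmann2022-me}, so its diameter is unchanged. The construction still takes time linear in the output size, and coordinates of polynomial bit length are fine. Hence an $O(n^{2-\eps})$ algorithm for 3-slope segments with $\Delta=O(\log n)$ would refute the OV hypothesis.
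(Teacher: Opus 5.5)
Your proposal is correct and follows essentially the same route as the paper: take the degenerate axis-parallel construction of Bringmann et al., tilt the alternating green groups (even-indexed on top, odd-indexed on bottom) by a small angle so that every collinear overlap becomes a proper crossing, and note that the intersection graph, and hence the diameter, is unchanged. The paper phrases this as an affine map on each green group, followed by infinitesimal translations that turn triple intersections into small triangles; your per-segment rotation about a generic point of the overlap, with $\theta$ below the minimum separation, achieves the same effect, provided the rotation center is a point shared by all collinear partners of that segment, since a tilted segment meets the original line only once.
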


\begin{proof}
    This is a simple consequence of a reduction of Bringmann et al.~\cite{Bringmann2022-me} for axis-aligned degenerate segments.  Bringmann et al.~\cite{Bringmann2022-me} construct a set of axis-aligned segments that are grouped into several smaller independent sets, as depicted in \Cref{fig:segment3slope}. We apply an affine transformation on the groups $C^T_i$ for even $i$ and $C^B_j$ for odd $j$ that changes the slopes of the affected segments but keeps their length as unit. By applying some infinitesimal translations on the segment groups so that triple intersections are replaced with triangles, the resulting intersection graph remains an intersection graph of unit segments with $3$ slopes without degeneracies.
\end{proof}
\section{\Diameter-2 Algorithm for Unit-disk Graphs}
\label{sec:diam2}
The goal of this section is to prove the following theorem, which is a more precise re-statement of the second part of \Cref{thm:diameter2-unitdis}.

\begin{theorem}\label{thm:udgdiam2}
$\Diameter$-2 for the intersection graphs of $n$ unit disks can be solved in time $O(n^{4/3}\log^4 n)$.
\end{theorem}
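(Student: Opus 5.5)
The plan follows the outline in the introduction: treat $1$-neighborhoods as pseudodisks and use pseudodisk range-searching techniques instead of VC-dimension. Let $P$ be the set of disk centers. Deciding whether the diameter is at most 2 is the same as checking, for every pair $p,q\in P$, that some center $r\in P$ lies within distance $2$ of both $p$ and $q$. Define the union region
\[U_p=\bigcup_{r\in P,\ |pr|\le 2} D(r,2),\]
where $D(r,2)$ is the disk of radius $2$ about $r$. Then $d(p,q)\le 2$ exactly when $q\in U_p$. So the task becomes: decide whether every point of $P$ lies in every region $U_p$, or equivalently whether some pair $(p,q)$ has $q\notin U_p$.

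First I would localize with a grid of constant cell size. Only pairs of centers at Euclidean distance at most $4$ can be at distance $2$, and pairs in far-apart cells have distance greater than 2 unless the graph is disconnected (which a connectivity check catches). For cell pairs at distance greater than 4, the answer is immediately "diameter $>2$" whenever both cells are nonempty. So we may assume $P$ lies in $O(1)$ nearby cells.

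Next I would establish the structural fact that, after restricting to the relevant points, the regions $U_p$ (or rather their complements inside a cell) form a family of pseudodisks. Boundaries of unions of equal-radius disks whose centers lie in a common cell behave like pseudodisks. The key point is that for centers $r$ in a bounded cell the union of the $D(r,2)$ is star-shaped and "$x$-monotone" in a suitable sense. I would then argue that two such unions $U_p,U_{p'}$ have boundaries that cross at most twice, adapting the standard argument for unions of unit disks, much as in Agarwal--Sharir--Welzl's discrete 2-center. I expect this step to be the main obstacle: the regions have non-constant complexity and can only be accessed implicitly, and the pseudodisk property may require splitting each cell into a constant number of subcases, for example by the direction from the query cell.

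With pseudodisks in hand, I would adapt a Matou\v{s}ek-style partition tree or cutting approach for points versus pseudodisks. Use a $(1/r)$-cutting of the arrangement of the $n$ pseudodisk boundaries, or equivalently a simplicial-partition analogue for pseudodisks via shallow cuttings or the pseudodisk version of the partition theorem. This balances $n$ points against $n$ regions in $\OO(n^{4/3})$ total, as in Hopcroft's problem. The required primitives are: (a) testing whether a point $q$ lies in $U_p$, which is a nearest-neighbor query among $N[p]$, done with an additively weighted or standard Voronoi structure in polylogarithmic time; and (b) computing the intersection of a pseudodisk boundary with a cutting cell, or counting crossings. These would be implemented with dynamic or static nearest-neighbor data structures over the disks around $p$, each costing $O(\log^{O(1)} n)$. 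Summing over the recursion levels gives $O(n^{4/3}\log^4 n)$.

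Finally, correctness comes from checking, for every cell of the cutting, that points in the cell fully contained in $U_p$ need no work, and recursing only on regions whose boundary crosses the cell. This is the standard containment/crossing dichotomy, and it reports a violating pair exactly when $d(p,q)>2$.
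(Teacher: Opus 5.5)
There is a genuine gap in the algorithmic core. Your reformulation ($q\in U_p$, which is the paper's flower $F_p$ up to scaling) and the coarse localization are fine, but your plan does not deal with the fact that the regions have non-constant complexity. The boundaries $\partial U_p$ have total complexity up to $\sum_p |N[p]|=\Theta(|E|)$, which can be $\Theta(n^2)$.
\begin{itemize}
\item Your primitive (a) builds a Voronoi or nearest-neighbour structure over $N[p]$ separately for each $p$. That alone already costs quadratic time.
\item A $(1/r)$-cutting of the arrangement of the curves $\partial U_p$, together with its conflict lists, presupposes constant-complexity curves with $O(1)$-time intersection and curve-versus-cell tests. You do not have these.
\end{itemize}
Your claim that each primitive costs $O(\log^{O(1)} n)$ is exactly the unproven part.

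The missing idea is an implicit representation of the neighbourhoods. The paper uses the canonical-subset decomposition of \Cref{lm:rep}: canonical disk subsets of total size $O(n^{4/3}\log n)$, with each $\mathcal{D}_p$ a union of $O(n^{1/3}\log n)$ of them. This, not a Hopcroft-type balancing, is where $n^{4/3}$ comes from. With it, the basic primitives cost $\tilde{O}(n^{1/3})$ each. These primitives are ray shooting into $\partial F_p$, and finding the single breakpoint of $\partial(F_p\cap F_q)$ in \Cref{lm:cross-point}. A partition-tree scheme that performs $\tilde{O}(n^{4/3})$ such operations would therefore cost roughly $n^{5/3}$.

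The paper avoids cuttings altogether. You only need to decide $P\subseteq\bigcap_p F_p$, and weak pseudodisks have linear union complexity. So the boundary of $\bigcap_{p\in A}F_p$ inside the cone $C_B$ has $O(|A|)$ pieces. It is computed by divide and conquer with $O(n\log n)$ pairwise-merge events (\Cref{lem:raysweep}), each costing $\tilde{O}(n^{1/3})$. Then each $q\in B$ is tested with a single ray shot.

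This relies on ingredients your sketch leaves vague.
\begin{itemize}
\item The grid cells have tiny constant side $\delta$. Then all disks relevant to a cell pair $(A,B)$ contain a common origin $o$, which lies within distance $1/2$ of every $p\in A$ (\Cref{lem:ostab}, \Cref{lem:ray-intersection}).
\item This common origin is what gives each flower a unique crossing with every ray from $o$ and a common angular order. It is also what the proof of the pseudodisk property (\Cref{lm:pseudo-disks}) uses. Star-shapedness of $U_p$ from $p$ alone, a different point for each region, does not suffice.
\item Only a \emph{weak} pseudodisk property holds. Flowers that share disks can share whole boundary arcs and touch (ghost overlaps). So ``boundaries cross at most twice'' must be stated for crossing overlaps only.
\item Because of these shared arcs, the binary search for the breakpoint needs the extra \Cref{lm:orientation}. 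It decides which way to go when a ray from $o$ hits both boundaries at the same point.
\end{itemize}
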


\subsection{Flowers and Cells}

Let $P$ be a set of points on the plane. Let $G$ be the undirected, unweighted unit-disk graph constructed from $P$. That is, $V(G) = P$, and there exists an edge $(x,y)$ between two points $x$ and $y$ if and only if $\|x,y\|\leq 1$. 
Our goal is to decide if $G$ has a diameter $2$ or its diameter is at least $3$. This problem can be re-phrased as a disk covering problem as follows.

\paragraph{Rephrasing as disk covering problem.} 
Let \EMPH{$\cD$} be the set of unit radius disks on the plane centered at points in~$P$. 
Observe that there is an edge $(x,y)$  in $G$ if and only if the disk centered at $x$ contains $y$ (and vice versa).  For each point $p$ in $P$, let \EMPH{$D_p$} be the disk centered at $p$ and $\EMPH{$\cD_p$} \coloneqq \set{D \in \cD : p \in D}$, which is the set of disks containing $p$.  Observe that centers of the disks in $\cD_p$ would be exactly $p$ and the \emph{1-neighborhood} of $p$ in $G$. We denote by $F_p$ the union of all disks in $\cD_p$, i.e., $\EMPH{$F_p$} \coloneqq \bigcup_{D \in \cD_p} D$; $F_p$ is a geometric region in the plane, called the \EMPH{flower of $p$}. See Figure~\ref{fig:flower}. In the following discussion, we use 
\EMPH{$\| p, q\|$} as the Euclidean distance between $p, q$ and \EMPH{$d_G(p, q)$} as the shortest path length between $p, q$ in the unit disk graph $G$.

\begin{figure}
\centering
\includegraphics[width=0.25\linewidth]{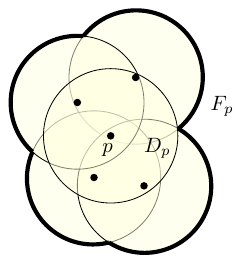}
\caption{The flower $F_p$ of a disk $D_p$ centered at $p$.}
\label{fig:flower}
\end{figure}

\begin{observation}\label{obs:dist-2} $d_G(p,q)\leq 2$ if and only if $q \in F_p$ (and hence $p \in F_q$).
\end{observation}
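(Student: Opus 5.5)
We unwind the definitions in both directions and use that adjacency in $G$ is symmetric. Recall that $\cD_p$ consists of the unit disks $D_r$ with $r\in P$ and $p\in D_r$. Equivalently, $\|p,r\|\le 1$, which means $r=p$ or $(p,r)\in E(G)$. So the centers of disks in $\cD_p$ are exactly the closed $1$-neighborhood $N[p]$, and
\[
F_p=\bigcup_{r\in N[p]} D_r .
\]

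\textbf{($\Rightarrow$).} Assume $d_G(p,q)\le 2$. We exhibit some $r\in N[p]$ with $q\in D_r$.
If $q=p$, take $r=p$; then $p\in D_p\subseteq F_p$.
If $d_G(p,q)=1$, take $r=q$; then $q\in D_q$, and $D_q\in\cD_p$ because $\|p,q\|\le 1$.
If $d_G(p,q)=2$, there is a middle vertex $r$ with $\|p,r\|\le 1$ and $\|r,q\|\le 1$. Then $D_r\in\cD_p$ and $q\in D_r$.
In every case $q\in F_p$.

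\textbf{($\Leftarrow$).} Assume $q\in F_p$. Then $q\in D_r$ for some $r\in P$ with $p\in D_r$, so $\|p,r\|\le 1$ and $\|r,q\|\le 1$. Hence $r$ is adjacent to (or equal to) both $p$ and $q$. The walk $p,r,q$ has length at most $2$, so $d_G(p,q)\le 2$.

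\textbf{Symmetry.} Since $d_G$ is symmetric, applying the equivalence with $p$ and $q$ swapped gives $q\in F_p \iff p\in F_q$.

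The argument is a direct unfolding of definitions and has no real obstacle. The only point needing care is that $q\in P$ is a vertex and that disks are closed, so the boundary case $\|p,r\|=1$ matches the edge condition $\|x,y\|\le 1$. The degenerate cases $r=p$ and $r=q$ are covered because $D_p\in\cD_p$.
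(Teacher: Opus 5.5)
Your proof is correct and uses the same argument as the paper: a vertex $x$ with $\|x,p\|\le 1$ and $\|x,q\|\le 1$ gives a disk $D_x\in\mathcal{D}_p$ that contains $q$. You also write out the converse direction and the degenerate cases $d_G(p,q)\le 1$; the paper's one-line proof leaves both implicit, with the degenerate cases covered there by taking $x=p$.
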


\begin{proof}
    Since $d_G(p,q)\leq 2$, then there must be a point $x\in P$ such that $\| x,p\|\leq 1$ and $\| x,q\|\leq 1$.  Therefore, both $p$ and $q$ are inside the disk $D_x$. Also $D_x\in \cD_p$.  Thus, $q\in F_p$. By symmetry, $p \in F_q$ as well.
\end{proof}
By Observation~\ref{obs:dist-2}, for $G$ to have diameter $2$, $F_p$ has to cover all points in $P$ for every single $p$; in other words, we want to check if 
\[
P \subseteq \bigcap_{p\in P} F_p.
\]

\begin{remark}
    The above formulation makes it clear that our problem is very similar to the Discrete $2$-center problem as formulated in~\cite{ASW98}. In~\cite{ASW98} the authors need to consider intersections of disk sets $K_p=\bigcap \bar{\mathcal  D_p}$, and then the goal is to compute the union $\bigcup_{p\in P} K_p$; there the set $\bar{\mathcal D_p}$ consists of the disks not containing $p$. This is in a sense ``dual'' to our setting. Consider stereographic projections of both problems onto a sphere: there the complement of each spherical disk is a spherical disk, and the complement of the intersection of the flowers is equal to the union of the intersections $K_p$. This transformation preserves the combinatorial properties and in particular the complexity of the boundary of the set $\bd \bigcap_p F_p$, however, it does not preserve the unit disk property, thus we cannot simply cast our problem as a geometric transformation of the problem in~\cite{ASW98}. However, we still benefit greatly from the techniques of~\cite{ASW98}, as we can use the same implicit representation for our flower intersections, and we can design an algorithm that mirrors the algorithm of~\cite{ASW98}.
\end{remark}

\paragraph{Griding.~}   Impose a grid with side-length $\delta$ which is a tiny constant less than $1$; its precise value will be determined later. Note that we can assume that points in $P$ are contained in a square of side length $3$ --- otherwise, the diameter of $P$ will be at least $3$. Thus, there are $O(1)$ grid cells that contain points of $P$.  For every pair of cells $A$ and $B$ that have a non-empty intersection with $P$, we want to check if all points in $A$ and all points in $B$ are within $2$ hops of each other in $G$.

We distinguish three cases. Denote by $\dist(A,B)$ the smallest distance between any two points in the cells $A$ and $B$ respectively. (1) If $\dist(A,B)\leq 1-2\sqrt{2}\delta$,  all points in $A\cap P$ are directly connected to all points in $B\cap P$. (2)
If $\dist(A,B)> 2$, there will be no single unit disk that intersects both $A$ and $B$ and hence there is no path of hop length at most 2 connecting a point in $A$ and a point in $B$ in $G$. Therefore, for the remaining discussion we consider pairs of cells $A,B$ such that $1-2\sqrt{2}\delta\leq \dist(A,B)\leq 2$.

Let $c_A$ and $c_B$ denote the centers of the cells $A$ and $B$, respectively. The \EMPH{origin $o$} of the cell pair $A,B$ is defined as the point on the segment $c_Ac_B$ at distance $1/3$ from $c_A$.

We will need the following lemma.

\begin{lemma}\label{lem:ostab}
     Let $A,B$ be a pair of cells of side length $\delta$ such that $1-2\sqrt{2}\delta\leq \dist(A,B)\leq 2$. Then any unit-radius disk that intersects both $A$ and $B$ is stabbed by the origin $o$.
\end{lemma}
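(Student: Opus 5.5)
The plan is to reduce the claim to a statement about the lens formed by two slightly enlarged disks centered at $c_A$ and $c_B$, and then bound the farthest point of that lens from $o$ by an explicit computation with slack.

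\textbf{Reduction to a lens.} Let $D$ be a unit disk with center $z$ that meets $A$ at a point $a$ and $B$ at a point $b$. Then $\|z,a\|\le 1$ and $\|z,b\|\le 1$. Every point of a cell is within $\delta/\sqrt2$ of the cell's center, so with $r:=1+\delta/\sqrt2$ we get $\|z,c_A\|\le r$ and $\|z,c_B\|\le r$. Hence $z$ lies in the lens $\Lambda := B(c_A,r)\cap B(c_B,r)$. Since $\|z,o\|\le 1$ is exactly the statement that $o\in D$, it suffices to show $\max_{z\in\Lambda}\|z,o\|<1$.

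Write $L:=\|c_A,c_B\|$. The distance between the centers differs from the distance between the cells by at most $\sqrt2\,\delta$, so the hypothesis $1-2\sqrt2\delta\le \dist(A,B)\le 2$ gives
\[
1-2\sqrt2\delta\;\le\; L\;\le\; 2+\sqrt2\delta.
\]
Throughout we treat $\delta$ as a small constant that we may still shrink.

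\textbf{Locating the farthest point.} Put coordinates with $c_A$ at the origin and $c_B=(L,0)$, so $o=(1/3,0)$. The function $\|\cdot,o\|$ is convex, so its maximum over the convex set $\Lambda$ is attained on $\partial\Lambda$, which consists of two circular arcs.
\begin{itemize}
\item On the arc of the circle around $c_A$, the point $o$ lies on the ray from $c_A$ toward $c_B$. The distance from $o$ to a point of this circle therefore grows with the angle measured from that ray. This arc is symmetric about the axis, so its maximum is at the two endpoints of the arc, which are the lens vertices.
\item On the arc of the circle around $c_B$, the point $o$ lies on the axis at distance $L-1/3>0$ from $c_B$, on the $c_A$ side. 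Distance from $o$ again grows as we move away from the axis along this arc, so the maximum is once more at the lens vertices.
\end{itemize}
If $L\le 2r$, the lens vertices are $(L/2,\pm h)$ with $h^2=r^2-L^2/4$. If $L>2r$, the lens is empty and there is nothing to prove.

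\textbf{The computation.} At a lens vertex,
\[
\|z,o\|^2 = h^2+\Big(\tfrac L2-\tfrac13\Big)^2 = r^2-\tfrac{L}{3}+\tfrac19 .
\]
Using $L\ge 1-2\sqrt2\delta$ and $r^2=1+O(\delta)$, this is at most $\tfrac79+O(\delta)$. Choosing $\delta$ a sufficiently small constant makes it strictly less than $1$, which proves the lemma.

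\textbf{Main obstacle.} The only delicate step is justifying that the maximum over the arc centered at $c_B$ occurs at the vertices. This is where we need $o$ to lie on the $c_A$ side of $c_B$, that is, $L>1/3$, which our lower bound on $L$ guarantees. Everything else is routine bookkeeping of the $O(\delta)$ errors.
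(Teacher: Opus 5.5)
Your proof is correct and is essentially the paper's argument. Both replace the cells by their circumscribed disks of radius $\delta/\sqrt{2}$, so the extremal qualifying disk is the one centered at a vertex of the lens of points within $1+\delta/\sqrt{2}$ of both $c_A$ and $c_B$, i.e., the disk touching both circumscribed disks. The paper then computes where that extremal circle meets $c_Ac_B$ and only asserts that every other qualifying disk covers at least as much of $c_Ac_B$. You instead compute $\|z,o\|^2=r^2-L/3+1/9$ at the lens vertex directly, and you justify the extremality through convexity and the monotonicity of the distance to $o$ along the two boundary arcs, so your version is the more complete of the two.
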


\begin{proof}
    Consider circumscribed disks $D_A, D_B$  for $A$ and $B$, and a unit radius disk $D$ touching both $D_A, D_B$. Notice that any disk $D'$ intersecting both $A$ and $B$ will also intersect $D_A$ and $D_B$. Moreover, $D\cap c_Ac_B\subseteq D'\cap c_Ac_B$.
    See Figure~\ref{fig:stab}. 
    Let $\ell:=\dist(c_A,c_B)$ and recall that $1-\sqrt{2}\delta \leq \ell \leq 2+\sqrt{2}\delta$. We will now compute the distance of $c_A$ and the closer intersection point $q$ of the boundary of $D$ with $c_Ac_B$, which is 
    \[
    \frac{\ell}{2}-\sqrt{1-\left(1+\frac{\delta}{\sqrt{2}}\right )^2+\left (\frac{\ell}{2} \right)^2}=\frac{\ell}{2}-\sqrt{\left (\frac{\ell}{2} \right)^2-\left(\frac{\delta^2}{2}+\sqrt{2}\delta\right)}\leq \sqrt{\frac{\delta^2}{2}+\sqrt{2}\delta}\leq \frac{1}{3}
    \]
    Thus, all unit disks that intersect both $A$ and $B$ also contain the point $o$. 
\end{proof}

\begin{figure}
\centering
\includegraphics[width=0.5\linewidth]{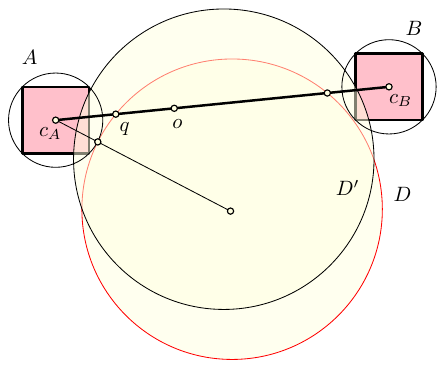}
\caption{Any unit radius disk that intersects both $A$ and $B$ is stabbed by $o$.}
\label{fig:stab}
\end{figure}

Now, to decide if every point in cell $A$ can be connected to every point in cell $B$ by a path of hop length at most $2$, it suffices to consider a subset of disks \EMPH{$\cD'$} in $\cD$ that intersect both cell $A$ and $B$, because only such disks have the potential to help to connect points in $A$ with points in $B$ in at most two hops. Take a point $p$ in $A$, let $\cD'_p$ be all disks in $\cD'$ that contain $p$, and let \EMPH{$F'_p$} be the union of the disks in $\cD'_p$. By \Cref{lem:ostab} we have that all disks in $\cD'$ are stabbed by the origin $o$. For the rest of this section we simply consider $\cD=\cD'$ as the base set of disks, and similarly we use $F_p$ to denote the flower corresponding to some $p\in A$. The following lemma gives us an algorithm to check hop distances between points in two cells.

\begin{lemma}
\label{lm:check-cells} 
Let $\delta>0$ be a sufficiently small constant. Let $A$ and $B$ be two squares of side length $\delta$ such that $1-2\sqrt{2}\delta \leq \dist(A,B) \leq 2$. Let $P$ be a set of points in $A\cup B$, and let $\cD$ be a set of disks on the plane such that every disk $D\in \cD$ intersects both $A$ and $B$. Set $n=|A|+|B|+|\cD|$.  For every point $p\in P$, let \EMPH{$F_p$} be the union of the disks in $\cD$ containing $p$.
Then in $O(n^{4/3}\log^4 n)$ time we can check whether or not
\[
P \cap B \subseteq \bigcap_{p \in P\cap A} F_p.
\]
\end{lemma}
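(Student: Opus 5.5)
The plan is to mirror the algorithm of Agarwal, Sharir and Welzl for the discrete 2-center problem. The key structural fact is \Cref{lem:ostab}: every disk in $\cD$ contains the origin $o$. Hence every flower $F_p$ is a union of unit disks all containing $o$, so $F_p$ is star-shaped with respect to $o$. In polar coordinates around $o$, $\bd F_p$ is the graph of a function $\rho_p(\theta)=\max_{D\in\cD_p}\rho_D(\theta)$, where $\rho_D(\theta)$ is the distance from $o$ to $\bd D$ along direction $\theta$. Two such functions $\rho_D,\rho_{D'}$ for unit disks through a common point cross at most twice.

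Consequently the regions $F_p$ behave like pseudodisks, and the set $\bigcap_{p\in P\cap A}F_p$ is the region below the lower envelope $\min_p \rho_p(\theta)$. A point $q\in P\cap B$ is covered if and only if, for every $p\in P\cap A$, some disk $D$ contains both $p$ and $q$. So the task is: for each $q$, decide whether every $p\in A$ is connected to $q$ through some disk of $\cD$.

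I would reformulate the condition as a batched, offline, bichromatic problem and solve it with a partition-tree / cutting divide-and-conquer of the $n^{4/3}$ Hopcroft type. The steps are:
\begin{enumerate}
\item Build a $(1/r)$-cutting for the disk boundaries in $\cD$ (arcs of unit circles through near $o$).
\item Disks fully containing a cell are handled in bulk. For points of $A$ and $B$ in a cell, the relationship ``$p$ and $q$ share a disk'' decomposes into canonical bipartite cliques (complete pairs $(A_i,B_i)$) that are generated by a two-level structure: a first level over $p$ selecting the canonical subsets of disks containing $p$, and a second level over $q$.
\item Take $\cD_p$ as a canonical union of $O(\log)$-many canonical disk subsets $\cD_j$ (from a partition tree on disk centers, queried with a unit disk around $p$).
\item Represent $F_p$ implicitly as the union (pointwise max of $\rho$) of the unions $U_j=\bigcup\cD_j$. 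Each $U_j$ is a star-shaped region whose boundary is precomputed as a sorted sequence of arcs in angular order, with total size $O(n^{4/3}\log)$ summed over canonical sets.
\item Testing $q\in F_p$ then reduces to checking $q$ against $O(\log n)$ precomputed envelopes by binary search in angle.
\end{enumerate}

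This only answers a single $(p,q)$ pair, and there are too many pairs, so the intersection over $p$ must be computed implicitly, as ASW do with $K_p$. I would dualize: the intersection $\bigcap_p F_p$ is star-shaped around $o$ and its boundary is the lower envelope of the functions $\rho_p$, each of which is a max of $O(\log n)$ canonical envelopes. I would compute the lower envelope over $\theta$ by divide-and-conquer on $p$, merging envelopes. The needed ingredients are:
\begin{itemize}
\item a combinatorial bound that the envelope $\min_p\rho_p$ has near-linear complexity, using the pseudodisk property (union/intersection of pseudodisks has linear complexity) together with the Kedem et al.-type argument;
\item a procedure that finds each breakpoint with an implicit ``decision'' query answered via the canonical structures.
\end{itemize}
Finally, each $q\in B$ is tested by locating its angle and comparing radii, costing $O(\log n)$ envelope evaluations.

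The balance between partition-tree preprocessing ($n^{4/3}$ via parameter choice $r=n^{1/3}$) and query costs gives $O(n^{4/3}\log^{O(1)}n)$. Tracking logs (partition tree, canonical decomposition, binary search, merging) yields $\log^4 n$.

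The main obstacle is the implicit computation of the intersection of flowers. Each $F_p$ has possibly linear complexity, so its boundary cannot be written explicitly. I must prove that the boundary of $\bigcap_p F_p$ has near-linear complexity, and compute it by merging envelopes given only by canonical pieces. I would attempt this by exploiting that any two flowers, being star-shaped unions of unit disks through $o$, have boundaries crossing in a controlled way. I would follow ASW's implicit-representation argument step by step, replacing their disk-intersections $K_p$ by our unions $F_p$.
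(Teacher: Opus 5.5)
Your architecture matches the paper's: canonical disk subsets, implicit flowers whose boundaries are maxima of precomputed, angularly sorted canonical union boundaries around $o$, a divide-and-conquer merge of $\bd\bigcap_p F_p$ inside the cone toward $B$, and a final angular lookup for each $q\in P\cap B$. However, the two facts that carry the proof are either mis-justified or left open.

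\textbf{The pseudodisk property.} You derive that the flowers ``behave like pseudodisks'' from the fact that any two $\rho_D,\rho_{D'}$ cross at most twice. Later you propose to prove it from the flowers being star-shaped unions of unit disks through $o$. Neither implication holds. Take unit disks centered at $2m$ equally spaced points on a tiny circle around $o$, and split them alternately into two families. Both unions are star-shaped from $o$, yet their boundaries cross $\Theta(m)$ times. The property genuinely depends on $\cD_p$ being \emph{all} disks of $\cD$ containing $p$. The paper (\Cref{lm:pseudo-disks}) proceeds in two steps:
\begin{enumerate}
\item It writes $F_p=F^*_p\cup F_{pq}$ and shows that $\bd F^*_p$ and $\bd F^*_q$ cross at most twice. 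Three crossings would yield a disk $D_1\ni p$ with $q\notin D_1$ and a disk $D_2\ni q$ with $p\notin D_2$ whose boundary circles meet four times.
\item It then shows that adding the shared disks one at a time creates no new alternations.
\end{enumerate}
Without this lemma you have neither the $O(|S|)$ complexity of $\bd\bigcap_{p\in S}F_p$, which bounds the merge cost, nor a single breakpoint to search for.

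\textbf{Ghost overlaps.} Because $\cD_p\cap\cD_q$ is typically nonempty, $\bd F_p$ and $\bd F_q$ can share whole arcs. So the flowers are only \emph{weak} pseudodisks with ghost overlaps. A binary search for the breakpoint driven by ray-shooting queries gets stuck whenever the probe ray hits a shared arc, since nothing local tells you on which side the relevant crossing lies. The paper needs an extra geometric lemma for this. \Cref{lm:orientation} says: if an overlap point lies on $q$'s side of the perpendicular bisector of $pq$, then between it and the line $pq$ the boundary of $F_p\cap F_q$ coincides with $\bd F_p$. The paper also restricts to the cone $C_B$ and treats separately the case where the line $pq$ meets $B$ (\Cref{lm:cross-point}). ``Following ASW step by step'' does not cover this; the paper points out that ASW's corresponding subroutine has exactly this gap.

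\textbf{Accounting.} With canonical subsets of total size $O(n^{4/3}\log n)$, each $\cD_p$ is a union of $O(n^{1/3}\log n)$ canonical subsets, not $O(\log n)$. Consequently a ray-shooting query costs $O(n^{1/3}\log^2 n)$, a breakpoint search $O(n^{1/3}\log^3 n)$, and each merge level $O(n^{4/3}\log^3 n)$; the $O(\log n)$ levels then give the $\log^4 n$ factor. Your steps 1--2 (cuttings and biclique decompositions) play no role in the argument you actually run.
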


\Cref{thm:udgdiam2} is a straightforward corollary of \Cref{lm:check-cells}:

\begin{proof}[Proof of~\Cref{thm:udgdiam2}]
    By applying the subroutine in \Cref{lm:check-cells} to every pair of cells, we can correctly decide if $G$ has a hop diameter at most 2 or not. Since there are only $O(1)$ pairs of cells for a constant $\delta$ and for each case the number of points and disks used by the subroutine is $O(n)$, the total running time of the algorithm $O(n^{4/3}\log^4 n)$.
\end{proof}

The rest of this section will focus on proving \Cref{lm:check-cells}.

\subsection{Flowers as Weak Pseudodisks}\label{sec:2cell}

Let $F_1$ and $F_2$ be a pair of regions whose boundaries are Jordan curves\footnote{A Jordan curve is a simple closed curve in the plane.}. 
An \EMPH{overlap $\sigma$} of $\bdry F_1$ and $\bdry F_2$ is a connected component of $\bdry F_1\cap \bdry F_2$, which is a Jordan arc. Then there is some small open neighborhood $N_\sigma$ of the overlap whose closure is disjoint from all other overlaps of $\bdry F_1$ and $\bdry F_2$. 
We say that the overlap $\sigma$ is a \EMPH{crossing overlap} if the intersections of $\bdry N_\sigma$ with $\bdry F_1$ and $\bdry F_2$ alternate along the walk around $\bdry N_\sigma$.
An overlap that is not a crossing overlap is called a \EMPH{ghost overlap}.
Notice that an overlap is a Jordan arc that has two well-defined endpoints, and they may coincide if the overlap degenerates to a single point. 

A set $\mathcal{S}$ of closed Jordan regions are called a \EMPH{weak pseudodisk} family if for any pair of sets $S,S'\in \mathcal S$ there are at most two crossing overlaps between $\bd S$ and $\bd S'$.

\begin{lemma}\label{lm:pseudo-disks}$\{F_p\}_{p\in P}$ are a weak pseudodisk family.
\end{lemma}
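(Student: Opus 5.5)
I would prove that any two flowers $F_p,F_q$ ($p,q\in P\cap A$, say) have at most two crossing overlaps by exploiting the fact, from \Cref{lem:ostab}, that every disk in $\cD$ contains the common origin $o$. The first step is to show that each flower is star-shaped with respect to $o$: every disk $D\in\cD_p$ is convex and contains $o$, so it is star-shaped from $o$, and a union of sets that are star-shaped from a common point is again star-shaped from that point. Moreover, since $o$ lies at distance about $1/3$ from $c_A$ and $A$ is tiny, $o$ lies in the interior of each disk of $\cD$ (by the strict inequality in the proof of \Cref{lem:ostab} for small $\delta$). Hence every ray from $o$ meets $\bdry F_p$ in exactly one point. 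I would therefore write $\bdry F_p$ in polar coordinates around $o$ as $r=\rho_p(\theta)$, with $\rho_p(\theta)=\max_{D\in\cD_p}\rho_D(\theta)$, where $\rho_D$ is the radial function of the disk $D$ seen from $o$. This shows $\bdry F_p$ is a Jordan curve.

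Crossing overlaps of $\bdry F_p$ and $\bdry F_q$ then correspond exactly to sign changes of $\rho_p-\rho_q$ as a function of $\theta$, where ghost overlaps are the touchings without a sign change. It therefore suffices to show that $\rho_p-\rho_q$ changes sign at most twice around the circle. The key observation I would use is that $\cD_p$ and $\cD_q$ are the disks containing $p$ and $q$, respectively. Disks lying in $\cD_p\cap\cD_q$ contribute equally to both functions, so a sign change requires a direction $\theta$ in which the maximum of $\rho_p$ is attained by a disk containing $p$ but not $q$, and vice versa on the other side.

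Let $\ell$ be the perpendicular bisector of $pq$. A unit disk containing $p$ but not $q$ has its center on $p$'s side of $\ell$, and symmetrically for $q$. My plan is to show that, as seen from $o$, there is a single angular interval (roughly the directions pointing to $p$'s side) in which $\rho_p\ge\rho_q$, and its complement in which $\rho_q\ge\rho_p$. The cleanest route I would try is a pseudodisk-style argument on unit circles. Two unit circles intersect in at most two points, and for $D\ni p\setminus q$ and $D'\ni q\setminus p$ the arc of $\bdry D$ outside $D'$ lies on $p$'s side of the line through $\bdry D\cap\bdry D'$. Combining this over all pairs, I would show that the set $\{\theta:\rho_p(\theta)>\rho_q(\theta)\}$ is an arc. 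An alternative I would fall back on is to map to the unit circle of directions via $o$, check that each region $\{\theta:\rho_D>\rho_{D'}\}$ is an arc, and verify that these arcs are consistently ordered relative to the direction of $\ell$.

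The main obstacle is this last step: proving that the "$p$ wins" directions form a single arc despite there being many disks. Here the unit radius, the tiny cell size, and the fact that $o$ lies well inside every disk are all essential. I expect to need a careful monotonicity argument, in which rotating a unit disk's center across $\ell$ monotonically shifts where it dominates, and degenerate tangencies will have to be assigned to ghost overlaps.
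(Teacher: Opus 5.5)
Your setup is sound: every disk of $\cD$ contains $o$ in its interior, so each flower is the region under the continuous radial function $\rho_p=\max_{D\in\cD_p}\rho_D$ around $o$, and the crossing overlaps of $\partial F_p,\partial F_q$ are exactly the sign changes of $\rho_p-\rho_q$. However, the proposal stops exactly where the content of the lemma lies. You never prove that $\{\theta:\rho_p(\theta)>\rho_q(\theta)\}$ is a single arc, and you say so yourself. Your first route cannot work as stated. For $D\in\cD_p\setminus\cD_q$ and $D'\in\cD_q\setminus\cD_p$, the set $W(D,D'):=\{\theta:\rho_D(\theta)>\rho_{D'}(\theta)\}$ is indeed an arc, since two circles meet at most twice. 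But such pairwise information says nothing about the upper envelopes when the arcs of different pairs lie in unrelated places. For example, take unit disks centred on a small circle around $o$ and assign them alternately to two families: their boundaries cross pairwise at most twice, yet the two unions alternate arbitrarily often. You also never actually dispose of the common disks $\cD_p\cap\cD_q$. Observing that a sign change needs a maximizer from $\cD_p\setminus\cD_q$ is not yet a reduction.

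Two ingredients are missing. The first is a clean reduction to $\cD^*_p:=\cD_p\setminus\cD_q$ and $\cD^*_q:=\cD_q\setminus\cD_p$. Write $\rho^*_p,\rho^*_q,\rho_c$ for the envelopes over $\cD^*_p$, $\cD^*_q$ and $\cD_p\cap\cD_q$. Then $\rho_p=\max(\rho^*_p,\rho_c)$ and $\rho_q=\max(\rho^*_q,\rho_c)$, so at every $\theta$ the sign of $\rho_p-\rho_q$ is either $0$ or equal to the sign of $\rho^*_p-\rho^*_q$. The paper reaches the same reduction by adding common disks one at a time.

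The second ingredient is a global anchor. The paper uses the line $pq$: along it, $F^*_p$ reaches farther beyond $p$ and $F^*_q$ reaches farther beyond $q$. Three crossings on one side of that line would then produce $D_1\in\cD^*_p$ and $D_2\in\cD^*_q$ whose boundary circles meet four times. Within your polar framework, simpler anchors are the directions $\alpha=\overrightarrow{op}$ and $\beta=\overrightarrow{oq}$. Since $o\in D\cap D'$, $p\in D\setminus D'$ and $D'$ is convex, we get $\rho_{D'}(\alpha)<\|o,p\|\le\rho_D(\alpha)$, and symmetrically at $\beta$. Hence every $W(D,D')$ with $D\in\cD^*_p$, $D'\in\cD^*_q$ is an arc containing $\alpha$ and avoiding $\beta$. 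It follows that $\{\rho^*_p>\rho^*_q\}=\bigcup_{D}\bigcap_{D'}W(D,D')$ is an arc containing $\alpha$ and avoiding $\beta$, and symmetrically $\{\rho^*_q>\rho^*_p\}$ is an arc containing $\beta$. Two disjoint arcs cannot interleave, so by the reduction $\rho_p-\rho_q$ has no cyclic sign pattern $+,-,+,-$, i.e.\ there are at most two crossing overlaps. (If $p=o$, $q=o$, or $\alpha=\beta$, one of the two families is empty and there is nothing to prove.) This step needs only convexity, the two-crossing property of circles and the common interior point $o$. The unit radius and small cell size that you expected to be essential play no role here, which is consistent with the paper's remark after its proof.
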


\begin{proof}
Given $p, q\in P$, let $F^*_p$ denote the union of disks containing $p$ but not containing $q$, and similarly let $F^*_q$ denote the union of disks containing $q$ but not containing $p$. Finally, let $F_{pq}$ denote the union of disks containing both $p$ and $q$. Now we can write $F_p=F^*_p\cup F_{pq}$ and $F_q=F^*_q\cup F_{pq}$. Observe now that $F^*_p,F^*_q$ cannot have any overlapping arc on their boundaries that is not a single intersection point, as the corresponding disk would necessarily have to contain both $p$ and $q$. However, the two boundaries can \emph{touch} in a single point, i.e., have a ghost overlap of length $0$, as an intersection point of two consecutive arcs from $\bd F^*_p$ can be on $\bd F^*_q$.

Our goal in this lemma will be to show that the family $\{F^*_p,F^*_q\}$ is a weak pseudodisk family. Before that, we will first prove that this implies that $\{F_p, F_q\}$ is also a weak pseudodisk family.

Consider first a pair $F,F'$ that form a weak pseudodisk family whose boundary consists of circular arcs that are both star-shaped from $o$, by Lemma~\ref{lem:ostab}. Add the same disk $D$ to both $F$ and $F'$ in such a way that $F\cup D$ and $F'\cup D$ remain star-shaped from $o$. Then we claim that the weak pseudodisk property is maintained, i.e., $F\cup D$ and $F'\cup D$ also form a weak pseudodisk family. Indeed, consider the labeling of each arc $\gamma$ of $\bd(F\cup F'\cup D)$ with $F,F'$ or $D$, based on if $\gamma$ belongs to $F,F'$ or $D$, respectively. We can also do a similar labeling on $\bd(F\cup F')$. Notice that if the arc $\gamma\in \bd(F\cup F'\cup D)$ is labeled as $D$, then it is a crossing overlap of $\partial (F\cup D)$ and $\partial (F'\cup D)$ if and only if the arc just before and just after $\gamma$ along $\bd(F\cup F'\cup D)$ are labeled with $F$ and $F'$ respectively, or vice versa. Altogether, the number of crossing overlaps between $F\cup D$ and $F'\cup D$ on $\bd(F\cup F'\cup D)$ is equal to the number of alternations between the labels $F$ and $F'$ along $\bd(F\cup F'\cup D)$, excluding all arcs labeled $D$. However, every arc labelled by $F$  and $F'$ in $\bd(F\cup F'\cup D)$ must also appear on $\bd F\cup F'$ with the same label (potentially as part of a longer arc), and thus any alternation between labels $F$ and $F'$ means a corresponding alternation along $\bd(F\cup F')$. Thus if $\bd(F\cup F'\cup D)$ has at least three crossing overlaps, then both $\bd(F\cup F'\cup D)$ and $\bd(F\cup F')$ has at least three alternations, thus $\bd(F\cup F')$ has at least three crossings, which contradicts the fact that $F,F'$ are pseudodisks.

Using the above claim repeatedly, we can add each disk covering both $p$ and $q$ to both $F^*_p$ and $F^*_q$, and maintain the pseudodisk property at each step, until we arrive at $F_p$ and $F_q$.

It remains to prove that $F^*_p,F^*_q$ are weak pseudodisks.

Consider a ray $R$ which starts from $p$ and follows the line through $p,q$ towards $q$. Define $x_p$ ($x_q$) as the intersection of $R$ with the boundary of $F^*_p$ ($F^*_q$). (See Figure~\ref{fig:crossing} for an example.) Note that $x_p$ is strictly closer to $p$ than $x_q$: indeed, each disk contributing to $F^*_p$ intersects the line through $p,q$ in an interval that contains $p$ and does not contain $q$, and the analogous claim holds for $F^*_q$ and $q$. Thus, the order of the points along $R$ is $p,x_p,q,x_q$. 
By symmetry, along the ray $R'$ starting from $q$ with the opposite direction of $R$, the intersection with $\bdry F^*_q$ (denoted as $x'_q$) is no further away from $q$ than the intersection of $R'$ with $\bdry F^*_p$ (denoted as $x'_p$). 

Now we walk along the boundary of $F^*_p$ and $F^*_q$ from $x_p, x_q$ to $x'_p$, $x'_q$ counterclockwise and examine the relative order of the boundary and crossings. We argue that the ordering of $\bdry F^*_p$ and $\bdry F^*_q$ changes strictly at most once. If not, by parity, the ordering has to change strictly at least twice. This means that there are three crossings $w_1, w_2, w_3$ in the counterclockwise order. 

\begin{figure}
\centering
\includegraphics[width=0.5\linewidth]{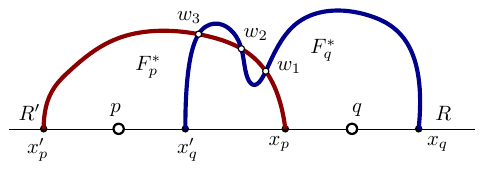}
\caption{The flowers $\{F^*_p\}_{p\in P}$ are weak pseudo disks.}
\label{fig:crossing}
\end{figure}

Now between crossing overlaps $w_1$ and $w_2$, there is a portion of the boundary of $F^*_p$ which stays strictly outside of that of $F^*_q$. That means that there is a disk $D_1$ whose boundary is along that portion of $\bdry F^*_p$; and $D_1$ includes $p$ but not $q$.  Similarly, there is a disk $D_2$ whose boundary appears on $\bdry F^*_q$ between $w_2$ and $w_3$; and $D_2$ includes $q$ but not $p$. 
We argue that $\bdry D_1$, $\bdry D_2$ intersect at least four times: twice as $D_1$ is outside of $D_2$ between $w_1$ and $w_2$ and $D_2$ outside of $D_1$ between $w_2$ and $w_3$; and twice since $D_1$ needs to include $p$ but not $q$ and $D_2$ needs to include $q$ but not $p$, so in particular the point of $\bd D_1$ directly below the line though $p,q$ is not in $D_2$ and analogously the point of $\bd D_2$ directly below the line through $p,q$ is in $D_1$. This is a contradiction as $D_1$ and $D_2$ are disks.
\end{proof}

\begin{remark}
   The above proof works for flowers defined as the union of convex, weak pseudodisks stabbed by the same point. 
\end{remark}

For a point $p$, the flower $F_p$ is not necessarily convex. Thus if we consider a point $y$ inside $F_p$ and shoot a ray from $y$, it may intersect $\bdry F_p$ multiple times.
But if $y$ is sufficiently close to the point $p$, there is only a unique intersection. 
\begin{lemma}\label{lem:ray-intersection}
    Let $R$ be a ray originated from some point $y$ inside a flower $F_p$ with $\|y,p\|< 1/2$. Then $R$ intersects $\bdry F_p$ at a single point. 
    Consequently, $\bd F_p$ is a Jordan curve with $y$ in the interior of its bounded region.
\end{lemma}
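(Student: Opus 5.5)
The plan is to show that $F_p$ is star-shaped with respect to $y$ in a strong sense: every ray from $y$ leaves $F_p$ exactly once. Write $F_p=\bigcup_{D\in\cD_p} D$, where each $D$ is a unit disk containing $p$. Since $\|y,p\|<1/2$, every such $D$ has its center $c$ within distance $1$ of $p$, hence within distance $3/2$ of $y$. This is not enough to put $y$ inside every disk, so we should not argue disk by disk that $y\in D$. Instead, fix a ray $R$ from $y$ with direction $u$. Each disk $D$ meets the line of $R$ in an interval. The goal is to show that the union of these intervals, restricted to $R$, is a single segment starting at $y$. Equivalently, for each $D\in\cD_p$ with $D\cap R\neq\emptyset$, the whole segment from $y$ to the far point of $D\cap R$ lies in $F_p$.

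The key step is as follows. Take $z\in D\cap R$ and a point $w$ on the segment $yz$; we must show $w\in F_p$. The natural witness is either $D$ itself or the disk $D_p$ centered at $p$, which lies in $\cD_p$ and contains $y$ because $\|y,p\|<1/2<1$. Since $D$ is convex and contains both $p$ and $z$, the triangle $pz\,y$ is covered by the convex hull of $D\cup D_p$. So it suffices to show that every point $w$ of segment $yz$ lies in $D$ or in $D_p$. If $w\notin D_p$, then $\|w,p\|>1$, so $w$ is far from $p$ compared with $y$. Using $\|y,p\|<1/2$, I expect to show that $w$ lies inside the triangle $p\,z\,w'$ for some point $w'$ of $D$, or more directly that $w$ lies on the segment joining $z$ to a point of the chord $D\cap yz$ that already lies in $D$. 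Concretely, consider the line of $R$. Its intersection with $D$ is an interval $[s,t]$ containing $z$, and its intersection with $D_p$ is an interval containing $y$. I will show that these two intervals overlap or touch whenever $D\ni p$ and $\|y,p\|<1/2$, so that their union covers $[y,z]$. Suppose instead there were a gap point $w$ between them. Then $w$ would lie outside both $D_p$ and $D$, while $p$ lies in $D$ and $z\in D$ lies farther along $R$. A short computation, projecting $p$ and $c$ onto the line of $R$ and using that the chord of $D_p$ has half-length at least $\sqrt{1-1/4}=\sqrt3/2$ around a point within $1/2$ of $y$, should show that $D$'s chord must then start before $y+\sqrt3/2\,u$, which rules out the gap.

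Once every ray meets $F_p$ in a segment $[y,x_R]$, the boundary $\bd F_p$ meets $R$ exactly at $x_R$. The map $u\mapsto x_R$ is continuous because $F_p$ is a finite union of disks, which gives the radial function. Hence $\bd F_p$ is a Jordan curve around $y$.

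I expect the main obstacle to be the overlap computation in the middle step. It has to handle disks whose center lies on the far side of the ray relative to $p$, and the constant $1/2$ has to be enough. If the direct interval argument fails, the fallback is a different witness. For a gap point $w$, use the unit disk centered at $p+(c-p)\cdot$ (a suitable scalar). That disk is not in $\cD_p$, however, so the fallback would instead be to exploit that $D$ contains $p$ and $z$, and hence contains the segment $pz$. One then argues that the segment $yw$ stays within distance $1$ of $p$ or hits $pz$ inside $D$, using the angle at $y$.
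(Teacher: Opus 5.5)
Your reduction is the right one, and it rests on the same key fact as the paper's proof. It suffices that for every $D\in\mathcal{D}_p$ meeting the line $L\supset R$, the chord $D\cap L$ intersects the chord $D_p\cap L$. Then all chords meet a common interval containing $y$, so their union is an interval. This is tidier than the paper's argument by contradiction, which needs a second case for boundary crossings lying on different disks.

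However, you never prove this chord-overlap claim, and it is the whole content of the lemma. The computation you sketch would not establish it. Knowing that $D$'s chord starts before $y+\frac{\sqrt3}{2}u$ excludes a gap only if $D_p$'s chord reaches $y+\frac{\sqrt3}{2}u$, and that fails in general. If $y$ lies ahead of the foot of $p$ on $L$, the chord of $D_p$ may extend barely more than $\frac12$ past $y$ (e.g.\ $p\in L$ and $y=p+0.49u$). The fallback paragraph does not supply a proof either: it proposes a rescaled disk that is not in $\mathcal{D}_p$ and then an unspecified angle argument.

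The missing computation does not involve the position of $y$ along $L$. Only $h:=\mathrm{dist}(p,L)\le\|y,p\|<\frac12$ matters. Put $L$ on the $x$-axis and $p=(0,h)$ with $h\ge 0$, and let $(a,k)$ be the center of $D$. Then $|k|\le 1$, $a^2+(k-h)^2\le 1$, and the chords are $[-\sqrt{1-h^2},\sqrt{1-h^2}]$ and $[a-\sqrt{1-k^2},a+\sqrt{1-k^2}]$.

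A gap needs $|a|>\sqrt{1-h^2}+\sqrt{1-k^2}$, yet $|a|\le\sqrt{1-(k-h)^2}$. The function $\varphi(k)=\sqrt{1-(k-h)^2}-\sqrt{1-k^2}$ is nondecreasing on $[h-1,1]$: its derivative is $g(k)-g(k-h)$ with $g(x)=x/\sqrt{1-x^2}$ increasing. Hence $\varphi(k)\le\varphi(1)=\sqrt{2h-h^2}$, which is $<\sqrt{1-h^2}$ precisely because $h<\frac12$.

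The extremal case $k=1$, where $L$ is tangent to $D$ at a point of $\partial D_p$, is exactly the configuration the paper uses to obtain the threshold $\|y,p\|=\frac12$. Because the inequality is strict, the relatively open chords also overlap, so every point of $[y,x_R)$ is interior to some disk of $\mathcal{D}_p$. You need this, which your write-up skips, to conclude $R\cap\partial F_p=\{x_R\}$ rather than only $R\cap F_p=[y,x_R]$.
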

\begin{proof}
    Assume otherwise, that $R$ intersects  $\bdry F_p$ at more than one points. Since $y$ is inside $D_p$, $y$ is  inside $F_p$ which includes the disk $D_p$.  Thus $R$ intersects $\bdry F_p$ at at least three points, such that $R$ exits and enters $F_p$ in an alternating way. All the intersections are outside $D_p$.

    If there are a pair of  intersections $t$ and $t'$ on the boundary of the same disk $D_s$. $D_s$ has $p$ inside, and $t, t'$ are both outside $D_p$. Thus the position for $y$ with the largest possible $\|y,p\|$ is realized when the ray $R$ is (nearly) tangent to $D_s$ at the point $w$, an intersection of $\bdry D_p$ and $\bdry D_s$. See \Cref{fig:ray-intersection}. In this case $\|y,p\|=1/2$. Thus when $\|y,p\|<1/2$, this is not possible. 

    \begin{figure}
    \centering
    \includegraphics{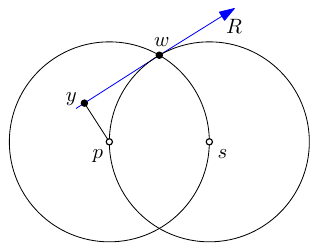}
    \caption{The ray $R$ starting from a point $y$ with distance within $1/2$ from the center $p$ intersects $\bdry F_p$ only once.}
    \label{fig:ray-intersection}
    \end{figure}
    
    Now consider a pair of intersections $t_1$ and $t_2$ on the ray $R$, where we enter $F_p$ at $t_1$ and exit $F_p$ at $t_2$. Suppose that the two points stay on the boundary of two disks centered at $a$ and $b$ respectively. Now suppose $R$ enters the disk $D_{b}$ at point $t'_2$ (and exits the disk at point $t_2$). Between $t'_2$ and $t_2$ the ray $R$ stays inside the disk $D_{b}$. We claim that $t'_2$ cannot come before $t_1$ on the ray $R$ -- if otherwise, this gives a contradiction that $R$ enters $F_p$ at $t_1$ since it would have already entered $F_p$ at point $t'_2$. Now we argue that $t'_2$ is outside of $D_p$. This is because that $t_1$ is already outside $D_p$ since we (re)-enter $F_p$ at point $t_1$. And for a ray $R$ once it exits $D_p$ it stays outside of $D_p$. Now we have on the ray $R$ two points $t'_2$ and $t_2$, both outside of $D_p$, staying on the boundary of the same disk $D_{b}$ which includes $p$ inside. In order to create such a ray, the point $y$ cannot be closer than $1/2$ from $p$ by the same argument as in \Cref{fig:ray-intersection}.
\end{proof}


\begin{lemma}\label{lm:orientation} 
Let $p,q$ be points with $\|p,q\|< 1/2$, and let $\ell_{pq}$ be the perpendicular bisector of $pq$, defining open half-planes $H_p\ni p$ and $H_q \ni q$. Let $U$ denote one of the half-planes defined by the line $pq$.
Suppose that in the quadrant $H_q\cap U$ there is some point $x\in \bd F_p\cap \bd F_q$, and let $C_x$ denote the cone with boundary lines $qx$ and $pq$ contained in $H_q\cap U$. Then $\bd(F_p\cap F_q)\cap C_x=\bd F_p\cap C_x$, i.e., the boundary of $F_p\cap F_q$ between $x$ and the line $pq$ is the same as $\bd F_p$.
\end{lemma}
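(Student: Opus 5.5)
Within the cone $C_x$ I would show that $\bd F_p$ lies inside $F_q$. Then $\bd(F_p\cap F_q)\cap C_x=\bd F_p\cap C_x$, because a point of $\bd F_p$ that lies in $F_q$ is a boundary point of the intersection. Conversely, a boundary point of $F_p\cap F_q$ in $C_x$ that is not on $\bd F_p$ would have to lie on $\bd F_q$ strictly inside $F_p$, and the same radial argument with the roles reversed excludes this.

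The natural tool is \Cref{lem:ray-intersection}. Since $\|p,q\|<1/2$, rays from $q$ meet $\bd F_q$ exactly once. The same holds for $\bd F_p$, because $\|q,p\|<1/2$ and $q\in D_p\subseteq F_p$. So both boundaries are star-shaped with respect to $q$. Parametrize each by angle around $q$ as radial functions $\rho_p(\theta)$ and $\rho_q(\theta)$. Inside the cone $C_x$, whose apex is $q$, the claim becomes $\rho_p(\theta)\le\rho_q(\theta)$ for every direction $\theta$ between the direction of $x$ and the direction of the ray from $q$ along line $pq$ into $H_q$ (away from $p$).

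Next I compare the two endpoints of this angular interval. At the direction of $x$ we have equality, since $x$ lies on both boundaries. At the ray from $q$ away from $p$, I would reuse the argument from the proof of \Cref{lm:pseudo-disks}. Every disk containing $p$ but not $q$ reaches less far than $q$ along that ray. Every disk containing both $p$ and $q$ contributes to both flowers. Any remaining disk of $F_q$ (containing $q$ but not $p$) extends at least as far in this direction as a disk through $p$ does. This should give $\rho_p\le\rho_q$ on the line $pq$ beyond $q$, with strictness or equality handled by general position.

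The main obstacle is ruling out an alternation strictly inside the cone, that is, $\rho_p>\rho_q$ on some subinterval. Any such alternation creates extra crossing overlaps of $\bd F_p$ and $\bd F_q$ inside $C_x$. Together with the crossing at $x$ and the symmetric crossing that must occur in the other half-plane or in the quadrant $H_p\cap U$, this would exceed the two crossing overlaps allowed by \Cref{lm:pseudo-disks}.

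To make this count work, I would first show that $\bd F_p$ and $\bd F_q$ have at least one crossing outside $C_x$. On line $pq$ near $p$, $F_p$ reaches farther out than $F_q$, by the same ordering of $x'_q$ and $x'_p$ as in \Cref{lm:pseudo-disks}. On the other side near $q$ the order is reversed. So going around $q$ in the other direction, from the ray through $q$ toward $p$ back to $x$, there is at least one switch, and another switch happens in the half-plane opposite $U$.

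If the crossing at $x$ is only a ghost overlap, I would perturb it or argue directly. The alternative is an explicit two-disk argument: take a disk $D_1\ni p$ and a disk $D_2\ni q$ realizing the alternation, and show their boundaries meet four times, mirroring the end of the proof of \Cref{lm:pseudo-disks}. I expect this case analysis of ghost overlaps versus crossing overlaps to be the fiddliest part.
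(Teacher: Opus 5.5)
\textbf{There is a genuine gap.} Your reduction to the radial inequality $\rho_p\le\rho_q$ on the angular range of $C_x$ is fine, and so is your check on the ray from $q$ away from $p$. The gap is the step that is supposed to exclude $\rho_p>\rho_q$ strictly inside the cone. Counting crossing overlaps against \Cref{lm:pseudo-disks} cannot do this, because the forbidden configuration is consistent with at most two crossings. Concretely:
\begin{itemize}
\item let $\rho_p<\rho_q$ near the ray beyond $q$;
\item let the boundaries cross once inside $C_x$, so that $\rho_p>\rho_q$ there;
\item let $\rho_p\ge\rho_q$ hold from there to the ray through $p$, with $\bd F_q$ only touching $\bd F_p$ from inside at $x$ (a ghost overlap);
\item let them cross back once in the half-plane opposite $U$.
\end{itemize}
This pattern has exactly two crossing overlaps and satisfies all your endpoint conditions, even strictly. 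Yet it violates the conclusion.

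A similar two-crossing pattern exists when $x$ is a genuine crossing but $\rho_p=\rho_q$ on the ray through $p$. The strict ordering you borrow from \Cref{lm:pseudo-disks} is proved there for $F^*_p,F^*_q$, not for $F_p,F_q$. For $F_p,F_q$, shared disks (e.g.\ the disk centered at $p$, which contains $q$) can dominate along the line $pq$, so the extra crossing outside $C_x$ is not forced.

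The ghost case cannot simply be perturbed away. It is the situation the lemma is designed for: \Cref{lm:cross-point} invokes it when a ray meets both boundaries at the same point. Perturbing would also destroy the hypothesis $x\in\bd F_p\cap\bd F_q$. Your fallback, two disks whose boundaries meet four times, is not specified. In the scenario above there need be no alternation of $\bd F_p$ and $\bd F_q$ near $x$ to supply such a pair.

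\textbf{What is missing is a metric ingredient.} The weak-pseudodisk property is purely topological and does not imply the lemma. You need that the disks are unit disks, and that a disk containing $p$ but not $q$ has its center in $H_p$. The paper argues locally:
\begin{itemize}
\item Suppose a point $y$ of $\bd(F_p\cap F_q)$ lying on $\bd F_q$ but not on $\bd F_p$ lies below $x$. Take the lowest common boundary point $t$ above $y$.
\item The disk $D_p\in\cD_p\setminus\cD_q$ carrying $\bd F_p$ just below $t$ exits some disk $D_q\in\cD_q$ through $t$. Their common chord $tt'$ lies in $H_q$.
\item The point where the perpendicular bisector of $tt'$ meets $\bd D_p\setminus D_q$ also lies in $H_q$; otherwise $\bd D_p$ would meet $\ell_{pq}$ three times.
\item Because both radii equal $1$, the center of $D_p$ lies between the chord midpoint and that point, hence in $H_q$. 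This contradicts $p\in D_p$, $q\notin D_p$.
\end{itemize}
This argument works whether the overlap at $t$ is a crossing or a ghost overlap, which is exactly what your counting approach cannot handle.
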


\begin{figure}
    \centering
    \includegraphics{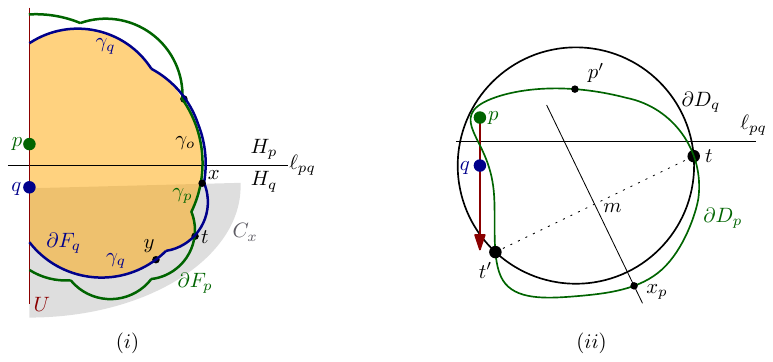}
    \caption{(i) The situation ruled out by \Cref{lm:orientation}. The flower intersection $F_p\cap F_q$ inside the half-plane $U$ is depicted as the orange shaded region. Its boundary can be decomposed to parts that are contributed only by $\bd F_p$, only by $\bd F_q$ (denoted by $\gamma_p$ and $\gamma_q$), or both (denoted by the curve $\gamma_o$). The lemma claims that a point $y\in \gamma_q$ cannot be below an overlap point $x\in \gamma_o\cap H_q$. (ii) The impossible configuration: the unit circles $\bd D_q$ and $\bd D_p$ with their shared chord $tt'$. The circle $\bd D_p$ is deformed to depict the correct order of intersections with $\ell_{pq}$, $\bd D_q$ and the line $pq$.}
    \label{fig:binarysearchfix}
\end{figure}

\begin{proof}
Assume without loss of generality that $\ell_{pq}$ is horizontal and $p$ is above $q$.
Let $\gamma_o := U\cap \bdry F_p \cap \bdry F_q$ denote the overlaps of $\bd F_p$ and $\bd F_q$ inside $U$, and set $\gamma_p:=U\cap \bdry (F_p \cap F_q) \setminus F_q$ as well as $\gamma_q:=U\cap  \bdry (F_p \cap F_q) \setminus F_p$ denoting the portions of $\bd (F_p\cap F_q)$ that are contributed only by $\bd F_p$ and only by $\bd F_q$, respectively.  Suppose that there is some point $x\in \gamma_o\cap H_q$. We will show that for any $y\in \gamma_q$, $\angle xqp> \angle yqp$. Observe that this implies that the boundary of $\bd(F_p\cap F_q)$ below $x$ is the same as the boundary of $\bd F_p$, concluding the lemma.

 \Cref{lem:ray-intersection} implies that the ray $\ell_{pq}\cap U$ has a unique intersection with $\bd F_p$ and $\bd F_q$, and it also implies that the line $\bd U$ intersects $\bd F_p$ exatcly twice and $\bd F_q$ exactly twice. For points $a,b\in U$ we say that $a$ is clockwise \emph{above} $b$ if $\angle aqp< \angle bqp$, and similarly $a$ is clockwise \emph{below} $b$ if $\angle aqp> \angle bqp$. Notice that by \Cref{lem:ray-intersection} the rays from $q$ have a unique intersection with both $\bd F_p$ and $\bd F_q$, thus above/below is a complete ordering on both $U\cap \bd F_p$ and $U\cap \bd F_q$. Our goal can be equivalently stated as follows: prove that if an overlap point $x\in \gamma_o$ falls in $H_q$, then there are no points of $\gamma_q$ below $x$.

Suppose for the sake of contradiction that there is some point $y\in \gamma_q$ below $x$, as in \Cref{fig:binarysearchfix}(i). Let $t$ be the lowest point of $\gamma_o$ above $y$. Notice that the arc of $\bd (F_p\cap F_q)$ entering $t$ from above is an arc of some disk $D_q\in \cD_q$ while there is some disk $D_p\in \cD_p\setminus \cD_q$ where the arc of $\bd D_p$ below $t$ is exiting $F_q$. For example, the disk $D_p$ contributing the boundary arc of $\bd F_p$ leaving from $t$ downward is a valid choice. See \Cref{fig:binarysearchfix}(ii) for an illustration. (We emphasize that $q\not \in D_p$, but we do allow $p \in D_q$.) Let $t'$ denote the other intersection of $\bdry D_{q}\cap \bdry D_p$. The ray from $p$ passing through $q$ exits $D_p$ first, then $D_q$ second (since $q\in D_q\setminus D_p$), thus $t'\in U\cap H_q$. Now $tt'$ is the unique shared chord of the unit circles $\bdry D_q$ and $\bdry D_p$, and the segment $tt'$ is in $H_q$. 

Let $x_p$ denote the intersection of the bisector of $tt'$ with $(\bd D_p)\setminus D_q$. We claim that $x_p\in H_q$. To see why, assume the contrary: that $x_p \not \in H_q$. Since $p\in D_p$ is in the open half-plane $H_p$, there is a point $p'\in H_p \cap \bd D_p$. Then walking along $\bd D_p\cap U$ top to bottom, we encounter the points $x_p t p' t'$ in this order, where $x_p,p'\in H_p$ and $t,t'\in H_q$. Thus $\bd D_p$ must intersect $\ell_{pq}$ at least three times (as $t,t'\in H_q$, $p\in H_p$ and $x_p\in H_p\cup \ell_{pq}$), which contradicts the fact that a circle can intersect a line at most twice.

Let $m$ denote the midpoint of $tt'$. Since $t,t'\in H_q$ we have that $m\in H_q$. Now $m\in H_q$ and $x_p \in H_q$ by our earlier claim. Observe that the center of $D_p$ is on the segment $mx_p$, thus the center of $D_p$ is in $H_q$. On the other hand we have that the center of $D_p$ is closer to $p$ than to $q$ (since $p\in D_p$ and $q\not\in D_p$), so it is in $H_p$. This is a contradiction, and concludes the proof.
\end{proof}

\subsection{Implicit Representation and Basic Primitives}

The main challenge of \Cref{lm:check-cells} is that the boundary of a flower $F_p$ might be composed of arcs on the boundary of $\Omega (n)$ disks and hence may have $\Omega(n)$  vertices. Since there could be $\Omega(n)$ points in $A$ (and $B$), representing all $\{F_p: p \in A\}$ explicitly requires $\Omega(n^2)$ space. Here, we work with an implicit representation of $\{F_p: p \in A\}$ by adapting the techniques in \cite{KS97}. First, we describe the implicit representation and basic primitives that we can apply to operate on the implicit representation.

\paragraph{Preprocessing.~} Following~\cite{ASW98}, we describe a slightly more general setting in which we are given a set $\cD$ of $n$  disks in the plane and a set of points $P$. Our goal is to represent for each point $p$ a set of disks in $\cD$ that contain $p$, i.e., the flower $F_p$.  

\begin{lemma}[\cite{KS97}, adapted]\label{lm:rep} Let $\cD$ be a set of $n$ disks on a plane and $P$ be a set of points. For every point $p\in P$, let $\cD_p$ be the subset of disks in $\cD$ that contain $p$. Then there exists a family  $\set{\cD^{(1)}, \dots, \cD^{(s)}}$ of \EMPH{canonical subsets} of $\cD$ such that $\sum_{i=1}^s |\cD^{(i)}| = O(n^{4/3} \log n)$, and such that, for any $p \in P$, $\cD_p$ can be represented as the union of $O(n^{1/3} \log n)$ canonical subsets. \\
Furthermore, for every $p\in P$, let $J_p$ be the set of indices of the canonical subsets representing $\cD_p$ (i.e., $\cD_p = \bigcup_{i \in J_p} \cD^{(i)}$). Then one can compute  $\set{J_p}_{p \in P}$ in time $O(n^{4/3} \log n)$.
\end{lemma}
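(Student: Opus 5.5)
This is a point-in-disk range searching statement in dual form, and I would prove it with a cutting-based partition tree in the style of Katz--Sharir. \textbf{Dualization.} Disk $D\in\cD$ has center $c_D$ and radius $r_D$. It contains $p$ if and only if $\|p-c_D\|\le r_D$. Lifting to $\R^3$ via $p\mapsto (x_p,y_p,x_p^2+y_p^2)$, this condition becomes a halfspace condition. Each disk is mapped to a point, and each query point $p$ to a halfspace $h_p$ (equivalently, a plane), so that $D\in\cD_p$ if and only if the dual point of $D$ lies in $h_p$. In the planar (unit-disk) case one can instead map disks to their centers and points $p$ to unit disks centered at $p$. Either way we get a family of $m=|P|$ constant-complexity ranges over $n$ objects. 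I would take $m=O(n)$, which is the setting of \Cref{lm:check-cells}.

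\textbf{Hierarchical cutting.} Following \cite{KS97}, I would trade off the number of canonical subsets per query against their total size by switching between two structures. Take a $(1/r)$-cutting of the arrangement of the $m$ query surfaces, with $r=n^{1/3}$ (or a hierarchical cutting). Each cell contains the disks (dual points) lying in it and is crossed by $O(m/r)$ query surfaces. A query range that fully contains a cell contributes that cell's disk set as a canonical subset. Summed over $O(r^2)$ cells in the plane, this gives $O(r^2)$ subsets per query, which is too many; to fix this I would build a partition-tree hierarchy on the disk side. Concretely, I would use Matoušek's simplicial partition on the dual points with fan-out $t$. A query crosses $O(\sqrt t)$ of the classes and fully contains the others. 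Contained classes become canonical subsets, and the recursion continues into the crossed classes. At depth $O(\log n)$ this gives $O(\sqrt n)$ subsets per query, again too many. So I would cut the recursion once the subproblem has $n^{2/3}$ disks. Then the number of nodes visited per query is $O(n^{1/3})$ (up to $\log$ factors), and each leaf subproblem has about $n^{2/3}$ disks. At the leaves, with $k$ disks and $m'$ queries, I would swap to the dual: a cutting of the query arrangement. In the plane, $k$ disks yield a trivially explicit arrangement of size $O(k^2)$. Each query point locates its face, and that face's disk set is one canonical subset. Storing these sets incrementally (a persistent or shared representation along a walk) costs $O(k^2)$ in total instead of $O(k^3)$.

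\textbf{Accounting.} Balancing is the key step. With $N_v$ disks and $M_v$ queries at a node, the standard Katz--Sharir balancing chooses where to switch to the dual arrangement. The total canonical size is bounded by $\sum_v \min(N_v^2, N_vM_v)$, and summing over the partition tree with $N_v\approx n^{2/3}$ at about $n^{1/3}$ leaves yields $O(n^{4/3}\log n)$. Each query uses $O(n^{1/3}\log n)$ subsets: $O(1)$ per visited node times the number of visited nodes. Computing the index sets $J_p$ is done by the same traversal, at cost equal to the output size plus point location, which is $O(n^{4/3}\log n)$. I expect the main obstacle to be the leaf-level arrangement step. Naively, the canonical subsets of adjacent faces overlap almost entirely, and listing them explicitly costs $k^3$. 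I would handle this by splitting each face set into a difference along a traversal tree of the arrangement and charging it to a second level of the recursion, exactly as \cite{KS97} do. Since the statement is marked ``adapted'', I would then simply verify that disks in place of their setting change only constants.
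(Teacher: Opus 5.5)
The paper gives no proof of this lemma; it imports the Katz--Sharir construction. Your two-level architecture (a partition tree on the disk side, switching to an arrangement-based structure at the leaves) can realize that bound, but the trade-off point you choose is wrong. If you stop the partition tree at subproblems of $k\approx n^{2/3}$ disks, there are about $n^{1/3}$ leaves. Your own leaf cost of $O(k^2)$ then sums to $n^{1/3}\cdot n^{4/3}=n^{5/3}$, not $O(n^{4/3}\log n)$. The $N_vM_v$ alternative in your $\min(N_v^2,N_vM_v)$ does not rescue this. In the worst case a query reaches $\Theta(\sqrt{n/k})=\Theta(n^{1/6})$ leaves, so $M_v\approx n^{5/6}>N_v$. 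Your visited-node count is also off: stopping at size $n^{2/3}$ gives only about $n^{1/6}$ visited nodes per query, not $n^{1/3}$. That slack on the per-query side, together with the excess total size, shows the balance is set at the wrong point. The choice that works is leaf size $k=n^{1/3}$. A query then visits $O(\sqrt{n/k})=O(n^{1/3})$ nodes, and $n^{2/3}$ leaves of cost $O(k^2)$ each total $O(n^{4/3})$. Equivalently, in Katz--Sharir order, take a hierarchical $(1/r)$-cutting of the $n$ circles with $r=n^{1/3}$. Then build a partition tree on the centers of the $n^{2/3}$ disks crossing each bottom cell.

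The leaf structure is also not a valid family of canonical subsets. Taking the disk set of each arrangement face costs $\Theta(k^3)$. A persistent or shared encoding, or an encoding by differences, does not bound $\sum_i|\cD^{(i)}|$. Nor does it write $\cD_p$ as a \emph{union} of canonical subsets. The paper needs exactly that: it explicitly builds $C_i=\bigcup_{D\in\cD^{(i)}}D$ for every canonical subset in $O(|\cD^{(i)}|\log|\cD^{(i)}|)$ time, and uses $F_p=\bigcup_{i\in J_p}C_i$. The standard fix is a hierarchical cutting of the $k$ circles in the primal. For a cell $\tau$ with parent $\tau'$, the canonical subset is the set of disks whose boundary crosses $\tau'$ but which contain $\tau$. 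The total size is $\sum_i O(\rho^{2i})\cdot O(k/\rho^{i-1})=O(k^2)$, with $O(\log k)$ subsets per point, hence $O(n^{1/3}\log n)$ per point overall. Finally, the $O(\sqrt t)$ crossing number you use needs the planar dual (centers versus unit-disk ranges), i.e.\ congruent disks, which is the paper's setting. The lifted version (points versus halfspaces in $\reals^3$) has crossing number $t^{2/3}$ and does not give these bounds, so ``either way'' is not accurate.
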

 For each canonical subset $\cD^{(i)}$ where $i\in [1,s]$, we compute (the boundary of) $C_i = \cup_{D\in \cD^{(i)}}D$ in $O(|\cD^{(i)}|\log(|\cD^{(i)}|))$ time. Let $V(\bdry C_i)$ be the set of vertices on the boundary of $C_i$, i.e, the intersection point of two unit circle arcs on $\bd C_i$. Then:
 \begin{equation}\label{eq:Fi-size}
     \sum_{i=1}^{s}|V(\bdry C_i)| = \sum_{i=1}^{s} O(|\cD^{(i)}|) = O(n^{4/3}\log n )
 \end{equation}
The total preprocessing time is:
 \begin{equation}\label{eq:preprocessing}
    O(n^{4/3} \log n) + \sum_{i=1}^{s} O(|\cD^{(i)}|\log(|\cD^{(i)}|)) = O(n^{4/3}\log^2 n)
 \end{equation}
 
 Using the canonical representation, one can decompose the boundary curve $\bdry F_p$ into a set $\Gamma_p$ of $\Tilde{O}(n^{1/3})$ subcurves where each subcurve $\gamma\in \Gamma_p$ is a boundary segment of the union of disks in a canonical set. That is, $\gamma \subseteq \bdry (\cup_{D \in \cD^{(i)}} D)$ for some $i \in [1,s]$. See Figure~\ref{fig:canonical}. This is the key for us to manipulate $\bdry F_p$  in $\Tilde{O}(n^{1/3})$ time and ultimately achieve a truly subquadratic time to compute $\bigcap_{p} F_p$. Naively, the number of disks contributing to $\bdry F_p$   could be $\Omega(n)$, and without a compact representation, computing $\bigcap_{p} F_p$ could  take $\Omega(n^2)$ time. 

     \begin{figure}
    \centering
    \includegraphics{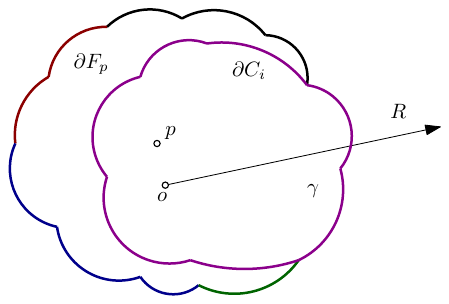}
    \caption{The boundary $\bdry F_p$ implicitly represented by curves from canonical sets (shown in different colors). The boundary of one canonical set $C_i$ is also shown.}
    \label{fig:canonical}
    \end{figure}

We work with the setup in \Cref{lm:check-cells} where $\cD$ is a set of disks that intersect both cells $A$ and $B$, and $F_p$ is the union of disks in $\cD$ that contain $p$. We build the following data structure $\cF$. First, we compute, for each canonical disk union $C_i$ the complete description of $\bd C_i$ by listing the disk indices that contribute to the boundary in the clockwise order of the boundary when viewed from $o$. Recall that by \Cref{lem:ray-intersection} each ray from the origin $o$ (defined with respect to the two cells $A, B$) has a unique intersection with $\bd C_i$, so we can also get an explicit list of $V(\bd C_i)$ in clockwise order by computing the intersection point of consecutive arcs. We store the list of arcs (with indices) in an array. The time required to compute $\cF$ is also $O(n^{4/3}\log^2 n)$. In the same amount of time we also store a large array $\Lambda$ of all vertices $\bigcup_i V(\bd C_i)$ sorted according to the cyclic order of rays from $o$. We also store pointers to the circular arcs preceding and succeeding each $v\in V(\bd C_i)$ together with their entry in $\Lambda$.

Next, we provide two basic primitives that we could apply to operate on the implicit representation of $F_p$. (To be more precise, we operate on the implicit representation of $\cD_p$.) Our basic primitives are inspired by that of~\cite{ASW98}.  

\begin{lemma}[Find the intersection point of a ray $R$ and $\bdry F_p$]\label{lm:ray-shooting} 
Let $R$ be a ray originated from some point $o$ inside a flower $F_p$ with $\|o,p\|<1/2$. 
We can construct a data structure in  $O(n^{4/3}\log^2 n)$ time so that given $R$ and $F_p$, we can query their intersection point as well as the circular arc(s) of $\bd F_p$ that are incident to the intersection point in $O(n^{1/3}\log^2 n)$ time.
\end{lemma}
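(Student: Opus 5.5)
The plan is to reduce the query to $O(n^{1/3}\log n)$ ray-shooting queries, one against each canonical union $C_i$ with $i\in J_p$, and then combine the answers. The data structure is the structure $\cF$ already built in $O(n^{4/3}\log^2 n)$ time. It stores each $\bd C_i$ as an array of arcs in angular order around $o$, together with the vertex lists $V(\bd C_i)$.

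First, I will establish the combination rule. Since $\cD_p=\bigcup_{i\in J_p}\cD^{(i)}$, we have $F_p=\bigcup_{i\in J_p} C_i$. Every disk of $\cD$ contains $o$ by \Cref{lem:ostab}, so each $C_i$ and $F_p$ are star-shaped from $o$. Consequently a ray $R$ from $o$ meets each $\bd C_i$, and also $\bd F_p$, in exactly one point; this uses the star-shapedness and the uniqueness statement of \Cref{lem:ray-intersection}, whose hypothesis $\|o,p\|<1/2$ is given. Along $R$, the point $R\cap F_p$ is the segment from $o$ to the farthest of the points $R\cap\bd C_i$, $i\in J_p$. Hence $R\cap\bd F_p$ is the maximum-distance point among these, and the incident arc(s) of $\bd F_p$ are the arcs of $\bd C_i$ incident to it, for any $i$ attaining the maximum. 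If several canonical sets tie, or the hit point is a vertex, I collect the arcs from all tied sets and keep those that remain outermost just before and just after the direction of $R$. That selection is a constant-size local comparison of circular arcs.

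Second, each individual query $R\cap\bd C_i$ is a binary search on the angularly sorted arc array of $\bd C_i$. I locate the arc whose angular span around $o$ contains the direction of $R$, then intersect $R$ with that unit circle in $O(1)$ time. This costs $O(\log n)$ per canonical set. Summing over $|J_p|=O(n^{1/3}\log n)$ sets gives $O(n^{1/3}\log^2 n)$, matching the claimed bound. The sets $J_p$ were precomputed by \Cref{lm:rep}, so no extra cost arises there.

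The main obstacle I expect is justifying the single-intersection and angular-order claims for each $C_i$ and for $F_p$ with respect to the common origin $o$, rather than with respect to $p$. \Cref{lem:ray-intersection} is stated for rays from a point near $p$ inside $F_p$, whereas each $C_i$ is a union of a subfamily of disks, not necessarily containing $p$. For the canonical sets, I will instead argue directly. Every disk of $\cD$ contains $o$, so every union $C_i$ is star-shaped with respect to $o$. Therefore each ray from $o$ exits $C_i$ exactly once, provided the disks all contain $o$ in their interior; this is where the margin in \Cref{lem:ostab} is used. It follows that the arcs of $\bd C_i$ are sorted by angle around $o$, which is exactly the order stored in $\cF$, so the binary search is valid.
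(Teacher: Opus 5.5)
Your proposal is correct and follows the paper's proof essentially step for step. You use the precomputed structure $\cF$, binary-search the angularly sorted arc array of each $\bd C_i$ for $i\in J_p$, take the intersection farthest from $o$, and keep the outermost incident arc(s), for $O(|J_p|\log n)=O(n^{1/3}\log^2 n)$ query time. Your direct star-shapedness argument, using that every disk contains $o$ in its interior by \Cref{lem:ostab}, slightly tightens the paper's appeal to \Cref{lem:ray-intersection} for the canonical unions $C_i$. Tie-handling among canonical sets costs at most $O(|J_p|)$ even without general position, so it stays within the stated bound.
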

\begin{proof}
    First by \Cref{lem:ray-intersection} there is only one intersection of $R$ and $\bdry F_p$.
    Given a ray $R$ originated from $o$, and $i \in [s]$, we can find the (unique) intersection point of $R$ and $\bdry C_i$ using a simple binary search in the data structure $\cF$. Then, to compute the intersection of a ray $R$ and $F_p$, we compute the intersections of $R$ with $\bd C_i$  for all $i \in J_p$ in time $O(|J_p|\log n)$, and then decide which point among them belongs to  $\bdry F_p$ in time $O(|J_P|)$: notice that the intersection that is farthest from $o$ is the intersection of $\bd F_p$ with $R$. Note that during the comparison of distances we can also keep track of the outermost arc(s) incident to the outermost intersection point, which will be the only one(s) contributing to $\bd F_p$ at $\bd F_p \cap R$. The total running time is $O(|J_p|\log n) = O(n^{1/3}\log^2 n)$, since $|J_p|$ is $O(n^{1/3}\log n)$.
\end{proof}

Next, we provide a primitive to find crossings of the boundaries of any two given flowers $\bdry F_p$ and $\bdry F_q$. By \Cref{lm:pseudo-disks}, there are at most two crossing overlaps. Our idea is to apply binary search by ray shooting. Specifically, we will shoot rays originating from $o$, starting from an axis parallel ray. 
At some point in the search, we have a ray $R$, and we will have to decide whether to shoot the next ray in the counterclockwise or clockwise order of $R$, depending on whether the intersection point we look for is to the left (counterclockwise) or right (clockwise) of $R$. If $R$ intersects $\bdry F_p$ and $\bdry F_q$ at two distinct points, it is not so hard to decide whether the intersection point is on the left or the right of $R$. However, the difficult case is when $\bdry F_p$ and $\bdry F_q$ share boundary; this could happen when $\cD_p$ and $\cD_q$ contain the same (one or more) disks. In this case, we rely on the following lemma to make the decision of going clockwise or counterclockwise.

Let $C_B$ denote the minimum closed cone of center $o$ enclosing the cell $B$.

     \begin{figure}
    \centering
    \includegraphics{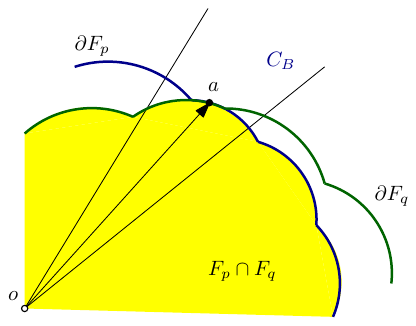} \hspace*{10mm}     \includegraphics{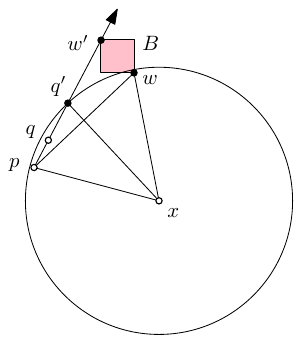}
    \caption{(Left) Crossing points of $\partial F_p$ and $\partial F_q$ inside cone $C_B$. (Right) $pq$ intersects cell $B$.}
    \label{fig:Fp-Fq}
    \end{figure}

\begin{lemma}[Crossing points of $\bdry F_p$ and $\bdry F_q$]
\label{lm:cross-point}
Given two points $p$ and $q$ in one grid cell $A$, either $\bd F_p\cap C_B$ or $\bd F_q\cap C_B$ is equal to the boundary $\bd (F_p\cap F_q)\cap C_B$, or there is some point $a\in \bd (F_p\cap F_q)$ that is a point of some crossing or ghost overlap of $\bd F_p$ and $\bd F_q$ such that $\bd (F_p\cap F_q)\cap C_B$ is the concatenation of one part of  $\bd F_p\cap C_B$ and one part of $\bd F_q\cap C_B$ separated by the ray $oa$. Furthermore, in $O(n^{1/3}\log^3 n)$ time, we can find the description of $\bd (F_p\cap F_q)\cap C_B$ with $\bd F_p$ and $\bd F_q$ and at most one breakpoint $a$.
\end{lemma}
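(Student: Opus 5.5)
We will work in angular coordinates around the origin $o$. Since $p,q\in A$ and $\|o,p\|,\|o,q\|<1/2$ for small $\delta$ (the origin lies at distance $1/3$ from $c_A$), \Cref{lem:ray-intersection} shows that every ray from $o$ meets $\bd F_p$ and $\bd F_q$ exactly once. Hence, inside $C_B$, each boundary is the graph of a radial function $r_p(\theta)$, $r_q(\theta)$ over the angular interval $\Theta_B$ of the cone. In this picture $\bd(F_p\cap F_q)\cap C_B$ is the graph of $\min(r_p,r_q)$. The structural claim then reduces to showing that the sign of $r_p-r_q$, with zeros ignored, changes at most once on $\Theta_B$. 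Overlap arcs where $r_p=r_q$ do no harm: on them either curve may be reported.

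\textbf{Structural part.} I will combine \Cref{lm:pseudo-disks} with \Cref{lm:orientation}. \Cref{lm:pseudo-disks} gives at most two crossing overlaps on the whole boundary. Since $A$ is tiny and $B$ is at distance about $1$, the cone $C_B$ has small angular width. The line $pq$, possibly nearly pointing toward $B$ (the right of \Cref{fig:Fp-Fq}), together with the bisector $\ell_{pq}$, cuts the plane into the quadrants used in \Cref{lm:orientation}.

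The difficulty is to rule out two crossings plus a ghost overlap producing a second sign change inside $C_B$. \Cref{lm:orientation} handles this: if an overlap point $x$ lies in a quadrant $H_q\cap U$, then $\bd(F_p\cap F_q)$ coincides with $\bd F_p$ from $x$ toward the line $pq$. So once $r_p=r_q$ occurs in that quadrant, $F_p$ is the inner boundary all the way to line $pq$, and symmetrically for the other quadrants. I will do a case analysis on which quadrants $C_B$ meets. Since $C_B$ is narrow, it meets line $pq$ at most once, or lies in a single half-plane $U$. Within each $U$-side, \Cref{lm:orientation} forces a monotone pattern: $q$-boundary first, then $p$-boundary when moving toward the line $pq$ in $H_q$, and the mirror pattern in $H_p$. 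Gluing the sides yields at most one switch. The switch point $a$ is the extreme overlap point, either a crossing or the endpoint of a ghost overlap. I expect this case analysis, in particular the configuration where line $pq$ passes through $B$, to be the main obstacle; I would first try to show that then $C_B$ lies in $H_p$ or in $H_q$ (since $B$ is far away), so that \Cref{lm:orientation} applies on both sides of line $pq$ consistently.

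\textbf{Algorithm.} Using \Cref{lm:ray-shooting}, a query along any ray $o\theta$ returns $r_p(\theta)$, $r_q(\theta)$ and the incident arcs in $O(n^{1/3}\log^2 n)$ time. First I shoot the two bounding rays of $C_B$. If the same curve is inner (or tied) at both, then by the one-switch structure that curve is the answer with no breakpoint. Otherwise I binary search over $\Theta_B$. At a tie, $r_p(\theta)=r_q(\theta)$, I decide the direction using the side of line $pq$ and of $\ell_{pq}$ on which the point lies, via \Cref{lm:orientation}, which says which curve must be inner beyond the overlap.

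To make the search discrete, I binary search over the sorted vertex array $\Lambda$ restricted to $\Theta_B$, which takes $O(\log n)$ rounds. After that the switch lies between consecutive vertex angles, where $\bd F_p$ and $\bd F_q$ are single arcs of two unit circles, and I compute their intersection directly to obtain $a$. The total time is $O(\log n)\cdot O(n^{1/3}\log^2 n)=O(n^{1/3}\log^3 n)$.
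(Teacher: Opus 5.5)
\textbf{Gap in the structural part.} The gap is in exactly the case you flag: line $pq$ passes through $B$, so the part of $\bd(F_p\cap F_q)$ inside $C_B$ straddles line $pq$. Proving $C_B\subseteq H_q$ (with $q$ the point nearer to $B$) does not rescue the gluing argument.

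Apply \Cref{lm:orientation} in both $H_q\cap U$ and $H_q\cap U'$. In each quadrant it only says that the inner curve is $\bd F_p$ from the overlap point angularly farthest from line $pq$ down to that line. Beyond these points, towards $\ell_{pq}$, nothing prevents $\bd F_q$ from being strictly inner on \emph{both} sides. Reading across $C_B$ you then get $\bd F_q$, $\bd F_p$, $\bd F_q$, i.e.\ two switches. This uses exactly two crossing overlaps, which \Cref{lm:pseudo-disks} allows.

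So \Cref{lm:orientation} plus the weak-pseudodisk bound cannot yield a single breakpoint here. Your gluing is correct only when the relevant boundary lies in one half-plane $U$, which is the paper's case (ii).

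The missing ingredient is metric. If $q$ lies between $p$ and $B$ on line $pq$, then every unit disk containing $p$ and meeting $B$ also contains $q$. The paper proves this from $\|p,q\|\le\sqrt2\delta$, $\dist(A,B)\ge 1-2\sqrt2\delta$, and a cosine-law angle estimate. Since every disk in $\cD$ meets $B$, this gives $\cD_p\subseteq\cD_q$, hence $F_p\subseteq F_q$ and $\bd(F_p\cap F_q)\cap C_B=\bd F_p\cap C_B$ with no breakpoint. Algorithmically, this case is settled by testing whether line $pq$ meets $B$, before any search.

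\textbf{Error in the last algorithmic step.} Between two consecutive entries of $\Lambda$, $\bd F_p$ need not be a single circular arc. $\Lambda$ stores only the vertices of the individual canonical unions $\bd C_i$, not the points where $\bd C_i$ and $\bd C_j$ cross for $i,j\in J_p$. In such a wedge each $\bd C_i$ contributes one arc, but $\bd F_p$ is the outer envelope of up to $|J_p|=O(n^{1/3}\log n)$ of them.

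The fix is as follows:
\begin{itemize}
\item extract these arcs by binary search in each $\bd C_i$;
\item compute the envelopes for $J_p$ and $J_q$ in $O(n^{1/3}\log^2 n)$ time;
\item sweep the two envelopes to find $a$.
\end{itemize}
This still fits the $O(n^{1/3}\log^3 n)$ budget.

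The rest of your search (ray shooting via \Cref{lm:ray-shooting}, breaking ties with \Cref{lm:orientation}) matches the paper once the case split above is in place.
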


\begin{proof}
    First, we show that a description of $\bd (F_p\cap F_q)\cap C_B$ as described by the lemma is possible.
    We distinguish two cases based on whether the line $pq$ intersects the cell $B$ or not.    
    
    In case (i), the line $pq$ intersects $B$. Assume without loss of generality that $q$ lies between $p$ and $B$ on the line $pq$. Then one can show that due to $\dist(p,q)\leq\sqrt{2}\delta$ and the position and size of $B$, any unit disk containing $p$ and intersecting $B$ also contains $q$, thus $\bd (F_p\cap F_q)\cap C_B=\bd F_p \cap C_B$. 
    To see why, suppose a unit radius disk $D_x$ centered at $x$ intersects both $p$ and cell $B$ (specifically, a point $w$ of $B$).     
    If the line $pq$ intersects a point $w'$ of $B$ which is inside the disk $D_x$, by convexity $q$ is also inside $D_x$ and we are done. Thus, we can assume wlog that $\|w, x\|\geq 1-\sqrt{2}\delta$ -- otherwise the entire cell $B$ is inside $D_x$. Further, we can assume wlog that $\|p, x\|\geq 1-\sqrt{2}\delta$ -- otherwise $q$ is also inside $D_x$. See the right hand side of Figure~\ref{fig:Fp-Fq}.
    Now we assume that one intersection $w'$ of line $pq$ with $B$ is completely outside of $D_x$. $\|p, w\|\geq 1-2\sqrt{2}\delta$, since the two cells $A, B$ have minimum distance at least $ 1-2\sqrt{2}\delta$. Also $\|w,w'\|\leq \sqrt{2}\delta$. Thus the angle $\angle wpw'$ is at most $\arcsin(\frac{\|w,w'\|}{\|w,p\|})\leq \arcsin(\frac{\sqrt{2}\delta}{1-2\sqrt{2}\delta})\approx \Theta(\delta)$. 
    Now we consider $\triangle pxw$. By cosine law, we have 
    \[
    \cos \angle wpx =\frac{\|p,x\|^2+\|p, w\|^2-\|x,w\|^2}{2\|p, x\|\cdot \|p, w\|}\geq \frac{(1-\sqrt{2}\delta)^2+(1-2\sqrt{2}\delta)^2-1}{2(2+2\sqrt{2}\delta)}\geq \frac{1}{4}-\frac{7\delta}{4(1+\sqrt{2}\delta)}
    \] 
    Thus $\angle wpx \leq 76^{\circ}+\Theta(\delta)$. 
    Define the intersection of the ray $pq$ with the boundary of $D_x$ as $q'$. $\angle q'px=\angle q'pw+\angle wpx$. $\|q',x\|=1\geq \|p,x\|$.  By sine law $\angle pq'x\leq \angle q'px$. And thus $\angle pxq'\geq 28^{\circ}-\Theta(\delta)$. If we choose $\delta$ to be a sufficiently small constant, we can make $\|p,q'\|\geq \sqrt{2}\delta$, which means that $q$ is inside $D_x$. This finishes the claim.

     \begin{figure}
    \centering
    \includegraphics[scale=1.1]{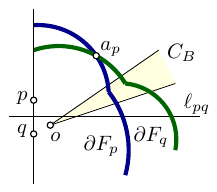} \hspace*{4mm}
    \includegraphics[scale=1.1]{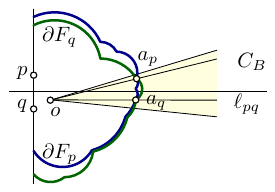}

    \caption{ (Left) $a_p$ and $a_q$ are not inside $C_B$. (Right) $a_p$ and $a_q$ are both inside $C_B$. }
    \label{fig:pqB-2}
    \end{figure}    

    In case (ii), the line $pq$ is disjoint from $B$. Then $\bd (F_p\cap F_q)\cap C_B$ lies entirely in one half-plane $U$ given by $pq$, which matches the setup in \Cref{lm:orientation}. Without loss of generality we assume that the line $pq$ is vertical with $p$ above $q$ and $U$ on the left hand side of line $pq$ (as in \Cref{fig:binarysearchfix} (i)). Let $\ell_{pq}$ be the perpendicular bisector of $p$ and $q$. Let $a_p$ denote the intersection point of $\bd F_p$ and $\bd F_q$, on the same side of $\ell_{pq}$ as $p$, that is closest to $\ell_{pq}$ in the angular order along the boundary $\bd (F_p\cap F_q)\cap C_B$. Similarly, denote by $a_q$ as the closest intersection on the same side of $q$. Notice that $a_p,a_q$ or both may not exist or not be part of $C_B$. If neither exists or falls into $C_B$ (see \Cref{fig:pqB-2} (Left)), then $\bd (F_p\cap F_q)\cap C_B$ is equal to one of  $\bd F_p\cap C_B$ or $\bd F_q\cap C_B$. If only $a_p$ falls in $C_B$, then $\bd (F_p\cap F_q)\cap C_B$ can be described in the claimed fashion with $a=a_p$. If both fall in $C_B$, $\ell_{pq}$ intersects $C_B$. See \Cref{fig:pqB-2} (Right).
    Let $C_{a_p}$, $C_{pq}$ and $C_{a_q}$ denote the decomposition of $C_B$ into three consecutive cones by rays $oa_p$ and $oa_q$, where $C_{pq}$ contains the intersection of $\ell_{pq}$ and $\bd(F_p\cap F_q)$ and contains $a_p$ and $a_q$ on its boundary, while $C_{a_p}$ and $C_{a_q}$ denote the two cones of $C_B\setminus C_{pq}$ on either side of $C_{pq}$, containing $a_p$ and $a_q$ on their boundary, respectively. By \Cref{lm:orientation} we have that the portion of $\bd(F_p\cap F_q)$ beyond $a_q$ (i.e., inside $C_{a_q}$) is equal to $\bd F_p$; similarly, $\bd(F_p\cap F_q)\cap C_{a_p}= \bd F_q \cap C_{a_p}$. Notice that in the interior of $C_{pq}$ the boundaries $\bd F_q$ and $\bd F_p$ are disjoint, thus the one closer to $o$ will be equal to $\bd(F_p\cap F_q)\cap C_{pq}$. Suppose that $\bd(F_p\cap F_q)\cap C_{pq}= \bd F_p\cap C_{pq}$ (the other case is symmetric). Then we can use $a:=a_p$ as the breakpoint, as $\bd(F_p\cap F_q)$ and $\bd F_q$ are equal in $C_{a_p}\cup C_{pq}$.

    It remains to compute the desired description of $\bd (F_p\cap F_q)\cap C_B$ in $\Tilde{O}(n^{1/3})$ time. We will do this via a binary search procedure that uses \Cref{lm:ray-shooting} to find the desired description. However, before the binary search we first check if the line $pq$ intersects $B$. If it does, then we conclude that $\bd F_p\cap C_B=\bd (F_p\cap F_q)\cap C_B$ if the ray $pq$ intersects $B$ and we conclude $\bd F_q\cap C_B=\bd (F_p\cap F_q)\cap C_B$ if the ray $qp$ intersects $B$.

    Suppose now that the line $\overleftrightarrow{pq}$ is disjoint from $B$ and thus $B$ lies in an open half-plane $U$ whose boundary is the line $\overleftrightarrow{pq}$. Assume without loss of generality that $\bd U$ is vertical, $p$ is above $q$, and $U$ is to the right of its boundary. By \Cref{lem:ray-intersection} we have that for any point $o'$ within distance $1/2$ from both $p$ and $q$ the rays emanating from $o'$ intersect $\bd F_p$ (and $\bd F_q$) in the same clockwise order. In particular, we can talk about a point of $\bd F_p$ being above/below another point based on their clockwise order along $\bd F_p$, and the same holds for $\bd F_q$ and even for $\bd (F_p\cap F_q)$.

    Using \Cref{lm:ray-shooting} twice (for $F_p$ and $F_q$) and the ray $R=\ell_{pq}\cap U$ we determine the intersections of $\bd F_p$ and $\bd F_q$ with $R$. Let $z$ denote the intersection closer to $\overleftrightarrow{pq}$, and assume without loss of generality that $z\in \bd F_q$.
    Let $R^\uparrow$, $R^\downarrow$ denote the boundary rays of $C_B$ where $R^\uparrow\cap \bd F_p$ is above $R^\downarrow \cap \bd F_p$. 
    If $z$ is below $R^\downarrow\cap \bd F_p$, then either there is no overlapping point of $\bd F_p$ and $\bd F_q$ between the part of $\bd F_q$ from $z$ to $R^\uparrow \cap \bd F_q$, and thus $\bd F_p\cap \bd F_q$ coincides with $\bd F_q$ on this range (and thus, also in $C_B$), or there is some nearest overlap point $x$, above which $\bd F_q$ and $\bd(F_p\cap F_q)$ coincide by \Cref{lm:orientation}. In both cases we can conclude that $\bd (F_p\cap F_q)\cap C_B=\bd F_q \cap C_B$ and terminate.
    
    If $z$ is above $R^\downarrow\cap \bd F_p$, then we find the nearest overlap of $\bd F_p$ and $\bd F_q$ below $z$ via binary search as follows. If $z\in \bd F_p\cap\bd F_q$, then the binary search can conclude as $z$ is a correct overlap point. Otherwise, we do binary search using the description of $\bd F_p$ in the data structure $\cF$ constructed earlier. First, we find the arc containing the intersection point $y^\downarrow:=\bd F_p\cap \overrightarrow{pq}$. By our earlier argument, we have that $y^\downarrow\in \bd(F_p\cap F_q)$. Similarly, we find the element of the array representing $\bd F_p$ that contains the point $y^\uparrow:=\bd F_p\cap \overrightarrow{oz}$. Notice that there must be some point $a\in \bd F_p\cap \bd F_q$ between $y^\uparrow$ and $y^\downarrow$ such that $\bd (F_p\cap F_q)$ is equal to $\bd F_q$ above $a$ and equal to $\bd F_p$ below $a$. We will maintain this invariant throughout the search.
    
    We do binary search using the array $\Lambda$: we select the point $y\in \Lambda$ midway between the nearest points before $y^\uparrow$ and the nearest point after $y^\downarrow$ (where after/before refers to the clockwise sorting of rays around $o$, which aligns with the sorting of $\Lambda$). Using \Cref{lm:ray-shooting} we find the points $\overrightarrow{oy}\cap \bd F_p$ and $\overrightarrow{oy}\cap \bd F_q$.
    If $\overrightarrow{oy}\cap \bd F_p$ and $\overrightarrow{oy}\cap \bd F_q$ coincide or if $\overrightarrow{oy}\cap \bd F_q$ is closer to $o$ than $\overrightarrow{oy}\cap \bd F_p$, then by \Cref{lm:orientation} we have that $\bd F_p$ and $\bd(F_p\cap F_q)$ are equal below $\overrightarrow{oy}$. We set $y\downarrow:=y$ and continue on the subarray of $\lambda$ between $y^\downarrow$ and $y^\uparrow$.
    If $\overrightarrow{oy}\cap \bd F_q$ is farther from $o$ than $\overrightarrow{oy}\cap \bd F_p$, then by \Cref{lm:orientation} we have that $\bd F_q$ and $\bd(F_p\cap F_q)$ are equal above $\overrightarrow{oy}$. We set $y\uparrow:=y$ and continue on the subarray of arcs between $y^\downarrow$ and $y^\uparrow$. Observe that in both cases we maintain the invariant.

    The search terminates when the subarray consists of two consecutive intersection points $y^\downarrow,y^\uparrow$ of~$\Lambda$. Notice however that the boundaries of $\bd F_p$ and $\bd F_q$ may still have several arcs each between $y^\downarrow
    $ and $y^\uparrow$ as $\Lambda$ does not contain intersection points between different canonical sets $\bd C_i, \bd C_j$ even if $C_i,C_j$ both come from the description of $F_p$. To deal with this issue, observe that for each $C_i$ contributing to $\bd F_p$ we have that there is only a single circular arc of $C_i$ intersecting the cone between $\overrightarrow{oy}^\downarrow$ and $\overrightarrow{oy}^\uparrow$, as the cone cannot contain any vertex of $V(\bd C_i)$. We can find all of these $|J_p|$ arcs via a binary search on each $\bd C_i$. Since $|J_p|=O(n^{1/3}\log n)$, the boundary of the union of these arcs can be computed in $O(|J_p|\log|J_p|)=O(n^{1/3}\log^2 n)$ time. Similarly, we compute the boundary of the arcs of each $C_j$ appearing in $J_q$ between $\overrightarrow{oy}^\downarrow$ and $\overrightarrow{oy}^\uparrow$. Finally, we sweep both of these boundaries in $O(n^{1/3}\log n)$ time to find the desired point $a$.

    To return the representation, we simply check if $a\in C_B$ and return the correct description of $\bd (F_p\cap F_q)$ in $C_B$ accordingly. The binary search procedure uses $O(\log n)$ calls to the subroutine of \Cref{lm:ray-shooting} thus it runs in $O((|J_p|+|J_q|)\log^2 n)=O(n^{1/3}\log^3 n)$ time, and this dominates the running time of the latter parts of the procedure.
\end{proof}

\begin{remark}
    We note that the algorithm presented for the analogous boundary-intersection-finding problem in~\cite{ASW98} (the subroutine matching \Cref{lm:cross-point}) has a small gap. Just as in our setting, the sets $K_p$ form weak pseudodisks and the goal is to compute a description of the boundary of a pairwise union $\bd(K_p\cup K_q)$. However, the existence of ghost overlaps means that a simple binary search using the ray-shooting subroutine is not possible. In our setting, \Cref{lm:orientation} allowed us to resolve the case when the ray of the binary search intersects both $\bd F_p$ and $\bd F_q$ in the same point. Presumably, a similar geometric lemma (analogous to \Cref{lm:orientation}) could fill this small gap in~\cite{ASW98}.
\end{remark}

\subsection{The Algorithm}\label{sec:untidiskalgo}

We show how to compute the boundary curve $\gamma = C_B\cap \bdry(\cap_{p\in A\cap P}F_p)$ in $\Tilde{O}(n^{4/3})$ time. The algorithm is by divide and conquer: divide the set $A\cap P$ into two subsets $A_1$ and $A_2$ of roughly equal size, compute the boundary $\gamma_1 = B\cap \bdry(\cap_{p\in A_1}F_p)$ and $\gamma_2 = B\cap \bdry(\cap_{p\in A_2}F_p)$. Then, we will compute $\gamma$ from $\gamma_1$ and $\gamma_2$ by sweeping a ray in $C_B$.

To implement the above algorithm efficiently, we have to have a compact presentation of $\gamma_1$ and $\gamma_2$. We rely on the following observation, which is a consequence of the fact that weak pseudodisks have linear union complexity.

\begin{observation}\label{obs:flower-int-rep}
Let $S$ be a subset of $P\cap A$. Let $\gamma_S =  C_B\cap \bdry(\cap_{p\in S}F_p)$. Then we can decompose $\gamma_S$ into at most $O(|S|)$ curves where each curve, denoted by $\alpha_p$, is a sub-curve of $\bdry F_p$ for some point $p\in S$.
\end{observation}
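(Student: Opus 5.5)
We reduce the statement to a linear bound on the number of breakpoints along $\gamma_S$, and obtain that bound from the linear union complexity of weak pseudodisks. By Lemma~\ref{lem:ray-intersection}, every $F_p$ with $p\in S\subseteq A$ is star-shaped with respect to $o$: since $\|o,p\|\le 1/3+O(\delta)<1/2$, each ray from $o$ meets $\bd F_p$ exactly once. Hence each ray $\overrightarrow{or}$ inside $C_B$ also meets $\bd(\bigcap_{p\in S}F_p)$ exactly once, namely at the nearest of the points $\overrightarrow{or}\cap\bd F_p$ over $p\in S$. So $\gamma_S$ is the \emph{lower envelope}, in polar coordinates around $o$ and restricted to the angular range of $C_B$, of the $|S|$ curves $\bd F_p$. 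We cut $\gamma_S$ at every angle where the index $p$ realizing the minimum changes; on maximal stretches where ties occur (overlaps), we assign the stretch to any one of the tied $p$. Each resulting piece is a sub-curve $\alpha_p$ of a single $\bd F_p$. It remains to bound the number of pieces by $O(|S|)$.

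\textbf{Step 1: reduce to a union.} Apply an inversion $\iota$ centered at $o$, or simply swap ``near'' and ``far'' along each ray. Under $\iota$, each star-shaped region $F_p$ maps to the closure of the complement of a star-shaped region $F'_p$, and intersections become unions. Then $\gamma_S$ corresponds to the boundary of $\bigcup_{p\in S}F'_p$, and breakpoints of the envelope correspond to vertices of this union boundary where the contributing region changes. The map $\iota$ is a homeomorphism of the punctured plane, so it preserves overlaps and the crossing/ghost classification of overlaps. Therefore, by Lemma~\ref{lm:pseudo-disks}, $\{F'_p\}$ is again a weak pseudodisk family: each pair of boundaries has at most two crossing overlaps.

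\textbf{Step 2: linear union complexity.} We then invoke the classical result that the union of $m$ pseudodisks has $O(m)$ boundary vertices (Kedem--Livne--Pach--Sharir), in its weak-pseudodisk form. If one prefers a self-contained argument in this one-dimensional angular setting, a Davenport--Schinzel argument works directly. Consider the sequence of indices realizing the envelope as the angle sweeps across $C_B$. Consecutive equal indices are merged, and tied stretches are absorbed into a neighbouring piece. If a pattern $p\,q\,p\,q$ occurred, the envelope would pass from $\bd F_p$ to $\bd F_q$ and back three times. Between two such switches, the order of $\bd F_p$ and $\bd F_q$ along rays from $o$ flips strictly, which forces a crossing overlap at each switch. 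That gives at least three crossing overlaps, contradicting Lemma~\ref{lm:pseudo-disks}. Hence the sequence is a Davenport--Schinzel sequence of order $2$ on $|S|$ symbols, whose length is at most $2|S|-1$. This yields the $O(|S|)$ bound on the number of curves $\alpha_p$.

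\textbf{Main obstacle.} The delicate step is the treatment of ghost overlaps and shared arcs. A switch of the minimizing index could occur at a ghost overlap, where $\bd F_p$ and $\bd F_q$ touch or share an arc without changing order, and that switch would not count as a crossing. We handle this by merging every tied stretch into the adjacent piece whose index continues on the other side. Then each genuine alternation $p\to q$ in the compressed sequence corresponds to a strict reversal of the order of $\bd F_p$ and $\bd F_q$, i.e., a crossing overlap. We verify this reversal using the uniqueness of ray intersections from Lemma~\ref{lem:ray-intersection}. The order of the two boundaries along rays is well defined and changes only at overlaps, and a change across an overlap counts as crossing exactly when the order before and after the overlap differs.
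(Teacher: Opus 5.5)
Your main line is the paper's argument. The paper justifies Observation~\ref{obs:flower-int-rep} only by noting that the flowers form a weak pseudodisk family (Lemma~\ref{lm:pseudo-disks}) and that weak pseudodisks have linear union complexity. Your star-shapedness about $o$ and inversion centred at $o$ (the complementation trick from the paper's stereographic remark) just spell out why a union bound applies to the boundary of an intersection.

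The self-contained Davenport--Schinzel alternative, however, does not work as written, and it fails exactly at the tie handling you flag. A tied stretch need not be absorbable. Suppose exactly $p$ and $q$ attain the minimum on a stretch flanked by pieces where $r$, respectively $s$, is the unique minimizer. Then neither neighbour lies on $\gamma_S$ inside the stretch, so it must be its own piece, and your rule (``any one of the tied indices'') allows the label $p$. If strict pieces of $q$, $p$, $q$ follow, the compressed sequence $r\,p\,s\,q\,p\,q$ contains $p\,q\,p\,q$. Yet on these four pieces the radial difference between $\partial F_p$ and $\partial F_q$ has sign pattern $0,+,-,+$, which forces only two crossing overlaps. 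One can realize this pattern with radial functions in which every one of the six pairs has at most two crossing overlaps. So Lemma~\ref{lm:pseudo-disks} yields no contradiction, and the claim that the sequence is DS$(2)$ is false under your rule.

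A related problem is that the minimizer set can change inside a maximal tied stretch, so a single index tied throughout need not exist. To make the DS$(2)$ argument valid, you would need a labelling rule, with proof, under which every occurrence of a symbol has a strict witness against each symbol it alternates with. As it stands, the $O(|S|)$ bound rests on the cited union-complexity result, exactly as in the paper.
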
 

Observe that we can (implicitly) represent a subcurve of $\bdry F_p$ of a point $p$  by a triple $(p,a, b)$ where $a$ and $b$ are the two endpoints of the subcurve, and the subcurve is the clockwise portion of $\bd F_p$ from $a$ to $b$. By Observation~\ref{obs:flower-int-rep}, we can represent $\gamma_S$ for any subset $S$ of  $P\cap A$ in $O(|S|)$ space by a linked list: each element of a link list contains a triple $(p,a,b)$ representing a sub-curve of $S$ on  $\bdry F_p$, such that if  $(p,a,b)$  and $(\hat{p},\hat{a},\hat{b})$ are two consecutive elements (representing two consecutive subcurves) in the linked list, then $b = \hat{a}$. For a curve $\gamma_S$ let $LL(\gamma_S)$ denote the corresponding implicit representation.

\begin{lemma}\label{lem:raysweep}
    Let $\gamma_1,\gamma_2,\gamma$ be as stated above (see beginning of \Cref{sec:untidiskalgo}). Then, given $LL(\gamma_1)$ and $LL(\gamma_2)$, we can compute $LL(\gamma)$ in time $O\big(n^{1/3}\log^3 n \cdot(|LL(\gamma_1)|+|LL(\gamma_2)|)\big)$.
\end{lemma}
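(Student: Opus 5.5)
We merge $LL(\gamma_1)$ and $LL(\gamma_2)$ by an angular sweep of a ray from $o$ across the cone $C_B$. The sweep stops at every endpoint of a subcurve in either list. Both $\gamma_1$ and $\gamma_2$ are star-shaped from $o$ within $C_B$ (by \Cref{lem:ray-intersection}, each ray from $o$ meets every $\bd F_p$ once, hence meets $\bd(\bigcap F_p)$ once). So along each ray the boundary of the intersection is simply the point of $\gamma_1$ or $\gamma_2$ nearer to $o$.

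First, sort the $O(|LL(\gamma_1)|+|LL(\gamma_2)|)$ breakpoints of the two lists by angle around $o$. This splits $C_B$ into elementary subcones. Inside each subcone, $\gamma_1$ lies on a single $\bd F_p$ and $\gamma_2$ lies on a single $\bd F_q$ with $p,q\in A$.

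In each subcone we must find how $\bd(F_p\cap F_q)$ restricted to it is made up of $\bd F_p$ and $\bd F_q$. \Cref{lm:cross-point} does exactly this (it is stated for all of $C_B$, but the same binary search works in any subcone of $C_B$, or we can clip its output). It returns the description of $\bd(F_p\cap F_q)\cap C_B$ with at most one breakpoint $a$ in $O(n^{1/3}\log^3 n)$ time. We restrict that description to the subcone and append the resulting one or two pieces $(p,\cdot,\cdot)$ or $(q,\cdot,\cdot)$ to the output list. Where the chosen flower does not change across a breakpoint, we merge adjacent triples with the same $p$.

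Correctness rests on three facts:
\begin{itemize}
\item Within the subcone, $\gamma=\bd(F_p\cap F_q)$, since the other flowers are already accounted for in $\gamma_1,\gamma_2$.
\item The piece endpoints agree across subcone boundaries, because each endpoint is the unique intersection point on the separating ray.
\item The output has $O(|S_1|+|S_2|)$ pieces, consistent with Observation~\ref{obs:flower-int-rep}.
\end{itemize}

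The running time is one invocation of \Cref{lm:cross-point} per subcone, giving $O\big(n^{1/3}\log^3 n\cdot(|LL(\gamma_1)|+|LL(\gamma_2)|)\big)$; the sorting and merging cost is dominated by this.

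The main obstacle is that \Cref{lm:cross-point} is stated for $C_B$, while here the relevant region is a subcone, and $\gamma_1$ and $\gamma_2$ may overlap along shared disk arcs (ghost overlaps). I would handle both by arguing that the proof of \Cref{lm:cross-point}, including \Cref{lm:orientation}, only uses rays inside the cone. For a subcone $C'\subseteq C_B$, its result applied and then clipped to $C'$ is therefore valid. When the two curves coincide on an interval, either label is correct; we break ties toward $p$.
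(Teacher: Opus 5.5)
Your proposal matches the paper's proof. The paper also sweeps a ray through $C_B$ with events at the endpoints in $LL(\gamma_1)\cup LL(\gamma_2)$, calls \Cref{lm:cross-point} for the current pair $(p,q)$ on each elementary subcone, clips the returned description to that subcone, and appends at most two entries, for $O(|LL(\gamma_1)|+|LL(\gamma_2)|)$ calls in total. The obstacle you flag does not actually arise: \Cref{lm:cross-point} already returns a correct description of $\bd(F_p\cap F_q)\cap C_B$, so restricting it to any subcone $C\subseteq C_B$ is automatically correct, and you need not re-examine that lemma's proof.
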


\begin{proof}
    We will sweep a ray through $C_B$, and maintain the invariant that the representation of $LL(\gamma)$ has been computed in the swept part of $C_B$. The events correspond to the points that appear in $LL(\gamma_1)\cup LL(\gamma_2)$.  We initialize the event queue by adding all points in $LL(\gamma_1)\cup LL(\gamma_2)$ in clockwise order; this takes $O(|LL(\gamma_1)|+|LL(\gamma_2)|)$ time.

    We sweep $C_B$ clockwise.. We keep track of the current curves intersected by the sweeping ray $\alpha_p$ corresponding to some triplet $(p,a_p,b_p)$ from $LL(\gamma_1)$ and some curve $\alpha_q$ corresponding to some triplet $(q,a_q,b_q)$ from $LL(\gamma_2)$. Let $C$ denote the section of $C_B$ between the event being processed $a'$ and the next event $a''$. We invoke the algorithm of~\Cref{lm:cross-point} to get a description of $\bd(F_p\cap F_q)\cap C_B$, which we restrict to $C$ and we append to $LL(\gamma)$ the corresponding entries. Note that we add two entries to $LL(\gamma)$ only if the procedure returns an overlap point $a$ that falls into $C$. The running time is dominated by the $O(|LL(\gamma_1)|+|LL(\gamma_2)|)$ invocations of~\Cref{lm:cross-point}, each of which takes $O(n^{1/3}\log^3 n)$ time.
\end{proof}

We are now ready to conclude the proof of~\Cref{lm:check-cells}.

\begin{proof}[Wrap-up of the proof of~\Cref{lm:check-cells}]
    We restrict $P$ to the cells $A$ and $B$, and restrict the set of disks that intersect the cells $A$ and $B$ in $O(n)$ time. Next, we compute the implicit representations of $F_p$ for each $p\in P\cap A$ as described above.
    
    Let $T$ be the minimum depth complete rooted binary tree with at least $|P\cap A|$ leaves. Assign the points of $P\cap A$ to distinct leaves, and for a node $v$ let $S_v\subset P\cap A$ denote the set of points assigned to the descendants of~$v$. We compute the curves $\gamma_{S_v}:=\bd\big( \bigcap_{p\in S_v} F_p \big)\cap C_B$ in a bottom-up order on $T$ using \Cref{lem:raysweep} each time. At the root we obtain an implicit representation of the curve $\gamma^*=\bd\big( \bigcap_{p\in P\cap A} F_p \big)\cap C_B$. Note that at each level of $T$ each flower $F_p$ contributes to exactly one set $S_v$, so the processing of each level takes at most $O(n\cdot n^{1/3}\log^3 n)$ time. Since $T$ has depth $O(\log n)$, the procedure takes $O(n^{4/3}\log^4 n)$ time.

    It remains to check if $P\cap B\subset \bigcap_{p\in P\cap A} F_p$. First, we copy the implicit representation of $\gamma^*$ into an array. Then, for each $q\in P\cap B$ observe that $q\in \bigcap_{p\in P\cap A} F_p$ if and only if the intersection of the ray $\overrightarrow{oq}$ with $\gamma^*$ occurs after $q$ along  $\overrightarrow{oq}$. To check this, we first use binary search on the implicit representation of $\gamma^*$ to find the triple $(p,a,b)$ such that $\overrightarrow{oq}$ is between $\overrightarrow{oa}$ and $\overrightarrow{ob}$ in the clockwise order. We use \Cref{lm:ray-shooting} to find the intersection point in $O(n^{1/3}\log^2 n)$ time. Thus, checking containment for each $q\in P\cap B$ takes $O(n^{4/3}\log^2 n)$ time. The total running time for the cell pair $A,B$ is therefore $O(n^{4/3}\log^4 n)$, as required.
\end{proof}

\bibliographystyle{alphaurl}
\bibliography{refs}

\newpage
\appendix
\section{Diameter Algorithm for Line Segments}
\label{ap:diameter-algorithms}
For fast computation of diameter in geometric graphs, Duraj, Lech, and Pot\c{e}pa~\cite{duraj2023better} first showed an algorithm for \Diameter-$\Delta$ in geometric intersection graphs that runs in $\OO(\Delta n^{2-1/d})$ time if the VC-dimension of $r$-neighborhood balls is bounded by $d$ and a certain geometric data structure exists. 
The data structure requirement was relaxed by \cite{ChanCGKLZ25} to be for \emph{rainbow colored intersection searching} (RIS), where every vertex in the geometric intersection graph has a color and the goal is to design a data structure to report whether the neighborhood of a vertex contains all the colors. (The number of colors can be arbitrary, in the extreme, there may be $\Omega(n)$ colors.) 
Their framework showed the following:

\begin{theorem}[Truly Subquadratic Diameter Framework~\cite{ChanCGKLZ25}]
\label{thm:diameter}
Let $G=(V,E)$ be a geometric intersection graph of a set of $n$ geometric objects. Suppose that these properties hold:
\begin{enumerate}
    \item[(a)] \ul{Decremental BFS Data Structure}: There exists a data structure with $\OO(n)$ preprocessing time that supports computing a neighborhood ball $N^{r}[v]$ in $\OO(|N^{r}[v]|)$ time and supports deletion of the vertices $N^{r}[v]$ from the graph in $\OO(|N^{r}[v]|)$ time.
    \item[(b)] \ul{Efficient RIS Data Structure}: Suppose every vertex of $V$ is associated with a color, there exists a data structure with $\OO(n)$ preprocessing time that can decide if a query vertex $v\in V$ has all the colors in its neighborhood $N[v]$ in $\OO(1)$ time.
    \item[(c)] \ul{Bounded distance VC-dimension}:
    The distance VC-dimension of $G$ is at most $d$.
\end{enumerate}
Then there is an algorithm for \Diameter-$\Delta$ in $\OO(\Delta\cdot n^{2-1/(2d)})$ time and an algorithm for computing the exact diameter in $\OO(n^{2-1/(4d)})$ time.
\end{theorem}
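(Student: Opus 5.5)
The plan is to follow the standard ``VC-dimension implies small symmetric differences'' paradigm of Duraj, Konieczny, and Pot\k{e}pa~\cite{duraj2023better}, with their geometric data-structure requirement replaced by (a) and (b). The algorithm decides \Diameter-$\Delta$; exact diameter then follows by searching over $\Delta$ with a re-balanced parameter, which is where the exponent worsens from $1/(2d)$ to $1/(4d)$.

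\textbf{Step 1: a short spanning path over the balls.} By (c) and Haussler's packing lemma, or equivalently the Chazelle--Welzl spanning-path theorem, for each fixed radius $r$ the family $\{N^r[v]\}_{v\in V}$ has a spanning path with crossing number $O(n^{1-1/d})$. Thus the vertices can be ordered $v_1,\dots,v_n$ so that $\sum_i |N^r[v_i]\,\triangle\,N^r[v_{i+1}]| = \OO(n^{2-1/d})$. Such an order does not have to be computed exactly: a random sample of balls, obtained with BFS through (a), gives an approximate packing. In the same spirit I would cut the order into $n/t$ blocks of length $t$.

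\textbf{Step 2: ball computations.} For each block I would compute the ball of the first vertex explicitly with the decremental BFS structure (a), at cost $\OO(n)$ per block. For the remaining vertices of the block, the idea is to derive their radius-$r$ balls from the representative by transporting the symmetric differences. The criterion is $\mathrm{ecc}(v)\le\Delta$ iff $N^{\Delta}[v]=V$. I would test it layer by layer: $N^{r+1}[v]=\bigcup_{u\in N^r[v]}N[u]$. Whether the current frontier of $v$ reaches everything is exactly a rainbow query. Color each not-yet-covered vertex by a distinct color, or color it by its block-relative status; then ask through (b) whether a neighborhood contains all colors. Each such query costs $\OO(1)$ and deletions are supported by (a).

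\textbf{Step 3: balancing.} The explicit BFS work is $\OO(n\cdot n/t)$ per radius. The incremental work is bounded by $t$ times the per-step symmetric difference, which after averaging over the spanning path is about $\OO(t\cdot n^{1-1/d})$ per vertex. Summing over the $\Delta$ radii and choosing $t\approx n^{1/(2d)}$ gives $\OO(\Delta n^{2-1/(2d)})$. The extra square root, compared with $n^{2-1/d}$, comes from needing the crossing bound to hold simultaneously for all radii up to $\Delta$ while paying for the block representatives.

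\textbf{Main obstacle.} I expect the hardest part to be Step 2: making the differences $N^r[v_i]\triangle N^r[v_{i+1}]$ propagate to radius $r+1$ without recomputing. A vertex entering or leaving the radius-$r$ ball changes the next layer through its entire neighborhood, which may have size $\Theta(n)$. My first attempt would be never to materialize $N^{r+1}$. Instead I would encode ``some vertex of $N^r[v]$ is adjacent to $w$'' as a rainbow intersection query in which the colors index the uncovered target vertices, so that each difference element costs $\OO(1)$ queries of type (b). I would then verify that the number of such queries is charged to the crossing number, rather than to neighborhood sizes.
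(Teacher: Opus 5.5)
\textbf{Step 2 does not work with the primitives you are given, and the gap is more than a missing verification.} You expand around the frontier, $N^{r+1}[v]=\bigcup_{u\in N^r[v]}N[u]$. That turns the test $N^{r+1}[v]=V$ into the question whether every uncovered $w$ has \emph{some} neighbor in the $v$-dependent set $N^r[v]$. Primitive (b) answers a differently quantified question: for a coloring fixed at preprocessing time, does the neighborhood of a \emph{single} query vertex contain every color? If you give the uncovered vertices distinct colors, an RIS query at $q$ asks whether $q$ alone is adjacent to all of them. That coloring also depends on $v$, so it would have to be rebuilt for each center. Separately, transporting $N^r[v_i]\triangle N^r[v_{i+1}]$ from a block representative needs insertions as well as deletions, and (a) is only decremental. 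You flag this step yourself as the main obstacle, and the claimed charging of queries to the crossing number is exactly what remains unproven.

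The missing idea is visible in how the paper reuses the framework of \cite{ChanCGKLZ25} in its appendix (the ball-growing equation, \Cref{lm:EIS} and \Cref{lm:DS1-to-DS0}). The low-crossing order $\lambda$ is taken on the \emph{ground set} and used purely as a compact representation. It is a single order for all radii, because the recurrence composes representations across radii. In this order all balls $N^r[u]$ together consist of $\OO(n^{2-1/d})$ intervals. One then grows around the \emph{center}: $N^{r+1}[v]=\bigcup_{u\in N[v]}N^{r}[u]$. Now the existential quantifier ranges over $N[v]$ for one query object $v$, which is exactly the shape of (b). Computing this union is an interval-searching query. Dyadic refinement turns it into $\OO(|\mathcal{I}_{in}|+|\mathcal{I}_{out}|)$ interval-cover queries $(v,I)$. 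Interval cover reduces to RIS with a trade-off parameter $b$ at total cost $\OO(bN+L/b)$, where $N=\OO(n^{2-1/d})$ is the number of intervals and $L\le n^2$ is their total length. Choosing $b=n^{1/(2d)}$ gives $\OO(n^{2-1/(2d)})$ per radius. The $1/(2d)$ comes from this trade-off, not from per-radius crossing bounds or block representatives.

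\textbf{The exact-diameter claim also has a gap.} Searching over $\Delta$ cannot remove the linear factor in $\OO(\Delta\, n^{2-1/(2d)})$, and the diameter can be $\Theta(n)$. Re-balancing yields $\OO(n^{2-1/(4d)})$ only when the diameter is at most about $n^{1/(4d)}$. Larger diameters need a separate argument that reduces to small radii. In \cite{ChanCGKLZ25} this is a low-diameter decomposition, which is presumably what the ball-deletion operation in (a) is for (ball carving). Nothing in your proposal plays that role.
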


For general string graphs, it is likely impossible to obtain the data structure guarantees of (a) and (b). Indeed, it is believed that even for line segments, (a) and (b) require $\Omega(n^{4/3})$ total time to answer $O(n)$ queries via reductions from Hopcroft's problem~\cite{Erickson1996}. On the other hand, for $h$-slope line segments where $h$ is a constant, it is possible to implement (a) using standard orthogonal range searching techniques, and (b) by constructing a RIS data structure with respect to exactly one slope at a time via orthogonal range searching with standard techniques for handling colors~\cite{GuptaSURVEY}. 

\subsection{Diameter Algorithm for Non-degenerate Axis-aligned Line Segments}
\label{ssec:diameter-NDAALS}

For completeness, we briefly describe the strategy for an efficient RIS data structure for non-degenerate horizontal 
line segments. We emphasize that this reduction to orthogonal range searching is standard~\cite{GuptaSURVEY}.
Consider an input set of $n$ horizontal line segments $H$, each with a color. 
For every color $c$, take the vertical decomposition of the color-$c$ line segments $H_c\subseteq H$, and then consider each of the $O(|H_c|)$ rectangular regions of the vertical decomposition that is bounded from the positive $y$ direction by $h_y$. Map each such region to the 3D box with this rectangle at its base and extending infinitely in the positive $z$ direction starting at $z=h_y$.
Now, given a query vertical line segment from $(q_x, q_y)$ to $(q_x, q_y')$ with $y\le y'$, map this line segment to the 3D point $q = (q_x, q_y, q_y')$.
If the line segment intersects a line segment of $H_c$, then $q$ is contained in exactly one of the boxes created earlier from the vertical decomposition of $H_c$.
It suffices to count the boxes enclosing the query point $q$, which can be done in $\OO(n)$ space and $\OO(1)$ query time via orthogonal range searching~\cite{AgarwalE99,CGAA}.

Combining this with orthogonal range searching for decremental BFS and the VC-dimension bound of \Cref{thm:string-bipartite}, we obtain the following result using \Cref{thm:diameter}.
\GenPosLineSeg*

\subsection{Diameter Algorithm for $h$-Slope Line Segments}
\label{ap:h-slope-line-seg}

Let $G=(V,E)$ be the intersection graph of $h$-slope line segments with diameter $\Delta=O(1)$ and $h=O(1)$. For any segment $v\in V$, let $\chi(v)$ denote the slope of $v$. We will also call $\chi(v)$ the color of $v$, and consider color sequences $S$ consisting of length at most $\Delta+1$ with the colors in $[h]$. We will use array notation $S[i]$ to denote the color at the $i$th index of $S$, and $S[a:b]$ to denote the subarray of $S$ from the $a$th position to the $b$th position, and for two sequences $S_1$ and $S_2$, we denote the concatenation of the sequences by $S_1\circ S_2$.
Recall that we use the following notation to define balls of $S$ centered at $v$.
\[
B_S(v) = \{u\in V: \text{ $\exists S'\sqsubseteq S$ and an $S'$-rainbow path $P(v\rightarrow u)$}\}
\]

\paragraph{Computing a good ordering.}
We show how to compute a compact representation of $B_S(v)$ for all sequences $S$ and all vertices $v$. 
To be precise, let $\lambda$ be an ordering of $V$, and the \EMPH{$\lambda$-interval representation $\Rep_\lambda(B_S(v))$} as a collection of maximal contiguous subsequences of $\lambda$ called \EMPH{intervals}, whose union is $B_S(v)$.
By \cite[Lemma 2.10]{ChanCGKLZ25}, there exists an ordering $\lambda$ (called a stabbing path) such that 
\[\sum_{v\in V}\sum_{S\in\cS}|\Rep_\lambda(B_S(v))| = \OO(|V||\cS|\cdot n^{1-1/d}) = O(n^{2-1/d}),\]
where $d$ is the dual VC-dimension of $(V, \{B_S(v)\}_{v\in V, S\in \cS})$. Note that $|\cS| = O(h^{\Delta+1}) = O(1)$. Furthermore, this can be computed in $\OO(n^{1+1/d})$ time. Note that by \Cref{lm:VCbound-type} and \Cref{thm:segments}, we have that $d \le 4|\cS| = O(h^{\Delta+1})$.

\paragraph{Ball growing for sequences.}
Fix a color sequence $S\subseteq\cS$ where $\cS$ consists of all sequences of $h$ colors with length at most $\Delta+1$.
Let $v$ be an arbitrary vertex. Without loss of generality, we may assume that $\chi(v) = S[1]$, or else we could delete the prefix of $S$ that does not contain $\chi(v)$.
The following identity follows from the definition of $B_S(v)$; we will call this the \EMPH{ball-growing equation}.
\begin{equation}
\label{eq:ball-growing}
\tag{ball-growing equation}
B_{S}(v) = B_{S[1]\circ S[3:s]}(v) \cup \bigcup_{\substack{w\in N[v]\\ \chi(w) = S[2]}} B_{S[2:s]}(w)
\end{equation}
The interpretation of this equation is that vertices reachable by $S$ either is via a rainbow path with a subsequence of $S[1] \circ S[3:s]$, or consists of $v$ followed by a neighbor $u\in N[v]$ with $\chi(u) = S[2]$ as the start of a rainbow path with a subsequence $S[2:s]$.

We remark that $B_{S[1]\circ S[3:s]}(v)$ and each individual $B_{S[2:s]}(u)$ have compact interval representations under the ordering $\lambda$ (as they are balls of the form $B_{S'}(u)$ for some sequence $S'$), but the (partial) union in the second term of the \ref{eq:ball-growing} does not.
Critically the union is only over a single color of the neighbors of $u$ that we consider.
For this reason, we have to implement the ball growing procedure by constructing a data structure to the following problem that is a generalization of \EMPH{Interval Searching} in \cite[Problem 2.11]{ChanCGKLZ25}. 
\begin{problem}[Extended Interval Searching]\label{prob:DS0} Let $\mathcal{O}_{IS}$ be a given set of objects, where each object $o\in \mathcal{O}_{IS}$ is associated with a set of intervals (of integer points) of $[1:n]$, denoted by~\EMPH{$\mathcal{I}_o$}.  
Design a data structure that answers the following query: 
\begin{itemize}
    \item \textsc{IntervalSearch}$(q)$:  Given an object $q$ (not necessarily in $\mathcal{O}_{IS}$) associated with a set of input intervals of $[1:n]$, denoted by \EMPH{$\mathcal{I}_{in}(q)$}, 
    return (the interval representation of) all the integer points in $[1:n]$ associated with objects in $\mathcal{O}_{IS}$ that intersect\footnote{Here we mean the objects intersect, not their associated intervals.} $q$, unioned with $\mathcal{I}_{in}(q)$.
\end{itemize}
\end{problem}
We defer the proof of the following lemma till \Cref{ss:IC-reductions}.
\begin{lemma}\label{lm:EIS}
For any parameter $b\ge 1$, and any fixed sequence $S$, there is a data structure for Extended Interval Searching for $h$-slope line segments where the input is (the $\lambda$-representation of) $\{B_S(v)\}_{v\in V: \chi(w)=S[1]}$ that runs in $\OO(b\cdot N_{ES} + L_{ES}/b)$ where $N_{ES}$ is the number of input and query intervals and $L_{ES}$ is the total length of the input and query intervals.

\end{lemma}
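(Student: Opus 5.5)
The plan is to follow the proof of the Interval Searching data structure in \cite[Problem 2.11]{ChanCGKLZ25}: a heavy/light split of intervals with threshold $b$, combined with an orthogonal range searching structure for $h$-slope segments. Fix $S$ and the color $c=S[1]$. The input objects are the segments $w$ with $\chi(w)=c$, all of one slope, and each carries the $\lambda$-interval representation of $B_S(w)$. A query segment $q$ has some slope $\chi(q)$, possibly equal to $c$ when segments overlap degenerately. After rotating the plane we may assume all input segments are horizontal. Deciding whether $q$ intersects a given horizontal segment is then a constant number of orthogonal range predicates on the endpoint coordinates, as in \Cref{ssec:diameter-NDAALS}. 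For a query segment of a fixed slope this is a box-query or dominance condition in $O(1)$ dimensions. When $\chi(q)=c$, the same-slope overlap reduces to interval overlap on a common line, handled by grouping the segments by supporting line. We may therefore assume a multilevel range tree $\mathcal{T}$ in which, for each query $q$, the input segments intersecting $q$ form a union of $\OO(1)$ canonical subsets.

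Next I split every input interval into \emph{short} pieces of length at most $b$ and into aligned \emph{blocks} of length $b$. More precisely, partition $[1:n]$ into $n/b$ blocks of size $b$. Each interval becomes at most two partial-block pieces plus a run of full blocks. A full run is charged to the interval's length, at most $L_{ES}/b$ blocks overall. The two partial pieces are charged to the interval count, at most $b$ points each, for $O(b)$ per interval. For each canonical node $\nu$ of $\mathcal{T}$, I store the union of the associated point sets as a marked bitmap over points and blocks. I do not store this explicitly: each point or block is pushed into the $\OO(1)$ canonical nodes containing its segment. Total preprocessing is then $\OO(b N_{ES}+L_{ES}/b)$.

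A query $q$ collects its $\OO(1)$ canonical nodes. The answer is the union of their stored marked sets together with $\mathcal{I}_{in}(q)$, converted back to interval representation. The hard part is \textbf{avoiding double counting}: the same block or point may appear in several canonical nodes and several queries, so output size could exceed the allowed bound. I plan to use the same device as \cite{ChanCGKLZ25}, namely a decremental, "mark once" processing. During ball growing, each query's result only needs to be output as an interval representation whose size is already budgeted by the stabbing-path bound. So I charge the query cost to the number of output intervals times $b$ plus the covered length divided by $b$: scan blocks, and expand only the boundary partial blocks at per-point resolution. A block that is fully covered is emitted as a single unit, and a partially covered one costs at most $b$, charged to the interval endpoint inside it. Each output interval has two endpoints, so this costs $O(b)$ per interval plus $O(\text{length}/b)$, which matches the claimed $\OO(b\cdot N_{ES}+L_{ES}/b)$ once query intervals are counted in $N_{ES},L_{ES}$.

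The step I expect to be the main obstacle is combining the canonical nodes' bitmaps at a query without paying $n/b$ per node. I would first try storing each node's marked blocks as a sorted list of maximal runs, and its partial points grouped by block. Merging $\OO(1)$ such lists then takes time linear in their run counts. Those counts are bounded by the number of intervals pushed into the node, so over all queries the merging cost is charged to $\OO(N_{ES})$ via the polylogarithmic multiplicity of range-tree storage.
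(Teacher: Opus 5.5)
Your geometric preprocessing is fine and matches what the paper needs. Once the query slope is fixed, a shear makes the input segments horizontal and the query vertical, so ``segments hitting $q$'' is an $O(1)$-dimensional orthogonal range. Same-slope overlaps reduce to 1D overlap on a common supporting line.

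The gap is the query-time combination step. Canonical decomposition only helps for decomposable queries, and ``union of the intervals of all objects hitting $q$'' is not decomposable in an output-sensitive way. Your final charging step says the merge cost is linear in the run counts, which are charged to $\OO(N_{ES})$ through the polylogarithmic storage multiplicity of the range tree. This confuses storage with reading: a canonical node is read in full by \emph{every} query whose decomposition uses it, so with $m$ queries the total merge cost can be $\Theta(mN)$.

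Concretely, let two canonical nodes store single-point intervals at the odd and the even positions of $[1:n]$, respectively, and let $n$ queries each decompose into exactly these two nodes. Each output is the single interval $[1:n]$, and no node contains a full block. Every block is covered only jointly by partial pieces from different nodes, so there is no output endpoint inside it to pay for its $b$ points. Your procedure therefore spends $\Theta(n)$ per query, i.e.\ $\Theta(n^2)$ in total. Here $N_{ES}=\Theta(n)$ and $L_{ES}=\Theta(n^2)$, so with $b=\sqrt n$ the claimed bound is $\OO(n^{3/2})$. The ``mark once'' device from decremental BFS does not rescue this, since each query needs its own union and nothing can be deleted globally.

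What is missing is a polylogarithmic-time oracle for the non-decomposable predicate ``do the objects hitting $q$ jointly cover this whole interval (or block)?''. The paper obtains it through a chain of reductions:
\begin{itemize}
\item Extended Interval Searching reduces to Interval Cover via \Cref{lm:DS1-to-DS0}. This is a dyadic recursion issuing $O((|\mathcal{I}_{in}(q)|+|\mathcal{I}_{out}(q)|)\log n)$ \textsc{Covers?} queries per query, and it is where output sensitivity comes from.
\item Interval Cover reduces to rainbow colored intersection searching by \cite[Lemma 2.14]{ChanCGKLZ25}. This is where the $b$ trade-off between point-level and block-level pieces lives.
\item RIS for same-slope segments is implemented by the vertical-decomposition trick. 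Each color's segments are vertically decomposed and every cell is lifted to a 3D box, so each color contributes at most one box containing the lifted query point. Counting these boxes by orthogonal range counting counts the distinct colors hit, turning ``are all colors present?'' into a decomposable count.
\end{itemize}
Without a colored-counting idea of this kind, your block scheme cannot reach $\OO(b\cdot N_{ES}+L_{ES}/b)$.
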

Using \Cref{lm:EIS} we can prove the following lemma.
\begin{lemma}\label{lm:ball_growing_segments}
There exists an algorithm for computing $\Rep_\lambda(B_S(v))$ for all $v\in V$ and all $S\in \cS$ that runs in time $\OO(n^{2-1/(2d)})$, where $\sum_{S\in S, v\in V}|\Rep_\lambda(B_S(v))| = \OO(n^{2-1/d})$.
\end{lemma}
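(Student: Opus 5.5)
We compute the representations $\Rep_\lambda(B_S(v))$ for all $S\in\cS$ and $v\in V$ by dynamic programming over sequences in order of increasing length $|S|$, using the \ref{eq:ball-growing}. For $|S|=1$, $B_S(v)=\{v\}$ if $\chi(v)=S[1]$ and is empty otherwise, so these are trivial. As in the text, we may assume $\chi(v)=S[1]$ by stripping the prefix of $S$ before the first occurrence of $\chi(v)$; that ball was already computed for a shorter sequence. Suppose all balls for sequences of length less than $s=|S|$ are known. The term $B_{S[1]\circ S[3:s]}(v)$ is then already available as a $\lambda$-interval representation. The term $\bigcup_{w\in N[v],\,\chi(w)=S[2]}B_{S[2:s]}(w)$ is exactly one \textsc{IntervalSearch} query in Problem~\ref{prob:DS0}. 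For this query, the object set $\mathcal{O}_{IS}$ is the segments $w$ of slope $S[2]$, each carrying $\mathcal{I}_w=\Rep_\lambda(B_{S[2:s]}(w))$. The query object is $v$, with $\mathcal{I}_{in}(v)=\Rep_\lambda(B_{S[1]\circ S[3:s]}(v))$. The answer is precisely $\Rep_\lambda(B_S(v))$, after merging adjacent intervals. Since all stored sets use the single fixed stabbing order $\lambda$, which comes from \cite[Lemma 2.10]{ChanCGKLZ25} and is computable in $\OO(n^{1+1/d})$ time, the outputs are directly in the required form. The same lemma gives the total size bound $\sum_{S,v}|\Rep_\lambda(B_S(v))|=\OO(n^{2-1/d})$, using $d=O(h^{\Delta+1})$ from \Cref{lm:VCbound-type} and \Cref{thm:segments}.

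For each of the $O(|\cS|)=O(1)$ sequences, we build one instance of \Cref{lm:EIS}, with input $\{B_{S[2:s]}(w)\}_{\chi(w)=S[2]}$, and issue all $n$ queries. The key parameters are bounded as follows. The number of intervals is $N_{ES}=\OO(n^{2-1/d})$, by the total-size bound applied to both input and query intervals. The total length of intervals is trivially $L_{ES}=O(n^2)$, since each interval lies in $[1:n]$ and there are $O(n)$ input/query sets per sequence, each of total length at most $n$. \Cref{lm:EIS} then gives time $\OO(bn^{2-1/d}+n^2/b)$ per sequence. Choosing $b=n^{1/(2d)}$ balances this to $\OO(n^{2-1/(2d)})$, and summing over the $O(1)$ sequences preserves the bound.

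The main subtlety is ensuring that the union in the \ref{eq:ball-growing} ranges over only one slope class of neighbors, so that the data structure of \Cref{lm:EIS} applies with a single slope. This holds by construction: the union in the equation involves only $w$ with $\chi(w)=S[2]$. The remaining step is to verify the \ref{eq:ball-growing} itself. Take a rainbow path from $v$ whose color sequence is a subsequence of $S$. If the path's second color is not matched to the index-2 position of $S$, the whole path lies in $B_{S[1]\circ S[3:s]}(v)$. Otherwise, its second vertex $w$ has color $S[2]$ and the rest of the path is $S[2:s]$-admissible from $w$. The reverse inclusion is immediate. I expect the real difficulty to lie inside \Cref{lm:EIS}, which is deferred. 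For this proof, it suffices to check the bookkeeping that $N_{ES}$ is controlled by the stabbing-order bound while only $L_{ES}$ is quadratic.
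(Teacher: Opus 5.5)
Your proposal is correct and follows essentially the same route as the paper. You run dynamic programming over sequences in order of increasing length via the ball-growing equation, build one Extended Interval Searching instance (\Cref{lm:EIS}) per sequence with inputs $\Rep_\lambda(B_{S[2:s]}(w))$ for $\chi(w)=S[2]$ and query intervals $\Rep_\lambda(B_{S[1]\circ S[3:s]}(v))$, and balance $b\cdot\OO(n^{2-1/d})$ against $n^2/b$ with $b=n^{1/(2d)}$. Your extra details (the base case, the check of the ball-growing equation, and the explicit bounds $N_{ES}=\OO(n^{2-1/d})$ and $L_{ES}=O(n^2)$) fill in what the paper's brief proof leaves implicit.
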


\begin{proof}
We consider sequences in $\cS$ in order of length.
For every sequence $S$, we will compute $B_S(v)$ for all vertices $v\in V$ where $\chi(v) = S[1]$ via dynamic programming (recall that this computation is redundant for all other $S$).

Fix a sequence $S$ of length $s$ that we wish to compute for. 
We will compute $B_{S}(v)$ for all $v$ by applying the ball-growing equation. 
Note that $S[1]\circ S[3:s]$ is a shorter subsequence so we have already computed an interval representation of $B_{S[1]\circ S[3:s]}(v)$. 
Furthermore, by applying \Cref{lm:EIS} we can build a data structure for extended interval searching for the set $\Set{\big. B_{S[2:s]}(w) : \text{for all $w\in V$ with $\chi(w) = S[2]$}}$, and $\cI_{in}(q) = \Rep(B_{S[1]\circ S[3:s]}(v))$ for all $v$. 
The total runtime is $\OO(b\cdot \sum_{S\in \cS}\sum_{v\in V}|\Rep_\lambda(B_S(v))|+n^2/b) = O(n^{2-1/(2d)})$, by setting $b = O(n^{1/(2d)})$.
\end{proof}

With these balls, it is straightforward to compute $N^{\Delta}[v]$ for all $v\in V$ by directly taking the union of $B_S(v)$ over all subsequences $S$ of length at most $\Delta+1$ in total time $\sum_{S\in S, v\in V}|\Rep_\lambda(B_S(v))| = \OO(n^{2-1/d})$, thereby proving \Cref{thm:h-slope-line-seg} with total running time $n^{2-1/O(h^{\Delta+1})}$ as $d= O(h^{\Delta+1})$.

\subsection{Interval Cover Reductions}
\label{ss:IC-reductions}

To prove \Cref{lm:EIS}, we follow the chain of reductions between data structure problems in \cite{ChanCGKLZ25}, beginning with the following problem.

\begin{problem}
[Interval Cover Problem {\cite[Problem 1.3]{ChanCGKLZ25}}] 
\label{def:DS-1} 
Given a set of $N$ objects $\mathcal{O}$ and each object $o\in \mathcal{O}$ is  associated with an interval $I_o\subseteq [1:n]$. Design a data structure to answer the following query:
\begin{itemize}
    \item \textsc{Covers?}$(q,I)$:  Given a query object $q$ and a query interval  $I\subseteq [1:n]$, decide whether the union of intervals associated with the objects intersecting $q$ in  $\mathcal{O}$ covers the whole $I$.  
\end{itemize}
\end{problem}

We give a modified version of \cite[Lemma 2.12]{ChanCGKLZ25}, with a modified proof.

\begin{lemma}
\label{lm:DS1-to-DS0}
If we can construct a data structure $\mathcal{D}_{IC}$ for solving Problem~\ref{def:DS-1} with preprocessing time $P(N)$ and query time $Q(N)$, then we can construct a data structure $\mathcal{D}_{IS}$  for solving Problem~\ref{prob:DS0} with
preprocessing time $\OO(P(\tilde{N}_{IS}))$ and query time  $\OO( (|\mathcal{I}_{in}(q)| + |\mathcal{I}_{out}(q)|) \cdot Q(\tilde{N}_{IS}))$  where $\EMPH{$\tilde{N}_{IS}$} \coloneqq \sum_{o\in \mathcal{O}_{IS}}|\mathcal{I}_o|$ is the total number of input intervals and $\mathcal{I}_{out}(q)$ is the set of output intervals from the extended interval search query of $q$ to $\mathcal{D}_{IS}$. 
\end{lemma}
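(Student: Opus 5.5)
The plan is to reduce Extended Interval Searching to Interval Cover by splitting each object's interval set into single intervals and then recovering the output intervals one at a time with a walk-and-binary-search procedure.

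\emph{Preprocessing.} For every object $o\in\mathcal{O}_{IS}$ and every interval $I\in\mathcal{I}_o$, create a copy of $o$ whose associated interval is $I$. This yields an instance $\mathcal{O}$ of Problem~\ref{def:DS-1} with $\tilde N_{IS}$ objects. Note that a query object $q$ intersects a copy exactly when it intersects the original object. We then build $\mathcal{D}_{IC}$ on $\mathcal{O}$ in $P(\tilde N_{IS})$ time.

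\emph{Query.} The desired output is
\[
U(q)=\bigcup_{o\cap q\neq\emptyset}\ \bigcup_{I\in\mathcal I_o} I\ \cup \bigcup_{J\in\mathcal I_{in}(q)} J ,
\]
which is exactly the union of the copy-intervals hit by $q$ together with $\mathcal I_{in}(q)$. Two primitives suffice.
\begin{enumerate}
\item Testing whether a whole interval $[x,y]$ lies in $U(q)$. For a sub-range of $[x,y]$ that avoids the (sorted) input intervals, this is a \textsc{Covers?}$(q,\cdot)$ query. To handle the input intervals, we split $[x,y]$ at the endpoints of $\mathcal I_{in}(q)$, discard the pieces already covered by input intervals, and issue one \textsc{Covers?} query per remaining gap.
\item Finding the maximal extension to the right. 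Starting from a position $x\in U(q)$, we locate the largest $y$ with $[x,y]\subseteq U(q)$ by exponential search followed by binary search on $y$, at $O(\log n)$ cost in \textsc{Covers?} queries.
\end{enumerate}

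\emph{Walk.} We sweep left to right. From the current position $x$, we first find the next point of $U(q)$: this is either the next left endpoint of an input interval, or the first point $z$ after $x$ with $[z,z]\subseteq U(q)$. Locating $z$ could be expensive, so instead we search for the smallest $y$ such that $[x,y]$ is \emph{not} entirely uncovered. That query needs a ``does the union intersect $[x,y]$'' primitive, which is a different question from cover.

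I expect this step to be the main obstacle, since Interval Cover only answers full-coverage questions. The first thing I would try is to complement. The uncovered parts of $[1:n]$ under the map $q\mapsto U(q)$ have the same interval structure as $U(q)$ itself, so the number of gaps is at most $|\mathcal I_{out}(q)|+1$. If we walk alternately over covered runs and uncovered runs, each phase ends at an endpoint of an output interval.

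\emph{Emptiness via a complement structure.} To test emptiness of $U(q)\cap[x,y]$ we also need to handle uncovered runs. For this I would build a second \textsc{Covers?}-style structure on single points: a query on a degenerate interval $[z,z]$ is a point-coverage test. Finding the first covered point after $x$ can then be done in one of two ways:
\begin{enumerate}
\item If \textsc{Covers?} admits a slight extension in the orthogonal-range setting used later in the paper (the paper's own structures do), we report the minimum left endpoint among intersecting copy-intervals with right endpoint at least $x$. Combined with a binary search, this locates the next covered point with $\OO(1)$ queries.
\item Otherwise, we charge binary-search steps to the boundaries of the gaps.
\end{enumerate}

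\emph{Accounting.} Each output interval and each input interval causes $O(\log n)$ queries, which gives $\OO((|\mathcal I_{in}(q)|+|\mathcal I_{out}(q)|)\,Q(\tilde N_{IS}))$ query time, as claimed. The verification that remains is that every phase of the walk terminates at an endpoint of an input or output interval, so that the number of phases is bounded by these counts.
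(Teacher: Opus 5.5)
\textbf{There is a genuine gap, exactly where you suspected.} To cross an uncovered stretch you must find the first point of $U(q)$ after $x$, i.e.\ decide whether $U(q)\cap[x,y]=\emptyset$. Neither of your two options does this under the lemma's hypothesis.

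Your first option assumes $\mathcal{D}_{IC}$ can also report a minimum left endpoint among intersecting copies. That is a stronger structure than Problem~\ref{def:DS-1} provides, and the lemma is a black-box reduction, so it cannot assume it.

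Your second option is not an algorithm. \textsc{Covers?}$(q,\cdot)$ only certifies full coverage, so it tells you nothing about where an uncovered stretch ends, and there is no monotone predicate to binary-search on. Concretely, suppose exactly one copy intersects $q$, its interval is $[z,z]$, and $\mathcal{I}_{in}(q)=\emptyset$. Then \textsc{Covers?}$(q,I)$ is true only for $I=[z,z]$. An adversary therefore forces $n-1$ such queries before $z$ is found, even though $|\mathcal{I}_{out}(q)|=1$. No charging argument can rescue a procedure that only asks \textsc{Covers?}$(q,\cdot)$ of the structure built on the copies. Point queries $[z,z]$ and the ``complement'' view do not help either: ``$[x,y]$ is entirely uncovered'' is an emptiness question, not a cover question.

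\textbf{The missing idea} is to obtain emptiness from \emph{additional} Interval Cover instances, which the hypothesis allows you to build. For $j=0,\dots,\lceil\log n\rceil$, build $\mathcal{D}_{IC}$ on the same copies with each interval $[l,r]$ replaced by $[\max(1,l-2^j+1),r]$. Since $x\in[l-2^j+1,r]$ iff $[l,r]\cap[x,x+2^j-1]\neq\emptyset$, the object part of $U(q)$ meets $[x,x+2^j-1]$ iff \textsc{Covers?}$(q,[x,x])$ holds in the $j$-th structure. This has three consequences:
\begin{itemize}
\item Preprocessing grows by only an $O(\log n)$ factor.
\item Emptiness of any range is decided with $O(\log n)$ queries: split the range into power-of-two blocks, query each block, and check $\mathcal{I}_{in}(q)$ by binary search.
\item Your walk then goes through, costing $O(\log^2 n)$ queries per output interval plus $O(\log n)$ per input interval. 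For the latter bound, stop each cover test at its first failing gap, so a test costs at most the number of input intervals in the current run plus two.
\end{itemize}

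\textbf{Comparison with the paper.} The paper instead runs a top-down dyadic recursion. It stops on intervals that are fully covered (checked by \textsc{Covers?} on the gaps left by $\mathcal{I}_{in}(q)$) or completely avoided. It charges queries to dyadic intervals whose parent partially overlaps an output interval. Its ``completely avoided'' test is the same emptiness primitive, and the paper does not derive it from \textsc{Covers?} explicitly. So the construction above is needed on either route.
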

\begin{proof}
We take an instance of Problem~\ref{prob:DS0}. For each object $s\in \mathcal{O}_{IS}$ we duplicate it to $k$ copies if $s$ is associated with $k$ intervals. Each copy is now associated with a single interval of $\mathcal{I}_o$. This creates a total of $\tilde{N}_{IS}=\sum_{o\in \mathcal{O}_{IS}}|\mathcal{I}_o|$ objects.
Now we build a data structure $\mathcal{D}_{IC}$ to solve Problem~\ref{def:DS-1} on this set of duplicated objects with preprocessing time $\OO(P(\tilde{N}_{IS}))$. 
For each query \textsc{IntervalSearch}$(q)$, we recursively issue queries to $\mathcal{D}_{IC}$. Specifically, we start with $I \gets [1 \,..\, n]$. 
If every subinterval of $I\setminus \bigcup_{J\in\mathcal{I}_{in}(q)}J$ is completely covered by the union of the intervals associated with objects in $\mathcal{O}_{IS}$ that intersect $q$ (which is checked by a query to $\mathcal{D}_{IC}$ with $q$ and each such subinterval), we output $I$ and we are done. Otherwise, if $I$ is completely avoided, we output $\varnothing$ and we are also done. 
For the remaoning case, we will recurse:  We divide $I$ into two intervals of equal length, $I_1$ and $I_2$, and issue queries \textsc{Covers?}$(q, I_1)$ and \textsc{Covers?}$(q, I_2)$ with $\mathcal{D}_{IC}$. 
In the end, we will output the union of all the intervals that are fully covered by the intervals associated with objects in $\mathcal{O}_{IS}$ that intersect $q$.

The running time for a query $q$ is dependent on the total number of queries issued to $\mathcal{D}_{IC}$ recursively. Notice that all query intervals are dyadic intervals.
In addition, recursion stops when an interval $I$ is completely covered by the union of intervals $\mathcal{S}(q)$ or completely avoided. Thus only the dyadic intervals whose parent partially overlaps with a query output interval will ever trigger a query. The total number of such intervals is in the order of $O(|\mathcal{I}_{out}(q)|\cdot \log n)$. 
The number of subintervals that we query can be charged to endpoints in $\mathcal{I}_{in}(q)$ per level, and so is $O((|\mathcal{I}_{in}(q)|+|\mathcal{I}_{out}(q)|)\cdot \log n)$ in total.
Recall that each query to $\mathcal{D}_{IC}$ takes time  $Q(\tilde{N}_{IS})$. Summing up everything, we have the claim in the Lemma. 
\end{proof}

Now we can prove \Cref{lm:EIS}.

\begin{proof}[Proof of \Cref{lm:EIS}]
By \cite[Lemma 2.14]{ChanCGKLZ25},
to solve Problem~\ref{prob:DS0} in the time desired for \Cref{lm:DS1-to-DS0},
it suffices to show we can construct an RIS data structure for objects corresponding to (possibly multiple copies of) the vertices $\{w\in V: \chi(w)=S[1]\}$ 
with construction time near-linear in the number of objects and $\OO(1)$ query time. 
We remark that all these vertices correspond to line segments with the same slope $S[1]$.

If we fix the slope of the query line segment, an efficient RIS data structure for these line segments with the same slope can be constructed analogous to the non-degenerate orthogonal line segment RIS data structure described in \Cref{sec:segs} (if the input and query intervals have the same slope, the problem instead reduce to a 1D colored range counting problem whose solution is standard, see \cite{GuptaSURVEY}). 
This data structure can be constructed in near-linear time in the number of segments and has $\OO(1)$ query time as desired.
\end{proof}

\end{document}